\def\centerarc[#1](#2)(#3:#4:#5)
\newtheorem{observation}{Observation}
\newtheorem{fact}{Fact}
\newcommand{\NP}{\mathsf{NP}}
\def\NAE3SAT{{\sc nae 3-sat}}
\def\PNAE3SAT{{\sc positive nae 3-sat}}
\def\1IN3SAT{{\sc positive 1-in-3 3-sat}}
\def\P1IN3SAT{{\sc positive planar 1-in-3 3-sat}}
\newcommand{\VC}{\textsc{vertex cover}}
\newcommand{\CVD}{\textsc{cluster-vd}}
\newcommand{\clawVD}{\textsc{claw-vd}}
\newcommand{\dclawVD}[1]{\textsc{{$#1$}-claw-vd}}
\newcommand{\dHVC}[1]{\textsc{{$#1$}-hvc}}
\newcommand{\minVC}{\textsc{min vertex cover}}
\newcommand{\minCVD}{\textsc{min cluster-vd}}
\newcommand{\mindclawVD}[1]{\textsc{min {$#1$}-claw-vd}}
\newcommand{\mindHVC}[1]{\textsc{min {$#1$}-hvc}}
\begin{document}

\title{
On the $d$-Claw Vertex Deletion Problem\thanks{A preliminary version~\cite{HsiehLP21} of this article has appeared in Proceedings of the 27th International Computing and Combinatorics Conference (COCOON 2021); 
the second author discovered and corrected the incorrect proof of Theorem~3 in~\cite{HsiehLP21}.} 
}

\author{Sun-Yuan Hsieh\inst{1} \and Hoang-Oanh Le\inst{2} \and Van Bang Le\inst{3} \and Sheng-Lung Peng\inst{4}\thanks{Corresponding author.}
}
\institute{
Department of Computer Science and Information Engineering, National Cheng Kung University, Tainan 70101, Taiwan\\
\email{hsiehsy@mail.ncku.edu.tw}
\and
Berlin, Germany\\
\email{LeHoangOanh@web.de}
\and
Institut f\"{u}r Informatik, Universit\"{a}t Rostock, Rostock, Germany\\
\email{van-bang.le@uni-rostock.de}
\and
Department of Creative Technologies and Product Design,
National Taipei University of Business, Taoyuan 32462, Taiwan\\
\email{slpeng@ntub.edu.tw}
}

\maketitle

\thispagestyle{plain} 

\setcounter{footnote}{0} 


\begin{abstract}
Let \emph{$d$-claw} (or \emph{$d$-star}) stand for $K_{1,d}$, the complete bipartite graph with~$1$ and $d\ge 1$~vertices on each part. The $d$-claw vertex deletion problem, \dclawVD{d}, asks for a given graph $G$ and an integer $k$ if one can delete at most~$k$ vertices from $G$ such that the resulting graph has no $d$-claw as an induced subgraph. Thus, \dclawVD{1} and \dclawVD{2} are just the famous \VC\ problem and the \textsc{cluster vertex deletion} problem, respectively.

In this paper, we strengthen a hardness result in [M.\!~Yannakakis, Node-Deletion Problems on Bipartite Graphs, SIAM J. Comput. (1981)], by showing that \textsc{cluster vertex deletion} remains $\NP$-complete when restricted to bipartite graphs of maximum degree~3. 
Moreover, for every $d\ge 3$, we show that \dclawVD{d} is $\NP$-complete even when restricted to bipartite graphs of maximum degree~$d$. These hardness results are optimal with respect to degree constraint.
By extending the hardness result in [F.\!~Bonomo-Braberman \emph{et al.}, Linear-Time Algorithms for Eliminating Claws in Graphs, COCOON 2020], we show that, for every $d\ge 3$, \dclawVD{d} is $\NP$-complete even when restricted to split graphs without $(d+1)$-claws, and split graphs of diameter~2.
On the positive side, we prove that \dclawVD{d} is polynomially solvable on what we call $d$-block graphs, a class properly contains all block graphs. This result extends the polynomial-time algorithm in [Y.\!~Cao \emph{et al.}, Vertex deletion problems on chordal graphs, Theor. Comput. Sci. (2018)] for \dclawVD{2} on block graphs to \dclawVD{d} for all $d\ge 2$ and improves the polynomial-time algorithm proposed by F.~Bonomo-Brabeman \emph{et al.} for (unweighted) \dclawVD{3} on block graphs to $3$-block graphs.
\end{abstract}

 \textbf{keyword:}\\
Vertex Cover, Cluster Vertex Deletion, Claw Vertex Deletion, Graph Algorithm, NP-complete Problem

\section{Introduction}
Graph modification problems are a very extensively studied topic in graph algorithm. One important class of graph modification problems is as follows.
Let $H$ be a fixed graph.
The \textsc{$H$ vertex deletion} (\textsc{$H$-vd} for short) problem takes as input a graph $G$ and an integer~$k$. The question is whether it is possible to delete a vertex set $S$ of at most $k$ vertices from $G$ such that the resulting graph is \emph{$H$-free}, \emph{i.e.}, $G-S$ contains no induced subgraphs isomorphic to $H$. The optimization version asks for such a vertex set~$S$ of minimum size, and is denoted by \textsc{min $H$ vertex deletion} (\textsc{min $H$-vd} for short).

The case that $H$ is a $2$-vertex path, \emph{i.e.}, an edge, is the famous \VC\ problem, one of the basic $\NP$-complete problems.
The case that $H$ is a $3$-vertex path is well known under the name \textsc{cluster vertex deletion} (\CVD\ for short).
Very recently, the COCOON 2020 paper~\cite{Bonomo-Braberman20} addresses the case that $H$ is the \emph{claw} $K_{1,3}$, the complete bipartite graph with~$1$ and~$3$ vertices in each part, thus the \clawVD\ problem.

For any integer $d>0$, let \emph{$d$-claw} (or \emph{$d$-star}) stand for $K_{1,d}$, the complete bipartite graph with~$1$ and~$d$ vertices on each part. In this paper, we go on with the \clawVD\ problem by considering the \dclawVD{d} problem for any given integer~$d>0$:

\medskip\noindent
\fbox{
\begin{minipage}{.96\textwidth}
\dclawVD{d}\\[.7ex]
\begin{tabular}{l l}
{\em Instance:\/}& A graph $G=(V,E)$ and an integer $k<|V|$.\\
{\em Question:\/}& Is there a subset $S\subset V$ of size at most $k$ such that $G-S$ is $d$-claw free\,?\\
\end{tabular}
\end{minipage}
}

\medskip\noindent
Thus, \dclawVD{1} and \dclawVD{2} are just the well-known $\NP$-complete problems \VC\ and \CVD, respectively, and
\dclawVD{3} is the \clawVD\ problem addressed in the recent paper~\cite{Bonomo-Braberman20} mentioned above.

While \dclawVD{1} is polynomially solvable when restricted to perfect graphs (including chordal and bipartite graphs)~\cite{GLS1988},
\dclawVD{d} is $\NP$-complete for any $d\ge 2$ even when restricted to bipartite graphs~\cite{Yannakakis81a}. 
When restricted to chordal graphs, it is shown in~\cite{Bonomo-Braberman20} that \dclawVD{3} remains $\NP$-complete even on split graphs. The computational complexity of \dclawVD{2} on chordal graphs is still unknown~\cite{CaoKOY17,0001KOY18}.
Both \dclawVD{2} and \dclawVD{3} can be solved in polynomial time on block graphs~\cite{Bonomo-Braberman20,0001KOY18}, a proper subclass of chordal graphs containing all trees.

It is well known that the classical $\NP$-complete problem \VC\ remains hard when restricted to planar graphs of maximum degree~$3$ and arbitrary large girth. 
It is also known that, assuming ETH (Exponential Time Hypothesis), \VC\ admits no subexponential-time algorithm in the vertex number~\cite{LokshtanovMS11}
and, while \minVC\ can be approximated within factor~$2$ by a simple \lq textbook\rq\ greedy algorithm, no polynomial-time approximation with a factor better than~2 exists assuming UGC (Unique Games Conjecture)~\cite{KhotR08}. 
Very recently, it is shown in~\cite{AprileDFH22} that \minCVD\ can be approximated within factor~$2$, and this is optimal assuming UGC.

As for \minVC, \mindclawVD{d} can be approximated within a factor $d+1$ but there is no polynomial-time approximation scheme~\cite{LundY93}.
From the results in~\cite{KumarMDS14} it is known that, for any $d\ge 2$, \mindclawVD{d} admits a $d$-approximation algorithm on bipartite graphs.
This result was improved later by a result in~\cite{GuruswamiL17}, where the related problem \textsc{$d$-claw-transversal} was considered. Given a graph $G$, this problem asks to find a smallest vertex set $S\subseteq V(G)$ such that $G-S$ does not contain a $d$-claw as a (not necessarily induced) subgraph. In~\cite{GuruswamiL17}, it was shown that, in contrast to our \mindclawVD{d} problem, \textsc{$d$-claw-transversal} can be approximated within a factor of $O(\log (d+1))$. Since \dclawVD{d} and \textsc{$d$-claw-transversal} coincide when restricted to bipartite graphs, \dclawVD{d} admits an $O(\log(d+1))$-approximation on bipartite graphs.
The \textsc{2-claw-transversal} is also known as \textsc{$P_3$ vertex cover} (see, \emph{e.g.},~\cite{ChangCHRS16,Tsur19a}).

By a standard bounded search tree technique, \dclawVD{d} admits a parameterized algorithm running in $O^*((d+1)^k)$ time\footnote{The $O^*$ notation hides polynomial factors.}.  The current fastest parameterized algorithm for \VC\ and \CVD\ has runtime $O^*(1.2738^k)$~\cite{ChenKX10} and $O^*(1.811^k)$~\cite{Tsur21}, respectively.

For the edge modification versions, there is a comprehensive survey~\cite{abs-2001-06867}.

In this paper, we first derive some hardness results by a simple reduction from \VC\ to \dclawVD{d}, stating that \dclawVD{d} does not admit a subexponential-time algorithm in the vertex number unless the ETH fails, and that \dclawVD{d} remains $\NP$-complete when restricted to planar graphs of maximum degree~$d+1$ and arbitrary large girth.

We then revisit the case of bipartite input graphs by showing that \CVD\ remains $\NP$-complete on bipartite graphs of maximum degree~$3$, and for $d\ge 3$, \dclawVD{d} remains $\NP$-complete on bipartite graphs of maximum degree~$d$ and on bipartite graphs of diameter~$3$.
These hardness results for \dclawVD{d} are optimal with respect to degree and diameter constraints, and improve the corresponding hardness results for \dclawVD{d}, $d\ge 2$, on bipartite graphs in~\cite{Yannakakis81a}.

Further, we extend the hardness results in~\cite{Bonomo-Braberman20} for \clawVD\ to \dclawVD{d} for every $d\ge 3$.
We show that \dclawVD{d} is $\NP$-complete even when restricted to split graphs without $(d+1)$-claws and, assuming the UGC, it is hard to approximate \mindclawVD{d} to a factor better than $d-1$.
We obtain these hardness results by modifying the reduction from \VC\ to \clawVD\ given in~\cite{Bonomo-Braberman20} to a reduction from \textsc{hypergraph vertex cover} on to \dclawVD{d}.

We complement the negative results by showing that \dclawVD{d} is polynomial-time solvable on what we call $d$-block graphs, a class that contains all block graphs. As block graphs are $2$-block graphs, and $d$-block graphs are ($d+1$)-block graphs but not the converse, our positive result extends the polynomial-time algorithm for \dclawVD{2} on block graphs in~\cite{0001KOY18} to \dclawVD{d} for all $d\ge 2$, and for \dclawVD{3} on block graphs in~\cite{Bonomo-Braberman20} to $3$-block graphs.

The paper is organized as follows. In Section~\ref{sec:pre}, we give some notation and terminologies used in this paper. Section~\ref{sec:hard} presents hardness results mentioned above. A polynomial result on $d$-block graphs is shown in Section~\ref{sec:d-block}. 
Section~\ref{sec:conclusion} concludes the paper with some remarks. 

\section{Preliminaries}\label{sec:pre}
Let $G=(V,E)$ be a graph with vertex set $V$ and edge set $E$.
The neighborhood of a vertex $v$ in $G$, denoted by $N_G(v)$, is the set of all vertices in $G$ adjacent to $v$; if the context is clear, we simply write $N(v)$.
Let $\deg(v) = |N(v)|$ to denote the degree of the vertex $v$. The close neighborhood of $v$ is denoted by $N[v]$ that is $N(v)\cup \{v\}$. 

We call a vertex \emph{universal} if it is adjacent to all other vertices. Vertices of degree~$1$ are called \emph{leaves}.
The \emph{distance} between two vertices in $G$ is the length of a shortest path connecting the two vertices, the \emph{diameter} is the maximal distance between any two vertices, and the \emph{girth} is the length of a shortest cycle in $G$ (if exists).

A \emph{center vertex} of a $d$-claw $H$ is a universal vertex of $H$; if $d\ge 2$, the center of a $d$-claw is unique.
We say that a $d$-claw is \emph{centered at vertex $v$} if $v$ is the center vertex of that $d$-claw. 
Note that the $2$-claw $K_{1,2}$ and the $3$-vertex path $P_3$ coincide, and that the $3$-claw $K_{1,3}$ is also called \emph{claw}. 

An \emph{independent set} (respectively, a \emph{clique}) in a graph $G=(V,E)$ is a vertex subset of pairwise non-adjacent (respectively, adjacent) vertices in $G$. Graph $G$ is a \emph{split graph} if its vertex set $V$ can be partitioned into an independent set and a clique.
A~graph is a \emph{cluster graph} if it is a vertex disjoint union of cliques. Equivalently, cluster graphs are exactly those without induced $3$-vertex path $P_3$. 

For a subset~$S\subseteq V$, $G[S]$ is the subgraph of $G$ induced by $S$, and $G-S$ stands for $G[V\setminus S]$. For simplicity, for a set $S$ and an element $v$, we use $S+v$ (respectively, $S-v$) to denote $S\cup \{v\}$ (respectively, $S\setminus \{v\}$). 

Let $H$ be a fixed graph. An \emph{$H$-deletion set} is a vertex set $S\subseteq V(G)$ such that $G-S$ is $H$-free.  A $K_{1,1}$-deletion set and a $K_{1,2}$-deletion set are known as \emph{vertex cover} and \emph{cluster deletion set}, respectively. In other words, the cluster vertex deletion problem is the problem of finding a minimum cluster deletion set $S$ on $G$ such that the resulting graph $G-S$ is a cluster graph.

A \emph{hypergraph} $G=(V,E)$ consists of a vertex set $V$ and an edge set $E$ where each edge $e\in E$ is a subset of $V$. Let $r\ge 2$ be an integer. A hypergraph is \emph{$r$-uniform} if each of its edges is an $r$-element set. Thus, a $2$-uniform hypergraph is a graph in usual sense.
A \emph{vertex cover} in a hypergraph $G=(V,E)$ is a vertex set $S\subseteq V$ such that $S\cap e\not=\emptyset$ for any edge $e\in E$.
The \textsc{$r$-hypergraph vertex cover} (\dHVC{r} for short) problem asks, for a given $r$-uniform hypergraph $G=(V,E)$ and an integer $k<|V|$, whether $G$ has a vertex cover $S$ of size at most $k$.
The optimization version asks for such a vertex set~$S$ of minimum size and is denoted by \mindHVC{r}. Note that \dHVC{2} and \mindHVC{2} are the famous \VC\ problem and \minVC\ problem, respectively. It is known that \dHVC{r} is $\NP$-complete and \mindHVC{r} is UGC-hard to approximate within a factor better than $r$~\cite{KhotR08}.

\section{Hardness results}\label{sec:hard}
Recall that \dclawVD{d} is $\NP$-complete even on bipartite graphs~\cite{Yannakakis81a}. 
We begin with two simple observations which lead to further hardness results on other restricted graph classes.

\begin{observation}\label{obs:diam2}
\dclawVD{d} remains $\NP$-complete on graphs of diameter~$2$.
\end{observation}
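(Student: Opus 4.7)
The plan is a polynomial-time reduction from the unrestricted \dclawVD{d} problem, which is NP-complete for every $d\ge 1$ (the case $d\ge 2$ follows already from~\cite{Yannakakis81a}). I focus on $d\ge 2$; the case $d=1$ (\VC) is handled by the well-known trick of adjoining a single universal vertex. Given $(G,k)$, construct $G'$ on the vertex set $V(G)\cup K\cup L$, where $K$ is a clique of $d-1$ fresh hub vertices each made adjacent to every vertex of $V(G)$, and $L$ is an independent set of $k+2d$ fresh filler vertices each made adjacent to every vertex of $K$ but to no vertex of $V(G)$. Set $k':=k+d-1$. Since every two vertices of $G'$ have a common neighbour in $K$ (or are themselves adjacent), $G'$ has diameter at most~$2$.

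The analysis of induced $d$-claws in $G'$ is driven by the fact that every $k_i\in K$ is universal on $V(G)\cup L$ while $K$ is itself a clique. Therefore any $d$-claw centered at some $v\in V(G)$ has all its leaves in $N_G(v)$ (no $k_i$ could be a leaf, since it would be adjacent to every potential co-leaf, and no $L$-vertex is adjacent to $v$), so such a claw is already a $d$-claw of~$G$; no vertex of $L$ can center a $d$-claw because its entire neighbourhood is the clique $K$; and a $d$-claw centered at some $k_i\in K$ has its $d$ leaves forming an independent set in $V(G)\cup L$, the crucial point being that any $d$-subset of the large independent set $L$ already realises such a claw.

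The key step, and the only delicate part of the argument, is to show that any $d$-claw deletion set $S'\subseteq V(G')$ with $|S'|\le k'$ must contain all of $K$. Indeed, if some $k_i\notin S'$, then $|L\setminus S'|\ge |L|-|S'|\ge (k+2d)-(k+d-1)=d+1$, and any $d$ vertices of $L\setminus S'$ together with $k_i$ form a surviving $d$-claw, contradicting the choice of $S'$. Hence $K\subseteq S'$ and $S:=S'\cap V(G)$ has size at most $k'-(d-1)=k$; moreover $G'-S'$ decomposes as the disjoint union of $G-S$ with the independent set $L\setminus S'$, so it is $d$-claw-free precisely when $G-S$ is. The converse direction turns any solution $S$ of $(G,k)$ into $S':=S\cup K$ for $(G',k')$. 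The only quantitative calibration is making $|L|$ large enough to force $K\subseteq S'$, which the bound $|L|\ge k+2d$ handles cleanly.
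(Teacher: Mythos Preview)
Your reduction is correct, but it is considerably heavier than the paper's. The paper simply adjoins a \emph{single} universal vertex $v$ together with $d$ fresh pendant leaves attached to $v$; this already gives diameter~$2$, and the pendant $d$-claw forces (after the usual replacement argument) $v$ into any deletion set, so $(G,k)\mapsto(G',k+1)$. Your construction achieves the same effect with a $(d-1)$-clique $K$ of hubs plus a large filler set $L$ of size $k+2d$, shifting the budget by $d-1$ instead of~$1$. Both ideas are the same in spirit---add universal ``hub'' vertices to collapse the diameter, then add enough private neighbours to force the hubs into every small solution---but the paper gets away with one hub and $d$ pendants, whereas you spend $d-1$ hubs and $k+2d$ fillers.

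What your version buys is that the forcing argument is entirely counting-based ($|L\setminus S'|\ge d$ immediately yields a surviving claw at any undeleted hub), so you never need the vertex-replacement step. What the paper's version buys is economy: only $d+1$ new vertices, budget increase~$1$, and the argument is two lines. Either way the observation follows.
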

\begin{proof}
Given an instance $(G,k)$ for \dclawVD{d}, let $G'$ be obtained from $G$ by adding a $d$-claw with center vertex~$v$ and joining $v$ to all vertices in $G$. Then $v$ is a universal vertex in $G'$ and hence $G'$ has diameter~$2$. Moreover, $(G,k)\in\text{\dclawVD{d}}$ if and only if $(G',k+1)\in\text{\dclawVD{d}}$.\qed
\end{proof}

We remark that the graph $G'$ in the proof above is a split graph whenever $G$ is a split graph, and $G'$ has only one vertex of unbounded degree whenever $G$ has bounded maximum degree. The bipartite version of Observation~\ref{obs:diam2} is:
\begin{observation}\label{obs:diam3}
For any $d\ge 2$, \dclawVD{d} remains $\NP$-complete on bipartite graphs of diameter~$3$.
\end{observation}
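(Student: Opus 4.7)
The plan is to reduce from \dclawVD{d} on bipartite graphs, which is $\NP$-complete for every $d\ge 2$ by~\cite{Yannakakis81a}, adapting the universal-vertex construction of Observation~\ref{obs:diam2} so as to preserve bipartiteness. Since a bipartite graph cannot contain a universal vertex, the idea is to place two \emph{half-dominating} vertices on opposite sides of the bipartition, joined by a single edge, which together simulate universality and cap the diameter at~$3$.

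Concretely, given a bipartite instance $(G,k)$ with bipartition $(A,B)$, I would build $G'$ from $G$ by adding two new vertices $u$ and $w$: $u$ adjacent to every vertex of $A$ (hence placed on the $B$-side), $w$ adjacent to every vertex of $B$ (hence on the $A$-side), together with the edge $uw$. To force $u$ and $w$ into any $d$-claw deletion set, I would also attach $d$ fresh leaves to each of $u$ and $w$, on the appropriate sides. A short case analysis on shortest paths (any two vertices in $A$ meet at $u$; any two vertices in $B$ meet at $w$; an $A$-vertex reaches a $B$-vertex via $a$-$u$-$w$-$b$; and the new leaves are within distance $3$ of everything via $u$, $w$, or the edge $uw$) then shows that $G'$ is bipartite and has diameter at most~$3$.

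The correctness claim is $(G,k)\in\text{\dclawVD{d}}$ iff $(G',k+2)\in\text{\dclawVD{d}}$. The forward direction is immediate: if $S$ is a $d$-claw deletion set of $G$ with $|S|\le k$, then $S\cup\{u,w\}$ works for $G'$, since deleting $u$ and $w$ isolates all $2d$ new leaves while leaving the $d$-claw-free graph $G-S$ intact. For the converse, given $S'\subseteq V(G')$ with $|S'|\le k+2$ and $G'-S'$ $d$-claw-free, I would argue by a standard swap that one may assume $u,w\in S'$: the $d$ leaves attached to $u$ together with $u$ form a $d$-claw, so if $u\notin S'$ then some leaf $\ell$ of $u$ lies in $S'$; replacing $\ell$ by $u$ preserves the size and $d$-claw-freeness, because removing $u$ only destroys $d$-claws while the reinserted $\ell$ becomes isolated (its unique neighbor in $G'$ was $u$). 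The symmetric exchange handles $w$. After the swap, $S:=S'\cap V(G)$ has size at most $k$, and $G-S$, being an induced subgraph of the $d$-claw-free graph $G'-S'$, is itself $d$-claw-free. The main delicate point is precisely the swap, which goes through cleanly only because each displaced leaf has a unique neighbor and therefore sits isolated in the new residual graph; without this observation the many new $d$-claws appearing around $u$ and $w$ (mixing old $A$-vertices with the new leaves, for instance) could easily derail the argument.
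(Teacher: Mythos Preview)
Your proposal is correct and is essentially the same construction as the paper's: two new ``half-universal'' vertices on opposite sides of the bipartition, joined by an edge, each equipped with $d$ pendant leaves to force them into any $d$-claw deletion set. The paper states the equivalence $(G,k)\leftrightarrow(G',k+2)$ without elaboration, whereas you spell out the swap argument for the converse direction in more detail; otherwise the approaches coincide.
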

\begin{proof}
Let $(G,k)$ be an instance for \dclawVD{d}, where $G=(X,Y,E)$ is a bipartite graph.
Let $G'$ be the bipartite graph obtained from $G$ by adding two $d$-claws with center vertices~$x$ and~$y$, respectively, and joining $x$ to all vertices in $Y\cup\{y\}$ and $y$ to all vertices in $X\cup\{x\}$. Then $G'$ has diameter~$3$.
Moreover, $(G,k)\in\text{\dclawVD{d}}$ if and only if $(G',k+2)\in\text{\dclawVD{d}}$.\qed
\end{proof}
We remark that the bipartite graph $G'$ in the proof above has only two vertices of unbounded degree whenever $G$ has bounded maximum degree.

We now describe a simple reduction from \VC\ to \dclawVD{d} and some implications for the hardness of \dclawVD{d}.
Let $d\ge 2$. Given a graph $G=(V,E)$, construct a graph $G'=(V',E')$ as follows.
\begin{itemize}
\item for each $v\in V$ let $I(v)$ be an independent set of $d-1$ new vertices;
\item $V'=V\cup\,\bigcup_{v\in V} I(v)$;
\item $E'=E\cup\,\bigcup_{v\in V} \{vx \mid x\in I(v)\}$.
\end{itemize}
Thus, $G'$ is obtained from $G$ by attaching to each vertex $v$ a set $I(v)$ of $d-1$ leaves.

\smallskip
\begin{fact}\label{Claim4}
If $S$ is a vertex cover in $G$, then $S$ is a $d$-claw deletion set in $G'$.
\end{fact}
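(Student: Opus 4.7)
The plan is to argue by contradiction: assume $S$ is a vertex cover of $G$ but $G'-S$ contains an induced $d$-claw $H$, and derive an impossibility from the structure of $G'$.

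First I would pin down the center $c$ of $H$. Since every leaf in $\bigcup_{v\in V} I(v)$ has degree~$1$ in $G'$, and a $d$-claw for $d\ge 2$ requires its center to have at least~$d\ge 2$ neighbors, $c$ must be an original vertex of $V$, and of course $c\notin S$. Next I would enumerate the neighbors of $c$ in $G'-S$: they consist of the $d-1$ pendants in $I(c)$ (none of which lie in $S$, since $S\subseteq V$) together with $N_G(c)\setminus S$.

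The key step is to invoke the hypothesis that $S$ is a vertex cover of $G$. Because $c\notin S$, every edge of $G$ incident to $c$ must be covered on the other endpoint, so $N_G(c)\subseteq S$ and hence $N_G(c)\setminus S=\emptyset$. Therefore $c$ has exactly $d-1$ neighbors in $G'-S$, which is too few to serve as the center of a $d$-claw, contradicting the existence of $H$.

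There is essentially no obstacle here; the only subtlety is remembering that for $d=2$ a leaf still cannot be the center because $K_{1,2}$ has a unique vertex of degree~$2$, and that the $I(v)$ vertices by construction are disjoint from $V$ and therefore from $S$. The argument is short and should fit in a few lines.
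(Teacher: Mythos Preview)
Your proof is correct and uses essentially the same observation as the paper: since $S$ is a vertex cover, any $v\in V\setminus S$ has all its $G$-neighbors in $S$, so in $G'-S$ the vertex $v$ is adjacent only to the $d-1$ pendants of $I(v)$. The paper phrases this directly rather than by contradiction, simply noting that every component of $G'-S$ is either a single vertex (an $I(v)$-vertex with $v\in S$) or a $(d-1)$-claw (namely $\{v\}\cup I(v)$ with $v\notin S$), hence $G'-S$ is $d$-claw free; but the content is the same.
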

\begin{proof} This follows immediately from the construction of $G'$. Indeed, since $G-S$ is edgeless, every connected component of $G'-S$ is a single vertex (from $I(v)$ for some $v\in S$) or a $(d-1)$-claw (induced by $v$ and $I(v)$ for some $v\not\in S)$. Thus, $S$ is a $d$-claw deletion set in~$G'$.\qed
\end{proof}

\begin{fact}\label{Claim5}
If $S'$ is a $d$-claw deletion set in $G'$, then $G$ has a vertex cover $S$ with $|S|\le |S'|$.
\end{fact}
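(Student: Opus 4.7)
The plan is to construct a vertex cover $S\subseteq V$ of $G$ directly from $S'$ by promoting each ``used'' leaf to its anchor vertex. Concretely, I would set
\[
S = (S'\cap V)\,\cup\,\{\,v\in V\setminus S' : I(v)\cap S'\ne\emptyset\,\}.
\]
The size bound $|S|\le |S'|$ is then essentially bookkeeping: the two pieces in the definition of $S$ are disjoint, and since the sets $I(v)$ are pairwise disjoint and contained in $V(G')\setminus V$, each $v$ contributing to the second piece can be charged to a distinct leaf of $S'\setminus V$. This yields $|S|\le |S'\cap V| + |S'\setminus V| = |S'|$.

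To show that $S$ is a vertex cover of $G$, I would proceed by contradiction. Suppose some edge $uv\in E$ is left uncovered, i.e.\ $u,v\notin S$. By the definition of $S$ this forces $u,v\notin S'$ \emph{and} $I(u)\cap S'=\emptyset$. Then in $G'-S'$ the vertex $u$ still retains all $d-1$ leaves of $I(u)$ as neighbors, and it still retains $v$ as a neighbor since $v\notin S'$. I would then observe that the $d$ vertices of $I(u)\cup\{v\}$ are pairwise non-adjacent: every leaf in $I(u)$ has degree $1$ in $G'$ with $u$ as its unique neighbor, so no two leaves are adjacent and no leaf is adjacent to $v$. Hence $\{u\}\cup I(u)\cup\{v\}$ induces a $d$-claw centered at $u$ in $G'-S'$, contradicting the assumption that $S'$ is a $d$-claw deletion set.

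The only point that needs any care is the verification that $\{u\}\cup I(u)\cup\{v\}$ really induces a $d$-claw, and this is immediate from the construction of $G'$, in which each $I(u)$ is a set of fresh leaves attached solely to $u$. Thus the whole statement reduces to the explicit ``promote leaves to anchors'' definition of $S$ above, and no further case analysis is required.
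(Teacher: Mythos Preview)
Your proof is correct and follows essentially the same approach as the paper. The paper phrases the ``promote leaves to anchors'' step as a WLOG modification of $S'$ (replacing $S'$ by $(S'\setminus I(v))\cup\{v\}$ whenever $S'$ meets some $I(v)$) and then takes $S=S'$, whereas you write down the resulting set $S$ explicitly in one line; the contradiction argument exhibiting a $d$-claw centered at an uncovered endpoint is identical.
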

\begin{proof} Let $S'$ be a $d$-claw deletion set in $G'$. We may assume that, for every $v\in V(G)$, $S'$ contains no vertex in $I(v)$. Otherwise, $(S'\setminus I(v))\cup\{v\}$ is also a $d$-claw deletion set in $G'$ of size at most $|S'|$. Thus $S'\subseteq V(G)$ and $S=S'$ is a vertex cover in $G$. For otherwise, if $uv$ were an edge in $G-S$ then $v$ and $\{u\}\cup I(v)$ would induce a $d$-claw in $G'-S=G'-S'$ centered at $v$.\qed
\end{proof}

We now derive other hardness results for \dclawVD{d} 
from the previous reduction.
\begin{theorem}\label{thm:3}
Let $d\ge 2$ be a fixed integer. Assuming ETH, there is no $O^*(2^{o(n)})$ time algorithm for \dclawVD{d} on $n$-vertex graphs, even on graphs of diameter~$2$.
\end{theorem}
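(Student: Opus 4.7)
The plan is to chain the two reductions already established in the excerpt and invoke the standard ETH lower bound for \VC. Recall that, assuming ETH, \VC\ admits no $2^{o(n)}$-time algorithm on $n$-vertex graphs: this is folklore, obtained via the sparsification lemma together with the linear reduction from \textsc{3-sat} to \VC\ (the vertex count of the output is linear in the number of variables and clauses of the input formula).

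Starting from an arbitrary \VC\ instance $(G,k)$ with $|V(G)|=n$, I would first apply the reduction $G\mapsto G'$ described just before Theorem~\ref{thm:3}, which attaches $d-1$ leaves to every vertex of $G$. By Fact~\ref{Claim4} and Fact~\ref{Claim5}, $(G,k)\in\text{\VC}$ if and only if $(G',k)\in\text{\dclawVD{d}}$, and $G'$ has exactly $nd$ vertices. Then I would apply the construction from the proof of Observation~\ref{obs:diam2} to $G'$, producing a graph $G''$ of diameter~$2$ with $nd+d+1$ vertices and the equivalence $(G',k)\in\text{\dclawVD{d}}$ if and only if $(G'',k+1)\in\text{\dclawVD{d}}$. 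Composing these two reductions gives a polynomial-time transformation $(G,k)\mapsto(G'',k+1)$ from \VC\ to \dclawVD{d} on diameter-$2$ graphs whose output has $N=nd+d+1=O(n)$ vertices, since $d$ is a fixed constant.

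Now suppose for contradiction that \dclawVD{d} on $N$-vertex graphs of diameter~$2$ admits an $O^*(2^{o(N)})$-time algorithm. Running this algorithm on $G''$ decides the original \VC\ instance in time
\[
O^*\!\bigl(2^{o(N)}\bigr)=O^*\!\bigl(2^{o(nd+d+1)}\bigr)=O^*\!\bigl(2^{o(n)}\bigr),
\]
contradicting the ETH-based lower bound for \VC\ recalled above. This yields the claimed conclusion.

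I do not anticipate any real obstacle: the only thing to verify is that both reductions blow up the vertex count by at most a constant factor (trivially so, since $d$ is fixed), that they are polynomial-time computable, and that preservation of YES/NO instances holds, which is exactly the content of Facts~\ref{Claim4}--\ref{Claim5} and Observation~\ref{obs:diam2}.
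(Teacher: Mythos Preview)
Your proposal is correct and follows essentially the same approach as the paper: invoke the ETH lower bound for \VC, use Facts~\ref{Claim4} and~\ref{Claim5} together with the linear vertex blow-up $|V'|=d|V|=O(|V|)$ to transfer the lower bound to \dclawVD{d}, and then apply the construction from Observation~\ref{obs:diam2} (which adds only $d+1$ vertices) to obtain the diameter-$2$ version. Your write-up is simply a more explicit unfolding of the paper's terse argument.
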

\begin{proof} By Facts~\ref{Claim4} and~\ref{Claim5}, and the known fact that, assuming ETH, there is no $O^*(2^{o(n)})$ time algorithm for \VC\ on $n$-vertex graphs~\cite{LokshtanovMS11}. Since the graph $G'$ in the construction has $|V'|=|V|+(d-1)|V|=O(|V|)$ vertices, we obtain that there is no $O^*(2^{o(n)})$ time algorithm for \dclawVD{d}, too.
By (the proof of) Observation~\ref{obs:diam2}, the statement also holds for graphs of diameter~$2$.
\qed
\end{proof}

\begin{theorem}\label{thm:4}
Let $d\ge 2$ be a fixed integer.
\begin{itemize}
\item[\em (i)] \dclawVD{d} is $\NP$-complete, even when restricted to planar graphs of maximum degree $d+1$ and arbitrary large girth. 
\item[\em (ii)] \dclawVD{d} is $\NP$-complete, even when restricted to diameter-$2$ graphs with only one vertex of unbounded degree.
\end{itemize}
\end{theorem}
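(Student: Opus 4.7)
My plan is to derive both parts by composing the reduction from \VC\ to \dclawVD{d} given just before Theorem~\ref{thm:3} (attaching $d-1$ pendant leaves to every vertex) with classical and earlier results in the paper.

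For part~(i), I would feed the reduction with an instance of \VC\ on planar graphs of maximum degree~$3$ with arbitrary large girth, a well-known $\NP$-complete class. Adding pendant leaves preserves planarity (each new leaf can be placed in any face incident to its neighbor) and preserves the girth (degree-$1$ vertices lie on no cycle, so the shortest cycle of $G'$ is the shortest cycle of $G$). Facts~\ref{Claim4} and~\ref{Claim5} then give the bi-directional correspondence between vertex covers of $G$ and $d$-claw deletion sets of the constructed graph $G'$. The subtlety is the degree bound: attaching $d-1$ leaves to a degree-$3$ vertex yields degree $d+2$, one above the claimed $d+1$. I would close this gap by a small local modification at each degree-$3$ vertex of $G$---for instance, splitting the vertex into a small gadget whose internal degrees are at most~$2$ (realised by a long cycle so that the girth requirement is not violated), and then attaching the $d-1$ leaves only to the gadget---and verify that this preserves both directions of the VC/$d$-claw-deletion equivalence.

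For part~(ii), I would apply the construction in the proof of Observation~\ref{obs:diam2} to the bounded-degree instance produced by~(i) (or to any other bounded-degree $\NP$-complete instance of \dclawVD{d}). Adding a single universal vertex $v$ together with $d$ new pendant leaves at $v$ yields a graph $G''$ of diameter~$2$, since $v$ is adjacent to every other vertex. Each original vertex gains exactly one new neighbor (namely $v$) and therefore still has bounded degree, while $v$ itself has degree $|V(G)|+d$. Hence $v$ is the unique vertex of unbounded degree, and the equivalence $(G,k)\in\dclawVD{d}\Leftrightarrow(G'',k+1)\in\dclawVD{d}$ supplied by Observation~\ref{obs:diam2} closes the reduction.

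The main obstacle is the degree refinement in~(i): the naive $d-1$-leaf reduction overshoots the claimed bound by one, so I would need a careful gadget at each degree-$3$ vertex together with a direct verification that it forces a $d$-claw exactly when an incident original edge is uncovered and introduces no spurious $d$-claws elsewhere or cycles of small length. Apart from that refinement, both statements are essentially direct applications of the leaf-attachment reduction, combined in~(ii) with Observation~\ref{obs:diam2}.
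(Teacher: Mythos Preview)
Your treatment of part~(ii) is correct and matches the paper exactly: compose the bounded-degree instance from~(i) with the universal-vertex construction of Observation~\ref{obs:diam2}.

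For part~(i), you correctly spot the problem (attaching $d-1$ leaves to a degree-$3$ vertex gives degree $d+2$), but your proposed remedy---splitting each degree-$3$ vertex into an ad hoc long-cycle gadget---is left entirely unspecified, and you yourself flag it as the ``main obstacle'' still to be worked out. That is a genuine gap: designing such a gadget so that vertex covers of $G$ correspond one-to-one (with the same size) to $d$-claw deletion sets of $G'$, while keeping planarity, girth, and the degree bound $d+1$, is not routine and would require a full verification you have not supplied.

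The paper avoids this difficulty altogether by choosing a sharper starting point. It invokes the fact (derivable from~\cite{HortonK93,Murphy92}) that \VC\ is already $\NP$-complete on planar graphs of maximum degree~$3$ and arbitrarily large girth \emph{in which every neighbour of a degree-$3$ vertex has degree~$2$}. On such an instance one attaches $d-1$ leaves \emph{only to the degree-$2$ vertices}; degree-$2$ vertices then reach degree $d+1$, degree-$3$ vertices are left untouched (and $3\le d+1$), so the bound $d+1$ holds on the nose. The correctness argument is then the same as Facts~\ref{Claim4} and~\ref{Claim5}: any uncovered edge has at least one endpoint of degree~$2$ (since two adjacent vertices cannot both have degree~$3$), and that endpoint together with its $d-1$ leaves and the other endpoint gives a $d$-claw. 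No gadgetry is needed. This is the missing idea in your proposal.
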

\begin{proof} It is known (and it can be derived, \emph{e.g.}, from~\cite{HortonK93,Murphy92}) that \VC\ remains $\NP$-complete on planar graphs $G$ of maximum degree~$3$ and arbitrary large girth, and in which the neighbors of any vertex of degree~3 in $G$ have degree~2.

Given such a graph~$G$, let $G'$ be obtained from $G$ by attaching, for every vertex $v$ of degree~$2$, $d-1$ leaves to~$v$. Then $G'$ is planar, has maximum degree $d+1$ and arbitrary large girth.
Moreover, similarly to Facts~\ref{Claim4} and~\ref{Claim5}, it can be seen that $G$ has a vertex cover of size at most $k$ if and only if $G'$ has a $d$-claw deletion set of size at most $k$. This proves~(i).
Part~(ii) follows from~(i) and the reduction in the proof of Observation~\ref{obs:diam2}.
\qed
\end{proof}
We remark that the hardness result stated in Theorem~\ref{thm:4}~(ii) is optimal in the sense that graphs of bounded diameter and bounded vertex degree have bounded size. Hence \dclawVD{d} is trivial when restricted to such graphs.

Note that \dclawVD{d} is trivial on graph of maximum degree less than $d$ (because such graphs contain no $d$-claws).
Moreover, \CVD\ is easily solvable on graphs of maximum degree~$2$. Thus, with Theorem~\ref{thm:4}~(i), the computational complexity of \dclawVD{d}, $d\ge 3$, on graphs of maximum degree $d$ remains to discuss.
We will show in the next subsections that the problem is still hard even on bipartite graphs of maximum degree~$d$.

\subsection{Bipartite graphs of bounded degree}\label{subsec:bip}
Recall that \VC\ is polynomially solvable on bipartite graphs, hence previous results reported above cannot be stated for bipartite graphs.

In this subsection, we first give a polynomial reduction from \PNAE3SAT to \CVD\ showing that \CVD\ is $\NP$-complete even when restricted to bipartite graphs of degree~$3$. 
Then, we give another polynomial reduction from \PNAE3SAT to \clawVD\ showing that \clawVD\ is $\NP$-complete even when restricted to bipartite graphs of maximum degree $3$. From this, the hardness of \dclawVD{d} in bipartite graphs of maximum degree~$d$ will be easily derived for any $d> 3$. 
Thus, we obtain an interesting dichotomy for all $d\ge 3$: \dclawVD{d} is polynomial-time solvable on graphs of maximum degree less than~$d$ and $\NP$-complete otherwise. 


Recall that an instance for \PNAE3SAT\ is a \textsc{3-cnf} formula $F=C_1\land C_2\land\cdots\land C_m$ over $n$ variables $x_1, x_2, \ldots, x_n$, in which each clause $C_j$ consists of three distinct variables. The problem asks whether there is a truth assignment of the variables such that every clause in $F$ has a true variable and a false variable. Such an assignment is called \emph{nae assignment}. 
It is well known that \PNAE3SAT\ is $\NP$-complete. 

\subsubsection{Cluster Vertex Deletion is hard in subcubic bipartite graphs}

Our reduction is inspired by a reduction from \NAE3SAT to \CVD\ on bipartite graphs in~\cite{Yannakakis81a}.

Let $F=C_1\land C_2\land\cdots\land C_m$ over $n$ variables $x_1, x_2, \ldots, x_n$, in which each clause $C_j$ consists of three distinct variables. 
We will construct a subcubic bipartite graph $G$ such that $G$ has a cluster deletion set of size at most $2mn+11m$ if and only if $F$ admits a nae assignment. The graph $G$ contains a gadget $G(v_i)$ for each variable $v_i$ and a gadget $G(C_j)$ for each clause $C_j$. 

\textbf{Variable gadget.} For each variable $v_i$ we introduce $m$ pairs of \emph{variable vertices} $v_{ij}$ and~$v_{ij}'$ one pair for each clause $C_j$, $1\le j\le m$, as follows. First, take a cycle with $2m$ vertices $v_{i1}$, $v_{i1}'$, $v_{i2}$, $v_{i2}'$,  \ldots, $v_{im}$, $v_{im}'$ and edges $v_{i1}v_{i1}'$, $v_{i1}'v_{i2}$, $v_{i2}v_{i2}'$, \ldots, $v_{i(m-1)}'v_{im}$, $v_{im}v_{im}'$ and~$v_{im}'v_{i1}$.  
Then subdivide every edge $v_{ij}'v_{i(j+1)}$ with $4$ new vertices $w_{ij}$, $x_{ij}$, $y_{ij}$ and $z_{ij}$ to obtain a cycle on $6m$ vertices. 

The obtained graph is denoted by $G(v_i)$. The case $m=3$ is shown in Fig.~\ref{fig:CVDvi}. 

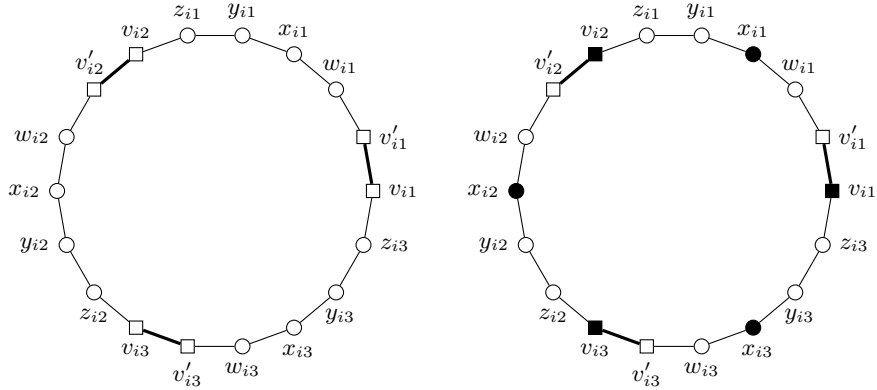
\begin{figure}[!ht]
\begin{center}
\tikzstyle{vertexS}=[draw,circle,inner sep=2pt,fill=black]
\tikzstyle{squareS}=[draw,rectangle,inner sep=2.5pt,fill=black]
\tikzstyle{subdS}=[draw,circle,inner sep=1.5pt,fill=black]
\tikzstyle{square}=[draw, rectangle,inner sep=2.5pt]
\tikzstyle{vertex}=[draw,circle,inner sep=2pt] 
\tikzstyle{subd}=[draw,circle,inner sep=1.5pt]

\begin{tikzpicture}[scale=.42] 

\node[square] (vi1) at ({3+5*cos(0)},{4+5*sin(0)}) [label=right:\small $v_{i1}$] {};
\node[square] (vi1') at ({3+5*cos(20)},{4+5*sin(20)}) [label=right:\small $v_{i1}'$] {};
\node[square] (vi2) at ({3+5*cos(120)},{4+5*sin(120)}) [label=above:\small $v_{i2}$] {};
\node[square] (vi2') at ({3+5*cos(140)},{4+5*sin(140)}) [label=above:\small $v_{i2}'~$] {};
\node[square] (vi3) at ({3+5*cos(240)},{4+5*sin(240)}) [label=below:\small $v_{i3}$] {};
\node[square] (vi3') at ({3+5*cos(260)},{4+5*sin(260)}) [label=below:\small $v_{i3}'$] {};

\node[vertex] (wi1) at ({3+5*cos(40)},{4+5*sin(40)}) [label=above:\small $~w_{i1}$] {};
\node[vertex] (xi1) at ({3+5*cos(60)},{4+5*sin(60)})  [label=above:\small $x_{i1}$] {};
\node[vertex] (yi1) at ({3+5*cos(80)},{4+5*sin(80)}) [label=above:\small $y_{i1}$] {};
\node[vertex] (zi1) at ({3+5*cos(100)},{4+5*sin(100)}) [label=above:\small $z_{i1}$] {};
\node[vertex] (wi2) at ({3+5*cos(160)},{4+5*sin(160)})  [label=left:\small $w_{i2}$] {};
\node[vertex] (xi2) at ({3+5*cos(180)},{4+5*sin(180)}) [label=left:\small $x_{i2}$] {};
\node[vertex] (yi2) at ({3+5*cos(200)},{4+5*sin(200)}) [label=left:\small $y_{i2}$] {};
\node[vertex] (zi2) at ({3+5*cos(220)},{4+5*sin(220)}) [label=below:\small $z_{i2}$] {};
\node[vertex] (wi3) at ({3+5*cos(280)},{4+5*sin(280)}) [label=below:\small $w_{i3}$] {};
\node[vertex] (xi3) at ({3+5*cos(300)},{4+5*sin(300)}) [label=below:\small $~x_{i3}$] {};
\node[vertex] (yi3) at ({3+5*cos(320)},{4+5*sin(320)}) [label=below:\small $~y_{i3}$] {};
\node[vertex] (zi3) at ({3+5*cos(340)},{4+5*sin(340)}) [label=right:\small $z_{i3}$] {};

\draw [very thick] (vi1)--(vi1');
\draw (vi1')--(wi1)--(xi1)--(yi1)--(zi1)--(vi2);
\draw[very thick] (vi2)--(vi2');
\draw (vi2')--(wi2)--(xi2)--(yi2)--(zi2)--(vi3);
\draw[very thick] (vi3)--(vi3');
\draw (vi3')--(wi3)--(xi3)--(yi3)--(zi3)--(vi1);
\end{tikzpicture} 
\quad
\begin{tikzpicture}[scale=.42] 
\node[squareS] (vi1) at ({3+5*cos(0)},{4+5*sin(0)}) [label=right:\small $v_{i1}$] {};
\node[square] (vi1') at ({3+5*cos(20)},{4+5*sin(20)}) [label=right:\small $v_{i1}'$] {};
\node[squareS] (vi2) at ({3+5*cos(120)},{4+5*sin(120)}) [label=above:\small $v_{i2}$] {};
\node[square] (vi2') at ({3+5*cos(140)},{4+5*sin(140)}) [label=above:\small $v_{i2}'~$] {};
\node[squareS] (vi3) at ({3+5*cos(240)},{4+5*sin(240)}) [label=below:\small $v_{i3}$] {};
\node[square] (vi3') at ({3+5*cos(260)},{4+5*sin(260)}) [label=below:\small $v_{i3}'$] {};

\node[vertex] (wi1) at ({3+5*cos(40)},{4+5*sin(40)}) [label=above:\small $~w_{i1}$] {};
\node[vertexS] (xi1) at ({3+5*cos(60)},{4+5*sin(60)})  [label=above:\small $x_{i1}$] {};
\node[vertex] (yi1) at ({3+5*cos(80)},{4+5*sin(80)}) [label=above:\small $y_{i1}$] {};
\node[vertex] (zi1) at ({3+5*cos(100)},{4+5*sin(100)}) [label=above:\small $z_{i1}$] {};
\node[vertex] (wi2) at ({3+5*cos(160)},{4+5*sin(160)})  [label=left:\small $w_{i2}$] {};
\node[vertexS] (xi2) at ({3+5*cos(180)},{4+5*sin(180)}) [label=left:\small $x_{i2}$] {};
\node[vertex] (yi2) at ({3+5*cos(200)},{4+5*sin(200)}) [label=left:\small $y_{i2}$] {};
\node[vertex] (zi2) at ({3+5*cos(220)},{4+5*sin(220)}) [label=below:\small $z_{i2}$] {};
\node[vertex] (wi3) at ({3+5*cos(280)},{4+5*sin(280)}) [label=below:\small $w_{i3}$] {};
\node[vertexS] (xi3) at ({3+5*cos(300)},{4+5*sin(300)}) [label=below:\small $~x_{i3}$] {};
\node[vertex] (yi3) at ({3+5*cos(320)},{4+5*sin(320)}) [label=below:\small $~y_{i3}$] {};
\node[vertex] (zi3) at ({3+5*cos(340)},{4+5*sin(340)}) [label=right:\small $z_{i3}$] {};

\draw [very thick] (vi1)--(vi1');
\draw (vi1')--(wi1)--(xi1)--(yi1)--(zi1)--(vi2);
\draw[very thick] (vi2)--(vi2');
\draw (vi2')--(wi2)--(xi2)--(yi2)--(zi2)--(vi3);
\draw[very thick] (vi3)--(vi3');
\draw (vi3')--(wi3)--(xi3)--(yi3)--(zi3)--(vi1);
\end{tikzpicture} 
\caption{The variable gadget $G(v_i)$ in case $m=3$ (left) and an optimal cluster deletion set formed by the $2m$ black vertices (right).}\label{fig:CVDvi}
\end{center}
\end{figure}

The following properties of the variable gadget will be used: 
\begin{fact}\label{fact:vi}
$G(v_i)$ admits an optimal cluster deletion set of size $2m$. 
Any optimal cluster deletion set $S$ of $G(v_i)$ has the following properties:  
\begin{itemize}
\item[\em (a)] $S$ contains all or none of the variable vertices $v_{ij}$, $1\le j\le m$. The same holds for the variable vertices $v_{ij}'$, $1\le j\le m$;
\item[\em (b)] For any $1\le j\le m$, $v_{ij}$ and $v_{ij}'$ are not both in $S$. Moreover,  
each of $v_{ij}$ and $v_{ij}'$ has a neighbor outside $S$.
\end{itemize} 
\end{fact}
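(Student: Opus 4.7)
The plan is to observe that $G(v_i)$ is isomorphic to the chordless cycle $C_{6m}$ with vertices listed cyclically as $v_{i1}, v_{i1}', w_{i1}, x_{i1}, y_{i1}, z_{i1}, v_{i2}, v_{i2}', \ldots, v_{im}, v_{im}', w_{im}, x_{im}, y_{im}, z_{im}$, and then to exploit the rigid combinatorial structure of minimum cluster deletion sets in a cycle.

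First I would establish the lower bound $|S| \ge 2m$ for every cluster deletion set $S$ of $G(v_i)$. Deleting $|S|$ vertices from $C_{6m}$ leaves a disjoint union of paths whose lengths sum to $6m - |S|$, and since the remaining graph must be $P_3$-free each of these paths has at most two vertices. Because the number of these components is at most $|S|$, this gives $6m - |S| \le 2|S|$, so $|S| \ge 2m$. The set $S_0 = \{v_{ij}, x_{ij} : 1 \le j \le m\}$ matches this bound: removing it leaves the $2m$ disjoint edges $v_{ij}'w_{ij}$ and $y_{ij}z_{ij}$.

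Next I would argue that the tightness $|S| = 2m$ forces every maximal run of deleted vertices along the cycle to have length exactly one and every maximal run of kept vertices to have length exactly two. Hence $S$ must coincide with the set of vertices whose position on the cycle lies in a single residue class modulo~$3$, which leaves only three optimal deletion sets: $S_0$ as above, $S_1 = \{v_{ij}', y_{ij} : 1 \le j \le m\}$, and $S_2 = \{w_{ij}, z_{ij} : 1 \le j \le m\}$.

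Properties~(a) and~(b) then follow by inspection of this short list. All of the vertices $v_{ij}$, $1 \le j \le m$, share one residue class modulo~$3$, and all of the $v_{ij}'$ share another, which proves~(a). For~(b), the edge $v_{ij}v_{ij}'$ together with the fact that any optimal $S$ is an independent set (being a single residue class) shows that $v_{ij}$ and $v_{ij}'$ are never both in $S$; and because every deleted vertex has both of its cycle neighbors undeleted, every vertex of $G(v_i)$ has at least one neighbor outside $S$. The only point requiring a bit of care is the forced periodic pattern in the second paragraph; everything else reduces to reading off residues modulo~$3$.
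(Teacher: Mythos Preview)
Your proof is correct and takes a somewhat different route from the paper's. The paper obtains the lower bound $|S|\ge 2m$ by exhibiting a packing of $2m$ vertex-disjoint induced $P_3$'s (the triples $v_{ij}v_{ij}'w_{ij}$ and $x_{ij}y_{ij}z_{ij}$), so that any cluster deletion set must pick at least one vertex from each; it then asserts that hitting each $P_3$ of this packing exactly once already yields~(a) and~(b). You instead argue globally on the cycle $C_{6m}$: a counting of components after deletion forces every optimal $S$ to consist of $2m$ pairwise non-adjacent vertices whose removal leaves only $P_2$-components, i.e., $S$ is a single residue class modulo~$3$. Your approach has the advantage of explicitly classifying all three optimal solutions, after which (a), (b), and the ``neighbor outside $S$'' claim are immediate by inspection; the paper's $P_3$-packing argument is quicker for the lower bound but leaves more verification to the reader for~(a) and the neighbor claim. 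Both arguments are sound.
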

\begin{proof}
Observe that $G(v_i)$ can be partitioned into $m$ induced $P_3: v_{ij}v_{ij}'w_{ij}$ and $m$ induced $P_3:  x_{ij}y_{ij}z_{ij}$, $1\le j\le m$. 
Therefore, any cluster deletion set in $G(v_i)$ must contain a vertex of each $P_3$, hence has at least $2m$ vertices.  
Note also that $\{v_{ij}, x_{ij} \mid 1\le j\le m\}$, $\{v_{ij}', y_{ij} \mid 1\le j\le m\}$ and $\{w_{ij}, z_{ij} \mid 1\le j\le m\}$ are cluster deletion sets of size $2m$ (see also Fig.~\ref{fig:CVDvi} on the right hand). 

In particular, an optimal cluster deletion set must contain exactly one vertex of each $P_3$, showing (a) and (b). 
\qed 
\end{proof}

\textbf{Clause gadget.} For each clause $C_j$ consisting of variables $c_{j1}, c_{j2}$ and $c_{j3}$, let 
 $G(C_j)$ be the graph depicted on left hand side of Fig.~\ref{fig:CVDcj}; we call the six vertices labeled with $c_{jk}$ and $c_{jk}'$, $1\le k\le 3$, the \emph{clause vertices}. 

\begin{figure}[!ht]
\begin{center}
\tikzstyle{vertexS}=[draw,circle,inner sep=2pt,fill=black]
\tikzstyle{squareS}=[draw,rectangle,inner sep=2.5pt,fill=black]
\tikzstyle{subdS}=[draw,circle,inner sep=1.5pt,fill=black]
\tikzstyle{square}=[draw, rectangle,inner sep=2.5pt]
\tikzstyle{vertex}=[draw,circle,inner sep=2pt] 
\tikzstyle{subd}=[draw,circle,inner sep=1.5pt]

\begin{tikzpicture}[scale=.42] 
\node[square] (cj1) at ({3+5*cos(0)},{4+5*sin(0)}) [label=right:\small $c_{j1}$] {};
\node[square] (cj1') at ({3+5*cos(30)},{4+5*sin(30)}) [label=right:\small $c_{j1}'$] {};
\node[vertex] (x)  at ({3+2*cos(0)},{4+2*sin(0)}) {}; 
\node[vertex] (x1)  at ({3+3*cos(0)},{4+3*sin(0)}) {}; 
\node[vertex] (x2)  at ({3+4*cos(0)},{4+4*sin(0)}) {}; 
\node[vertex] (x')  at ({3+2*cos(30)},{4+2*sin(30)}) {}; 
\node[vertex] (x1')  at ({3+3*cos(30)},{4+3*sin(30)}) {}; 
\node[vertex] (x2')  at ({3+4*cos(30)},{4+4*sin(30)}) {}; 
\node[vertex] (p)  at ({3+2*cos(60)},{4+2*sin(60)}) {}; 
\node[vertex] (p')  at ({3+2*cos(90)},{4+2*sin(90)}) {}; 

\node[square] (cj2) at ({3+5*cos(120)},{4+5*sin(120)}) [label=left:\small $c_{j2}$] {};
\node[square] (cj2') at ({3+5*cos(150)},{4+5*sin(150)}) [label=left:\small $c_{j2}'$] {};
\node[vertex] (y)  at ({3+2*cos(120)},{4+2*sin(120)}) {}; 
\node[vertex] (y1)  at ({3+3*cos(120)},{4+3*sin(120)}) {}; 
\node[vertex] (y2)  at ({3+4*cos(120)},{4+4*sin(120)}) {}; 
\node[vertex] (y')  at ({3+2*cos(150)},{4+2*sin(150)}) {}; 
\node[vertex] (y1')  at ({3+3*cos(150)},{4+3*sin(150)}) {}; 
\node[vertex] (y2')  at ({3+4*cos(150)},{4+4*sin(150)}) {}; 
\node[vertex] (q)  at ({3+2*cos(180)},{4+2*sin(180)}) {}; 
\node[vertex] (q')  at ({3+2*cos(210)},{4+2*sin(210)}) {}; 

\node[square] (cj3) at ({3+5*cos(240)},{4+5*sin(240)}) [label=left:\small $c_{j3}$] {};
\node[square] (cj3') at ({3+5*cos(270)},{4+5*sin(270)}) [label=right:\small $c_{j3}'$] {};
\node[vertex] (z)  at ({3+2*cos(240)},{4+2*sin(240)}) {}; 
\node[vertex] (z1)  at ({3+3*cos(240)},{4+3*sin(240)}) {}; 
\node[vertex] (z2)  at ({3+4*cos(240)},{4+4*sin(240)}) {}; 
\node[vertex] (z')  at ({3+2*cos(270)},{4+2*sin(270)}) {}; 
\node[vertex] (z1')  at ({3+3*cos(270)},{4+3*sin(270)}) {}; 
\node[vertex] (z2')  at ({3+4*cos(270)},{4+4*sin(270)}) {}; 
\node[vertex] (r)  at ({3+2*cos(300)},{4+2*sin(300)}) {}; 
\node[vertex] (r')  at ({3+2*cos(330)},{4+2*sin(330)}) {}; 

\draw[very thick] (cj1)--(cj1'); \draw (cj1)--(x2)--(x1)--(x)--(x')--(x1')--(x2')--(cj1'); 
\draw[very thick] (cj2)--(cj2'); \draw (cj2)--(y2)--(y1)--(y)--(y')--(y1')--(y2')--(cj2'); 
\draw[very thick] (cj3)--(cj3'); \draw (cj3)--(z2)--(z1)--(z)--(z')--(z1')--(z2')--(cj3'); 
\draw (x')--(p)--(p')--(y); \draw (y')--(q)--(q')--(z); \draw (z')--(r)--(r')--(x);
\end{tikzpicture} 
\qquad
\begin{tikzpicture}[scale=.42] 
\node[squareS] (cj1) at ({3+5*cos(0)},{4+5*sin(0)}) [label=right:\small $c_{j1}$] {};
\node[square] (cj1') at ({3+5*cos(30)},{4+5*sin(30)}) [label=right:\small $c_{j1}'$] {};
\node[vertexS] (x)  at ({3+2*cos(0)},{4+2*sin(0)}) {}; 
\node[vertex] (x1)  at ({3+3*cos(0)},{4+3*sin(0)}) {}; 
\node[vertex] (x2)  at ({3+4*cos(0)},{4+4*sin(0)}) {}; 
\node[vertex] (x')  at ({3+2*cos(30)},{4+2*sin(30)}) {}; 
\node[vertexS] (x1')  at ({3+3*cos(30)},{4+3*sin(30)}) {}; 
\node[vertex] (x2')  at ({3+4*cos(30)},{4+4*sin(30)}) {}; 
\node[vertex] (p)  at ({3+2*cos(60)},{4+2*sin(60)}) {}; 
\node[vertexS] (p')  at ({3+2*cos(90)},{4+2*sin(90)}) {}; 

\node[squareS] (cj2) at ({3+5*cos(120)},{4+5*sin(120)}) [label=left:\small $c_{j2}$] {};
\node[square] (cj2') at ({3+5*cos(150)},{4+5*sin(150)}) [label=left:\small $c_{j2}'$] {};
\node[vertex] (y)  at ({3+2*cos(120)},{4+2*sin(120)}) {}; 
\node[vertexS] (y1)  at ({3+3*cos(120)},{4+3*sin(120)}) {}; 
\node[vertex] (y2)  at ({3+4*cos(120)},{4+4*sin(120)}) {}; 
\node[vertex] (y')  at ({3+2*cos(150)},{4+2*sin(150)}) {}; 
\node[vertexS] (y1')  at ({3+3*cos(150)},{4+3*sin(150)}) {}; 
\node[vertex] (y2')  at ({3+4*cos(150)},{4+4*sin(150)}) {}; 
\node[vertexS] (q)  at ({3+2*cos(180)},{4+2*sin(180)}) {}; 
\node[vertex] (q')  at ({3+2*cos(210)},{4+2*sin(210)}) {}; 

\node[square] (cj3) at ({3+5*cos(240)},{4+5*sin(240)}) [label=left:\small $c_{j3}$] {};
\node[squareS] (cj3') at ({3+5*cos(270)},{4+5*sin(270)}) [label=right:\small $c_{j3}'$] {};
\node[vertex] (z)  at ({3+2*cos(240)},{4+2*sin(240)}) {}; 
\node[vertexS] (z1)  at ({3+3*cos(240)},{4+3*sin(240)}) {}; 
\node[vertex] (z2)  at ({3+4*cos(240)},{4+4*sin(240)}) {}; 
\node[vertexS] (z')  at ({3+2*cos(270)},{4+2*sin(270)}) {}; 
\node[vertex] (z1')  at ({3+3*cos(270)},{4+3*sin(270)}) {}; 
\node[vertex] (z2')  at ({3+4*cos(270)},{4+4*sin(270)}) {}; 
\node[vertex] (r)  at ({3+2*cos(300)},{4+2*sin(300)}) {}; 
\node[vertex] (r')  at ({3+2*cos(330)},{4+2*sin(330)}) {}; 

\draw[very thick] (cj1)--(cj1'); \draw (cj1)--(x2)--(x1)--(x)--(x')--(x1')--(x2')--(cj1'); 
\draw[very thick] (cj2)--(cj2'); \draw (cj2)--(y2)--(y1)--(y)--(y')--(y1')--(y2')--(cj2'); 
\draw[very thick] (cj3)--(cj3'); \draw (cj3)--(z2)--(z1)--(z)--(z')--(z1')--(z2')--(cj3'); 
\draw (x')--(p)--(p')--(y); \draw (y')--(q)--(q')--(z); \draw (z')--(r)--(r')--(x);
\end{tikzpicture} 
\caption{The clause gadget $G(C_j)$ (left) and a cluster deletion set (black vertices) of size $11$ (right).}\label{fig:CVDcj}
\end{center}
\end{figure}
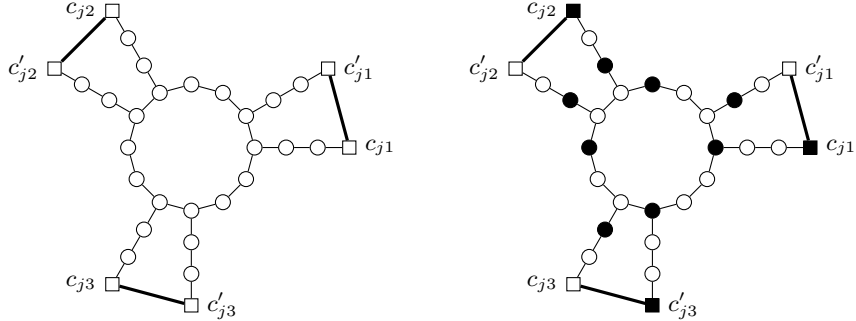

\begin{fact}\label{fact:cj1}
$G(C_j)$ admits an optimal cluster deletion set of size $11$. 
No optimal cluster deletion set of $G(C_j)$ contains all $c_{j1}$, $c_{j2}$ and $c_{j3}$, or all $c_{j1}'$, $c_{j2}'$ and $c_{j3}'$.
\end{fact}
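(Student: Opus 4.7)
The upper bound is already witnessed by the $11$ black vertices in Fig.~\ref{fig:CVDcj} on the right: I would just check that removing them leaves a graph on $19$ vertices whose edges consist of nine pairwise independent edges together with one isolated vertex, so what remains is a cluster graph.

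For the lower bound, the plan is to decompose $V(G(C_j))$ into six pairwise disjoint pieces. Let $A_k$ be the $8$-cycle induced by $c_{jk}$, $c_{jk}'$, the edge $c_{jk}c_{jk}'$, and the six intermediate vertices of the long path between $c_{jk}$ and $c_{jk}'$; call these the \emph{arms}. Let $B_1=\{p,p'\}$, $B_2=\{q,q'\}$, $B_3=\{r,r'\}$ be the \emph{bridge interiors} of the three $4$-paths $x'pp'y$, $y'qq'z$, $z'rr'x$. Writing $a_k=|S\cap A_k|$ and $b_k=|S\cap B_k|$, we have $|S|=\sum_k a_k+\sum_k b_k$. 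The argument rests on two observations: first, since $A_k\cong C_8$ and $C_8$ cannot be made $P_3$-free by fewer than three deletions, $a_k\ge 3$; second, each bridge $4$-path contains two overlapping induced $P_3$'s, so either $b_k\ge 1$, or both endpoints of that bridge (which lie in two distinct arms) must belong to $S$. Moreover, a direct check shows that if both bridge-attachment vertices of a single arm (for $A_1$ this is $\{x,x'\}$) lie in $S$, then $A_k\setminus\{x,x'\}$ is a $P_6$, whose cluster-deletion number is $2$, and hence $a_k\ge 4$.

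Combining these via a case split on $\sum_k b_k$ will finish the lower bound: if $\sum b_k\ge 2$ then $|S|\ge 9+2=11$; if $\sum b_k=1$ the two ``empty'' bridges force four attachment vertices into $S$ and in each of the three subcases one arm receives both of its attachments, so some $a_k\ge 4$ and $|S|\ge 3+3+4+1=11$; if $\sum b_k=0$ every arm has both attachments forced in, so every $a_k\ge 4$ and $|S|\ge 12$.

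For the second assertion, I would assume for contradiction that $S$ is a cluster deletion set of size $11$ containing all of $c_{j1}, c_{j2}, c_{j3}$ and rerun the same case split. The new ingredient is that $c_{jk}\in S\cap A_k$ is a further forced vertex in every arm; removing $c_{jk}$ together with the bridge-forced attachment vertices from the $C_8$ produces a residual subgraph that, in each relevant subcase, decomposes into two short paths of the form $P_3\cup P_3$, $P_2\cup P_3$, or $P_2\cup P_4$, and thus requires at least one further deletion (two, in the $P_3\cup P_3$ case). A direct tally across the four cases $\sum b_k\in\{0,1,2,3\}$ then pushes $|S|$ to at least $12$, contradicting optimality; the primed version follows by the evident involution of the gadget that swaps $c_{jk}\leftrightarrow c_{jk}'$ and the intermediate vertices along each arm accordingly. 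The main obstacle is purely the bookkeeping in this strengthened bound, but each residual subgraph is a disjoint union of paths on at most six vertices, so every individual subcase is immediate by inspection.
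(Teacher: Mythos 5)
Your argument is correct, but it is organized around a different decomposition than the paper's. You partition $V(G(C_j))$ into the three $8$-cycles (arms) and the three $2$-vertex bridge interiors, and drive the lower bound by the forcing rule ``an empty bridge puts both of its attachment vertices, which lie in two distinct arms, into $S$,'' combined with the observation that an arm containing both of its attachments costs $4$ deletions instead of $3$. The paper instead partitions $G(C_j)$ into the central $12$-cycle $H$ through the six attachment vertices and the three outer $6$-paths with midpoints $c_{jk},c_{jk}'$; since $H$ needs $\ge 4$ and each $P_6$ needs $\ge 2$ deletions, it only has to handle the tight case $|S\cap H|=4$, where the four deletions must be evenly spaced on $H$, so both attachment vertices of some arm survive and that arm's $6$-path must absorb $3$ deletions. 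The two case analyses are of comparable difficulty and both are sound; I verified your subcases, including the point (which you should spell out) that for the second assertion with $\sum_k b_k=2$ the distributions $(2,0,0)$ and $(1,1,0)$ must be treated separately, though both tally to at least $12$. The real divergence is in the second assertion: the paper observes that $G(C_j)-\{c_{j1},c_{j2},c_{j3}\}$ packs $9$ vertex-disjoint $P_3$'s (three per arm-plus-bridge, e.g.\ $x_2x_1x$, $c_{j1}'x_2'x_1'$ and $x'pp'$ for the first arm), so any cluster deletion set containing all three $c_{jk}$ has at least $3+9=12$ vertices --- one line, no case split, and it transfers to the primed triple verbatim. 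Your re-run of the full case analysis with $c_{jk}$ as an additional forced vertex in each arm also works, but the packing argument is worth adopting, as it decouples the second assertion from the first part's bookkeeping entirely.
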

\begin{proof}
For $k\in\{1,2,3\}$, write $x_{jk}, x_{jk}'$ for the two adjacent degree-$3$ vertices in $G(C_j)$ which together with $c_{jk}$ and $c_{jk}'$ belong to a $8$-cycle. 
Let $H$ be the $12$-cycle containing all $x_{jk}$ and $x_{jk}'$. Note that $G(C_j)$ minus $H$ consists of three connected components each of which is a $6$-vertex path with midpoints $c_{jk}$ and $c_{jk}'$, $1\le k\le 3$. 
Note also that any cluster deletion set of the $12$-cycle $H$ has at least~$4$ vertices, and any cluster deletion set of a $6$-vertex path has at least~$2$ vertices. 

Consider a cluster deletion set $S$ in $G(C_j)$. If $S$ contains at least $5$ verticies from $H$ then $S$ has at least~$11$ vertices. Let us assume that $S$ contains exactly~$4$ vertices from $H$. Then it can be verify that, for some $1\le k\le 3$, 
both $x_{jk}, x_{jk}'$ are outside $S$, and therefore, $S$ contains at least $3$ vertices from the $6$-path with midpoints $c_{jk}, c_{jk}'$. Thus again, $S$ contains at least~$11$ vertices. The first part follows now by noting that $G(C_j)$ admits a cluster deletion set of size~$11$ as depicted in Fig.~\ref{fig:CVDcj}.
   
Moreover, observe that $G(C_j)$ minus $c_{j1}, c_{j2}$ and $c_{j3}$  has a partition into~$9$ disjoint $P_3$. 
Hence no optimal cluster deletion set in $G(C_j)$ can contain all $c_{j1}, c_{j2}$ and $c_{j3}$, for otherwise it would contain at least~$12$ vertices. Similarly for $c_{j1}', c_{j2}'$ and $c_{j3}'$.      
\qed
\end{proof}

Finally, the graph $G$ is obtained by connecting the variable and clause gadgets as follows: if variable $v_i$ appears in clause $C_j$, i.e., $v_i=c_{jk}$ for some $k\in\{1,2,3\}$, then 

\begin{itemize}
\item connect the variable vertex $v_{ij}$ in $G(v_i)$ and the clause vertex $c_{jk}$ in $G(C_j)$ by an edge; $v_{ij}$ is the \emph{corresponding variable vertex} of the clause vertex $c_{jk}$, and
\item connect the variable vertex $v_{ij}'$ in $G(v_i)$ and the clause vertex $c_{jk}'$ in $G(C_j)$ by an edge; $v_{ij}'$ is the \emph{corresponding variable vertex} of the clause vertex $c_{jk}'$.
\end{itemize} 

\medskip
\begin{fact}\label{fact:bip}
$G$ has maximum degree~$3$ and is bipartite.
\end{fact}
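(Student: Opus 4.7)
The plan is to verify the two assertions separately, both of which reduce to short combinatorial checks. The degree bound is a case analysis over vertex types; bipartiteness follows by exhibiting an explicit $2$-coloring. The only delicate point is choosing the colors at the clause vertices so that the inter-gadget edges are automatically bichromatic.

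For the degree bound I would enumerate vertex types. Inside a variable gadget $G(v_i)$ every vertex lies on a single $6m$-cycle and has exactly two internal neighbors. Inside a clause gadget $G(C_j)$ every subdivision vertex (on one of the three $8$-cycles or on one of the three length-$3$ connecting paths) has degree $2$, and each clause vertex $c_{jk}$ or $c_{jk}'$ has exactly two internal neighbors, namely its partner in the pair and one subdivision vertex on its $8$-cycle. The construction adds at most one inter-gadget edge per variable vertex $v_{ij}$ or $v_{ij}'$ (present iff $v_i$ occurs in $C_j$) and exactly one inter-gadget edge per clause vertex. Summing, every vertex has degree at most $3$.

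For bipartiteness I would exhibit a proper $2$-coloring $\chi\colon V(G)\to\{A,B\}$. Inside each $G(v_i)$ I would alternate colors around the $6m$-cycle starting from $\chi(v_{i1})=A$; since $6m$ is even this is consistent, and forces $\chi(v_{ij})=A$, $\chi(v_{ij}')=B$ for every $j$. Inside each $G(C_j)$ I would alternate around each $8$-cycle starting from $\chi(c_{jk})=B$, which forces $\chi(c_{jk}')=A$ for $k=1,2,3$ and, with the labels from Figure~\ref{fig:CVDcj}, fixes $\chi(x)=\chi(y)=\chi(z)=A$ and $\chi(x')=\chi(y')=\chi(z')=B$. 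A short parity check along the three length-$3$ paths linking $x'$ to $y$, $y'$ to $z$ and $z'$ to $x$ (each an odd-length path joining a $B$-endpoint to an $A$-endpoint) shows that $\chi$ extends properly across them. Finally, every inter-gadget edge joins $v_{ij}$ (color $A$) to $c_{jk}$ (color $B$) or $v_{ij}'$ (color $B$) to $c_{jk}'$ (color $A$), so $\chi$ is proper on $G$.

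The only point that needs care is verifying that the forced color pattern inside $G(C_j)$ is simultaneously consistent at all six clause vertices and along the three connecting paths, which I expect to be the main (and only) potential obstacle. This is a finite check on a fixed-size gadget: the $8$-cycles are even, and the $3$-edge paths correctly match the parity difference between their endpoints, so no odd cycle arises.
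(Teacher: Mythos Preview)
Your approach is correct and essentially identical to the paper's: the paper also exhibits the bipartition $(A\cup D,\,B\cup C)$ with all $v_{ij}\in A$, $v_{ij}'\in B$, $c_{jk}\in C$, $c_{jk}'\in D$, which is exactly the $2$-coloring you describe (your color class~$A$ is the paper's $A\cup D$). Your write-up is simply more explicit about the parity checks inside the clause gadget.

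One small gap in your degree enumeration: the six ``junction'' vertices of $G(C_j)$ (labeled $x,x',y,y',z,z'$ in Figure~\ref{fig:CVDcj}, where an $8$-cycle meets the inner $12$-cycle) already have degree~$3$ \emph{inside} the clause gadget, so they do not fall under your ``degree-$2$ subdivision vertex'' case, nor are they clause vertices. Since they receive no inter-gadget edge the maximum-degree bound is unaffected, but your case analysis should list them explicitly.
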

\begin{proof}
It follows from the construction that $G$ has maximum degree~$3$. 
To see that $G$ is bipartite, note that the bipartite graph forming by all variable gadgets $G(v_i)$ has a bipartition $(A,B)$ such that all $v_{ij}$ are in $A$ and all $v_{ij}'$ are in $B$, and the bipartite graph forming by all clause gadgets $G(C_j)$ has a bipartition $(C,D)$ such that all $c_{j1}, c_{j2}, c_{j3}$ are in $C$ and all $c_{j1}', c_{j2}', c_{j3}'$ are in~$D$. Hence, by construction, $(A\cup D, B\cup C)$ is a bipartition of $G$.\qed
\end{proof}

The following fact will be important for later discussion on cluster deletion sets in the clause gadget.

\begin{fact}\label{fact:cj2}
Let $S$ be a cluster deletion set in $G$ such that $S$ contains exactly~$2m$ vertices from each $G(v_i)$. 
Then, for any $1\le j\le m$ and $1\le k\le 3$, $c_{jk}\in S$ or~$c_{jk}'\in S$. 
Moreover, if the clause vertex $c\in\{c_{jk},c_{jk}'\}$ is not in $S$ then the corresponding variable vertex of $c$ is in $S$;   
\end{fact}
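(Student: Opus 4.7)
The plan is to prove both parts by contradiction, applying Fact~\ref{fact:vi} to the restriction of $S$ to each variable gadget. The starting observation is that $S\cap V(G(v_i))$ is a cluster deletion set of the induced subgraph $G(v_i)$ of size exactly $2m$, hence optimal by Fact~\ref{fact:vi}, and so it satisfies both~(a) and~(b). Properties~(a) and~(b) together leave precisely three possible configurations at each $v_i$: (A) every $v_{i\ell}$ lies in $S$ and no $v_{i\ell}'$ does; (B) the symmetric case with the roles of $v_{i\ell}$ and $v_{i\ell}'$ swapped; and (C) no $v_{i\ell}$ and no $v_{i\ell}'$ lies in $S$. From the construction of $G(v_i)$, the $G(v_i)$-neighbors of $v_{ij}$ are exactly $v_{ij}'$ and $z_{i(j-1)}$ (indices cyclic), and those of $v_{ij}'$ are exactly $v_{ij}$ and $w_{ij}$; together with the cross-gadget edges $v_{ij}c_{jk}$ and $v_{ij}'c_{jk}'$, this makes $v_{ij}c_{jk}c_{jk}'v_{ij}'$ an induced $4$-cycle in $G$.

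For the first claim, suppose towards contradiction that $c_{jk},c_{jk}'\notin S$, and let $v_i$ be the variable with $v_i=c_{jk}$. In configuration~(A), property~(b) forces $v_{ij}'$ to have a $G(v_i)$-neighbor outside $S$; since $v_{ij}\in S$, this neighbor must be $w_{ij}$, and then $w_{ij}v_{ij}'c_{jk}'$ is an induced $P_3$ of $G-S$, a contradiction. Configuration~(B) is symmetric and yields the induced $P_3$ $z_{i(j-1)}v_{ij}c_{jk}$. In configuration~(C), both $v_{ij}$ and $v_{ij}'$ lie outside $S$, so $v_{ij}c_{jk}c_{jk}'$ is itself an induced $P_3$ of $G-S$, again a contradiction.

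For the second claim, let $c\in\{c_{jk},c_{jk}'\}$ with $c\notin S$ and suppose towards contradiction that the corresponding variable vertex $u$ is also outside $S$. By the symmetry $v_{ij}\leftrightarrow v_{ij}'$ (and $c_{jk}\leftrightarrow c_{jk}'$) we may take $c=c_{jk}$ and $u=v_{ij}$. Since $v_{ij}\notin S$, only configurations~(B) and~(C) are possible. In~(B), property~(b) combined with $v_{ij}'\in S$ forces $z_{i(j-1)}\notin S$, yielding the induced $P_3$ $z_{i(j-1)}v_{ij}c_{jk}$ in $G-S$; in~(C), $v_{ij}'\notin S$, yielding the induced $P_3$ $v_{ij}'v_{ij}c_{jk}$. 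Both contradict that $S$ is a cluster deletion set.

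The only genuine subtlety is identifying, within each of the three configurations, the correct $G(v_i)$-neighbor of $v_{ij}$ or $v_{ij}'$ whose exclusion from $S$ is forced by property~(b); once the local adjacency structure of the variable gadget is pinned down, each case produces the desired induced $P_3$ in $G-S$ straddling the variable–clause boundary by direct inspection.
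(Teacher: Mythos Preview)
Your proof is correct and follows essentially the same approach as the paper: restrict $S$ to each $G(v_i)$, invoke Fact~\ref{fact:vi} to control which variable vertices and which of their neighbors lie outside $S$, and then exhibit an induced $P_3$ crossing the variable--clause boundary. The paper's argument for the first claim is slightly leaner: since at least one of $v_{ij},v_{ij}'$ lies outside $S$ by Fact~\ref{fact:vi}\,(b), and $v_{ij}c_{jk}c_{jk}'v_{ij}'$ is an induced $4$-cycle, one immediately gets an induced $P_3$ among $\{v_{ij},c_{jk},c_{jk}'\}$ or $\{v_{ij}',c_{jk}',c_{jk}\}$ without ever needing $w_{ij}$ or $z_{i(j-1)}$; your case split into (A), (B), (C) and the detour through the $G(v_i)$-neighbor are thus unnecessary there, though still valid.
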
 
\begin{proof}
By Fact~\ref{fact:vi}, the restriction of $S$ on each $G(v_i)$ is an optimal cluster deletion set in $G(v_i)$. 
Hence, by Fact~\ref{fact:vi}~(b), for every $1\le j\le m$, some of the variable vertices $v_{ij}, v_{ij}'$ is not in $S$. 
Thus, if both $c_{jk}$ and $c_{jk}'$ are not in $S$ then with their corresponding variable vertices we have an induced $P_3$ outside $S$, a contradiction.

Moreover, if some $c\in\{c_{jk},c_{jk}'\}$ is not in $S$ then the corresponding variable vertex $v$ 
is in~$S$. For, otherwise $c, v$ and a neighbor of $v$ outside $S$ (which exists by Fact~\ref{fact:vi}~(b)) would induce a $P_3$ outside $S$.
\qed 
\end{proof}

Now suppose that $G$ has a cluster deletion set $S$ with at most $2nm+11m$ vertices. 
Then by Facts~\ref{fact:vi} and~\ref{fact:cj1}, $S$ has exactly $2nm+11m$ vertices, and~$S$ contains exactly $2m$ vertices from each $G(v_i)$ and exactly $11$ vertices from each $G(C_j)$. 

Consider the truth assignment in which a variable $v_i$ is true if all its associated variable vertices $v_{ij}$, $1\le j\le m$, are in $S$. Note that by Fact~\ref{fact:vi}~(a), this assignment is well-defined. For each $G(C_j)$, it follows from Fact~\ref{fact:cj1} that some of $c_{j1}, c_{j2}, c_{j3}$ is not in $S$ and some of $c_{j1}', c_{j2}', c_{j3}'$ is not in $S$. 
Let $c_{j1}\not\in S$, say. By Fact~\ref{fact:cj2}, $c_{j1}'\in S$. 
Hence $c_{j2}'\not\in S$ or $c_{j3}'\not\in S$, and again by Fact~\ref{fact:cj2}, $c_{j2}\in S$ or $c_{j3}\in S$.   
Let $c_{j2}\in S$; the case $c_{j3}\in S$ is similar. 
Let $v_{rj}$ and $v_{sj}$ be the corresponding variable vertices of $c_{j1}$ and $c_{j2}$, respectively. 
Then by Fact~\ref{fact:cj2} again, $v_{rj}\in S$ and $v_{sj}'\in S$. 
By Fact~\ref{fact:vi}~(b), $v_{sj}\notin S$.   
That is, the clause $C_j$ contains a true variable $v_r$ and a false variable $v_s$. 
Thus, if $G$ admits a cluster deletion set $S$ with at most $2nm+11m$ vertices then $F$ has a nae assignment.  

Conversely, suppose that there is a nae assignment for~$F$. Then a cluster deletion set $S$ of size $2nm+11m$ for $G$ is as follows. 
For each variable $v_i$, $1\le i\le n$: 
\begin{itemize}
\item If variable $v_i$ is true, then put all $2m$ vertices $v_{ij}, x_{ij}$, $1\le j\le m$, into~$S$. 
\item If variable $v_i$ is false, then put all $2m$ vertices $v_{ij}', y_{ij}$, $1\le j\le m$, into~$S$.  
\end{itemize}

For each clause $C_j$, $1\le j\le m$, let $v_{rj}$, $v_{sj}$ and $v_{tj}$ be the corresponding variable vertices of $c_{j1}$, $c_{j2}$ and $c_{j3}$, respectively. Extend $S$ to~$11$ vertices of~$G(C_j)$ as follows:  
\begin{itemize}
\item If $C_j$ has one true variable and two false variables, say $v_r$ is true, $v_s$ and $v_t$ are false, then put the clause vertices $c_{j1}'$, $c_{j2}$ and $c_{j3}$ into $S$ and extend $S$ to another~$8$ vertices as indicated in Fig.~\ref{fig:extendGCj} on the left hand.
\item If $C_j$ has two true variables and one false variable, say $v_r$ and $v_s$ are true, $v_t$ is false, then put the clause vertices $c_{j1}'$, $c_{j2}'$ and $c_{j3}$ into $S$ and extend $S$ to another~$8$ vertices as indicated in Fig.~\ref{fig:extendGCj} on the right hand.
\end{itemize}

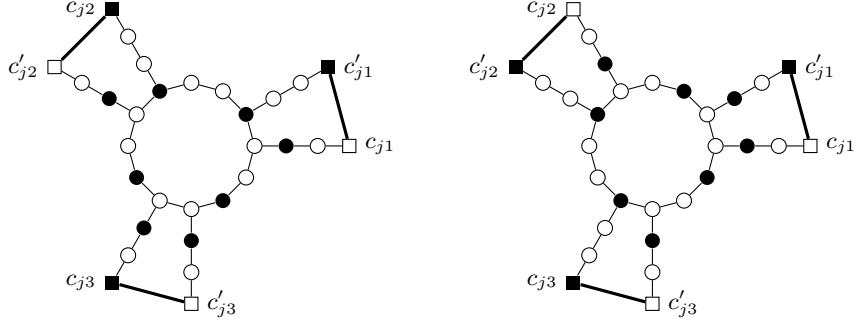
\begin{figure}[!ht]
\begin{center}
\tikzstyle{vertexS}=[circle,inner sep=2pt,fill=black]
\tikzstyle{squareS}=[rectangle,inner sep=2.5pt,fill=black]
\tikzstyle{square}=[draw, rectangle,inner sep=2.5pt]
\tikzstyle{vertex}=[draw,circle,inner sep=2pt] 
\tikzstyle{subd}=[draw,circle,inner sep=1.5pt]

\begin{tikzpicture}[scale=.42] 
\node[square] (cj1) at ({3+5*cos(0)},{4+5*sin(0)}) [label=right:\small $c_{j1}$] {};
\node[squareS] (cj1') at ({3+5*cos(30)},{4+5*sin(30)}) [label=right:\small $c_{j1}'$] {};
\node[vertex] (x)  at ({3+2*cos(0)},{4+2*sin(0)}) {}; 
\node[vertexS] (x1)  at ({3+3*cos(0)},{4+3*sin(0)}) {}; 
\node[vertex] (x2)  at ({3+4*cos(0)},{4+4*sin(0)}) {}; 
\node[vertexS] (x')  at ({3+2*cos(30)},{4+2*sin(30)}) {}; 
\node[vertex] (x1')  at ({3+3*cos(30)},{4+3*sin(30)}) {}; 
\node[vertex] (x2')  at ({3+4*cos(30)},{4+4*sin(30)}) {}; 
\node[vertex] (p)  at ({3+2*cos(60)},{4+2*sin(60)}) {}; 
\node[vertex] (p')  at ({3+2*cos(90)},{4+2*sin(90)}) {}; 

\node[squareS] (cj2) at ({3+5*cos(120)},{4+5*sin(120)}) [label=left:\small $c_{j2}$] {};
\node[square] (cj2') at ({3+5*cos(150)},{4+5*sin(150)}) [label=left:\small $c_{j2}'$] {};
\node[vertexS] (y)  at ({3+2*cos(120)},{4+2*sin(120)}) {}; 
\node[vertex] (y1)  at ({3+3*cos(120)},{4+3*sin(120)}) {}; 
\node[vertex] (y2)  at ({3+4*cos(120)},{4+4*sin(120)}) {}; 
\node[vertex] (y')  at ({3+2*cos(150)},{4+2*sin(150)}) {}; 
\node[vertexS] (y1')  at ({3+3*cos(150)},{4+3*sin(150)}) {}; 
\node[vertex] (y2')  at ({3+4*cos(150)},{4+4*sin(150)}) {}; 
\node[vertex] (q)  at ({3+2*cos(180)},{4+2*sin(180)}) {}; 
\node[vertexS] (q')  at ({3+2*cos(210)},{4+2*sin(210)}) {}; 

\node[squareS] (cj3) at ({3+5*cos(240)},{4+5*sin(240)}) [label=left:\small $c_{j3}$] {};
\node[square] (cj3') at ({3+5*cos(270)},{4+5*sin(270)}) [label=right:\small $c_{j3}'$] {};
\node[vertex] (z)  at ({3+2*cos(240)},{4+2*sin(240)}) {}; 
\node[vertexS] (z1)  at ({3+3*cos(240)},{4+3*sin(240)}) {}; 
\node[vertex] (z2)  at ({3+4*cos(240)},{4+4*sin(240)}) {}; 
\node[vertex] (z')  at ({3+2*cos(270)},{4+2*sin(270)}) {}; 
\node[vertexS] (z1')  at ({3+3*cos(270)},{4+3*sin(270)}) {}; 
\node[vertex] (z2')  at ({3+4*cos(270)},{4+4*sin(270)}) {}; 
\node[vertexS] (r)  at ({3+2*cos(300)},{4+2*sin(300)}) {}; 
\node[vertex] (r')  at ({3+2*cos(330)},{4+2*sin(330)}) {}; 

\draw[very thick] (cj1)--(cj1'); \draw (cj1)--(x2)--(x1)--(x)--(x')--(x1')--(x2')--(cj1'); 
\draw[very thick] (cj2)--(cj2'); \draw (cj2)--(y2)--(y1)--(y)--(y')--(y1')--(y2')--(cj2'); 
\draw[very thick] (cj3)--(cj3'); \draw (cj3)--(z2)--(z1)--(z)--(z')--(z1')--(z2')--(cj3'); 
\draw (x')--(p)--(p')--(y); \draw (y')--(q)--(q')--(z); \draw (z')--(r)--(r')--(x);
\end{tikzpicture} 
\qquad
\begin{tikzpicture}[scale=.42] 
\node[square] (cj1) at ({3+5*cos(0)},{4+5*sin(0)}) [label=right:\small $c_{j1}$] {};
\node[squareS] (cj1') at ({3+5*cos(30)},{4+5*sin(30)}) [label=right:\small $c_{j1}'$] {};
\node[vertex] (x)  at ({3+2*cos(0)},{4+2*sin(0)}) {}; 
\node[vertexS] (x1)  at ({3+3*cos(0)},{4+3*sin(0)}) {}; 
\node[vertex] (x2)  at ({3+4*cos(0)},{4+4*sin(0)}) {}; 
\node[vertex] (x')  at ({3+2*cos(30)},{4+2*sin(30)}) {}; 
\node[vertexS] (x1')  at ({3+3*cos(30)},{4+3*sin(30)}) {}; 
\node[vertex] (x2')  at ({3+4*cos(30)},{4+4*sin(30)}) {}; 
\node[vertexS] (p)  at ({3+2*cos(60)},{4+2*sin(60)}) {}; 
\node[vertex] (p')  at ({3+2*cos(90)},{4+2*sin(90)}) {}; 

\node[square] (cj2) at ({3+5*cos(120)},{4+5*sin(120)}) [label=left:\small $c_{j2}$] {};
\node[squareS] (cj2') at ({3+5*cos(150)},{4+5*sin(150)}) [label=left:\small $c_{j2}'$] {};
\node[vertex] (y)  at ({3+2*cos(120)},{4+2*sin(120)}) {}; 
\node[vertexS] (y1)  at ({3+3*cos(120)},{4+3*sin(120)}) {}; 
\node[vertex] (y2)  at ({3+4*cos(120)},{4+4*sin(120)}) {}; 
\node[vertexS] (y')  at ({3+2*cos(150)},{4+2*sin(150)}) {}; 
\node[vertex] (y1')  at ({3+3*cos(150)},{4+3*sin(150)}) {}; 
\node[vertex] (y2')  at ({3+4*cos(150)},{4+4*sin(150)}) {}; 
\node[vertex] (q)  at ({3+2*cos(180)},{4+2*sin(180)}) {}; 
\node[vertex] (q')  at ({3+2*cos(210)},{4+2*sin(210)}) {}; 

\node[squareS] (cj3) at ({3+5*cos(240)},{4+5*sin(240)}) [label=left:\small $c_{j3}$] {};
\node[square] (cj3') at ({3+5*cos(270)},{4+5*sin(270)}) [label=right:\small $c_{j3}'$] {};
\node[vertexS] (z)  at ({3+2*cos(240)},{4+2*sin(240)}) {}; 
\node[vertex] (z1)  at ({3+3*cos(240)},{4+3*sin(240)}) {}; 
\node[vertex] (z2)  at ({3+4*cos(240)},{4+4*sin(240)}) {}; 
\node[vertex] (z')  at ({3+2*cos(270)},{4+2*sin(270)}) {}; 
\node[vertexS] (z1')  at ({3+3*cos(270)},{4+3*sin(270)}) {}; 
\node[vertex] (z2')  at ({3+4*cos(270)},{4+4*sin(270)}) {}; 
\node[vertex] (r)  at ({3+2*cos(300)},{4+2*sin(300)}) {}; 
\node[vertexS] (r')  at ({3+2*cos(330)},{4+2*sin(330)}) {}; 

\draw[very thick] (cj1)--(cj1'); \draw (cj1)--(x2)--(x1)--(x)--(x')--(x1')--(x2')--(cj1'); 
\draw[very thick] (cj2)--(cj2'); \draw (cj2)--(y2)--(y1)--(y)--(y')--(y1')--(y2')--(cj2'); 
\draw[very thick] (cj3)--(cj3'); \draw (cj3)--(z2)--(z1)--(z)--(z')--(z1')--(z2')--(cj3'); 
\draw (x')--(p)--(p')--(y); \draw (y')--(q)--(q')--(z); \draw (z')--(r)--(r')--(x);
\end{tikzpicture} 
\caption{An optimal cluster deletion set (black vertices) in $G(C_j)$; left hand: one true ($c_{j1}$) two false ($c_{j2}$ and $c_{j3}$) variables, right hand: two true ($c_{j1}$ and $c_{j2}$) one false ($c_{j3}$) variables.}\label{fig:extendGCj}
\end{center}
\end{figure}

By construction, $S$ has exactly $n\times 2m+m\times 11$ vertices and, for every $i$ and $j$, $G(v_i)-S$ and $G(C_j)-S$ are $P_3$-free. Therefore, since a clause vertex is outside $S$ if and only if the corresponding variable vertex is in $S$, $G-S$ is $P_3$-free. Thus, if $F$ can be satisfied by a nae assignment then $G$ has a cluster deletion set of size at most $2nm+11m$.

In summary, we obtain:
\begin{theorem}\label{thm:bip+deg-3}
\CVD\ is $\NP$-complete even when restricted to bipartite graphs of maximum degree~$3$. 
\end{theorem}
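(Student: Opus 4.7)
The plan is to argue that the construction detailed above realizes a polynomial reduction from \PNAE3SAT\ to \CVD\ on bipartite subcubic graphs, which together with the $\NP$-hardness of \PNAE3SAT\ yields the claim (membership in $\NP$ is routine). The structural guarantees on $G$ are already provided by Fact~\ref{fact:bip}, so everything reduces to proving the single equivalence: $G$ admits a cluster deletion set of size at most $2nm+11m$ if and only if $F$ admits a nae assignment.

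For the forward direction, I would use the size bound together with the lower bounds from Facts~\ref{fact:vi} and~\ref{fact:cj1} (which give at least $2m$ per variable gadget and at least $11$ per clause gadget, summing to exactly the budget) to force equality in every gadget. Fact~\ref{fact:vi}(a) then lets me consistently declare each variable $v_i$ true exactly when all associated vertices $v_{ij}$ lie in~$S$, making the assignment well-defined. To see this assignment is nae, I would invoke Fact~\ref{fact:cj1} to obtain in each clause gadget both a clause vertex $c_{jk}\notin S$ and a clause vertex $c_{jk'}'\notin S$, and then apply Fact~\ref{fact:cj2} to conclude that their corresponding variable vertices must lie in $S$ in one case but lie outside $S$ in the other, yielding both a true and a false literal in every clause.

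Conversely, given a nae assignment, I would construct $S$ explicitly: in each variable gadget take $\{v_{ij},x_{ij}: 1\le j\le m\}$ if $v_i$ is true and $\{v_{ij}',y_{ij}: 1\le j\le m\}$ if $v_i$ is false (these are the optimal $2m$-vertex cluster deletion sets identified in Fact~\ref{fact:vi}), and in each clause gadget add the $11$ vertices indicated in Fig.~\ref{fig:extendGCj}, picking the case (one-true / two-true) that matches the assignment to $c_{j1},c_{j2},c_{j3}$. The main obstacle will be verifying that the inter-gadget edges connecting corresponding variable and clause vertices never leave a surviving induced $P_3$ outside $S$; the key invariant to maintain is that whenever a clause vertex $c$ lies outside $S$, its corresponding variable vertex lies in $S$, which under the nae condition holds by construction and kills every candidate $P_3$ that would straddle a gadget boundary.
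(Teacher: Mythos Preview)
Your proposal is correct and follows essentially the same argument as the paper: the forward direction uses Facts~\ref{fact:vi} and~\ref{fact:cj1} to force tightness in every gadget, then Facts~\ref{fact:vi}(a), \ref{fact:cj1} and~\ref{fact:cj2} to extract a nae assignment, while the converse builds $S$ exactly as the paper does (the $\{v_{ij},x_{ij}\}$ or $\{v_{ij}',y_{ij}\}$ sets in variable gadgets and the Fig.~\ref{fig:extendGCj} patterns in clause gadgets). One phrasing slip to fix: in the forward direction, Fact~\ref{fact:cj2} actually forces the corresponding variable vertices of \emph{both} $c_{jk}\notin S$ and $c_{jk'}'\notin S$ to lie \emph{in} $S$; the ``outside $S$'' conclusion for the unprimed vertex $v_{sj}$ (and hence falsity of that literal) comes only after an additional appeal to Fact~\ref{fact:vi}(b), exactly as the paper does.
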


\subsubsection{$d$-Claw Vertex Deletion is hard in bipartite graphs of maximum degree~$d$}

We first give a polynomial-time reduction from \PNAE3SAT\ to \dclawVD{3}. The case $d>3$ will be easily derived from this case.

Let $F=C_1\land C_2\land\cdots\land C_m$ over $n$ variables $x_1, x_2, \ldots, x_n$, in which each clause $C_j$ consists of three distinct variables. 
We will construct a subcubic bipartite graph $G$ such that $G$ has a claw deletion set of size at most $2mn+16m$ if and only if $F$ admits a nae assignment. The graph $G$ contains a gadget $G(v_i)$ for each variable $v_i$ and a gadget $G(C_j)$ for each clause $C_j$. 

For any $1\le i\le n$ and $1\le j\le m$, we first consider an auxiliary $8$-vertex graph $H_{ij}$ depicted on the left hand side of Fig.~\ref{fig:Hij} which will be useful in building our variable gadget.
 
\begin{figure}[!ht]
\begin{center}
\tikzstyle{vertexS}=[draw,circle,inner sep=2pt,fill=black]
\tikzstyle{squareS}=[draw,rectangle,inner sep=2.5pt,fill=black]
\tikzstyle{subdS}=[draw,circle,inner sep=1.5pt,fill=black]
\tikzstyle{square}=[draw, rectangle,inner sep=2.5pt]
\tikzstyle{vertex}=[draw,circle,inner sep=2pt] 
\tikzstyle{subd}=[draw,circle,inner sep=1.5pt]
\begin{tikzpicture}[scale=.345] 
\node[square] (vij) at (1,1) [label=below:\small $v_{ij}$] {};
\node[square] (vij') at (7,7) [label=above:\small $v_{ij}'$] {};

\node[vertex] (aij1) at (4,1) [label=below:\small $a_{ij}^1$] {};
\node[vertex] (aij2) at (7,1) [label=below:\small $a_{ij}^2$] {}; %
\node[vertex] (aij3) at (10,1) [label=below:\small $a_{ij}^3$] {}; %
\node[vertex] (bij1) at (4,4) [label=left:\small $b_{ij}^1$] {}; %
\node[vertex] (bij2) at (7,4) [label=above:\small $b_{ij}^2$] {}; %
\node[vertex] (bij3) at (10,4) [label=right:\small $b_{ij}^3$] {}; %

\draw (vij)--(aij1)--(aij2)--(aij3)--(bij3)--(bij2)--(bij1)--(aij1);
\draw (aij2)--(bij2);
\draw (bij1)--(vij')--(bij3);
\end{tikzpicture} 
\qquad
\begin{tikzpicture}[scale=.345] 
\node[squareS] (vij) at (1,1) [label=below:\small $v_{ij}$] {};
\node[square] (vij') at (7,7) [label=above:\small $v_{ij}'$] {};

\node[vertex] (aij1) at (4,1) [label=below:\small $a_{ij}^1$] {};
\node[vertex] (aij2) at (7,1) [label=below:\small $a_{ij}^2$] {}; %
\node[vertex] (aij3) at (10,1) [label=below:\small $a_{ij}^3$] {}; %
\node[vertex] (bij1) at (4,4) [label=left:\small $b_{ij}^1$] {}; %
\node[vertexS] (bij2) at (7,4) [label=above:\small $b_{ij}^2$] {}; %
\node[vertex] (bij3) at (10,4) [label=right:\small $b_{ij}^3$] {}; %

\draw (vij)--(aij1)--(aij2)--(aij3)--(bij3)--(bij2)--(bij1)--(aij1);
\draw (aij2)--(bij2);
\draw (bij1)--(vij')--(bij3);
\end{tikzpicture} 
\caption{Left: the auxiliary graph $H_{ij}$. Right: the unique optimal claw deletion set (back vertices) containing $v_{ij}$.}\label{fig:Hij}
\end{center}
\end{figure}
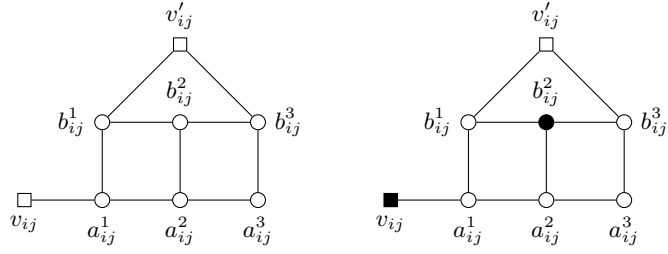

We need the following property of $H_{ij}$ which can be immediately verified:
\begin{fact}\label{fact:Hij}
Any optimal claw deletion set in $H_{ij}$ contains exactly~$2$ vertices. 
Moreover, if $S$ is an optimal claw deletion set containing $v_{ij}$ then $S=\{v_{ij}, b_{ij}^2\}$. 
\end{fact}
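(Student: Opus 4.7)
The plan is to carry out an exhaustive but tiny case analysis on $H_{ij}$, which has only $8$ vertices. Of these, exactly five ($a_{ij}^1,a_{ij}^2,b_{ij}^1,b_{ij}^2,b_{ij}^3$) have degree~$3$; the remaining three vertices have degree at most~$2$ and cannot serve as the center of any claw. I would first enumerate the induced claws by checking, for each degree-$3$ vertex, whether its three neighbors form an independent set. A direct inspection of the edge list shows that this holds at all five such vertices, so $H_{ij}$ contains exactly five induced claws $C_1,\ldots,C_5$.

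Next, for the lower bound, rather than showing vertex-by-vertex that no single vertex hits every $C_i$, it is cleaner to point out two \emph{vertex-disjoint} induced claws. Concretely, the claw centered at $a_{ij}^1$, namely $\{a_{ij}^1,v_{ij},a_{ij}^2,b_{ij}^1\}$, and the claw centered at $b_{ij}^3$, namely $\{b_{ij}^3,a_{ij}^3,b_{ij}^2,v_{ij}'\}$, share no vertex, so any claw deletion set must contain at least one vertex from each, giving size at least~$2$.

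For the matching upper bound I would simply exhibit a concrete $2$-vertex deletion set, e.g.\ $\{a_{ij}^1,b_{ij}^2\}$, and verify that the graph obtained by removing these two vertices has maximum degree~$2$ and hence is trivially claw-free. Combined with the lower bound this pins down the optimum at exactly~$2$, proving the first sentence.

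For the \emph{moreover} part, suppose $S$ is an optimal claw deletion set containing $v_{ij}$. Since $\deg_{H_{ij}}(v_{ij})=1$, the only induced claw meeting $v_{ij}$ is the one centered at $a_{ij}^1$; so the other vertex of $S$ must alone hit the remaining four claws $C_2,C_3,C_4,C_5$ centered at $a_{ij}^2,b_{ij}^1,b_{ij}^2,b_{ij}^3$. A short intersection computation on these four $4$-element sets singles out $b_{ij}^2$ as their unique common element, forcing $S=\{v_{ij},b_{ij}^2\}$; a final check that this pair really is a claw deletion set closes the argument. I do not anticipate any genuine obstacle: the whole proof is finite bookkeeping on an $8$-vertex graph, and the only care needed is to read off the five claws correctly from the figure.
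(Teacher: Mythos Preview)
Your proposal is correct. The paper itself gives no proof beyond the remark that the fact ``can be immediately verified''; your argument supplies exactly that verification via the natural route---listing the five induced claws, exhibiting two disjoint ones for the lower bound, and intersecting $C_2,\ldots,C_5$ to force $b_{ij}^2$ once $v_{ij}$ is chosen---so there is nothing to compare.
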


We now are ready to describe the variable gadget.

\smallskip\noindent
\textbf{Variable gadget.} For each variable $v_i$, take first $m$ auxiliary graphs $H_{i1}$, $H_{i2}$, \ldots, $H_{im}$, one for each clause. 
Then, for all $1\le j<m$, connect the vertex $a_{ij}^3$ in $H_{ij}$ with the vertex $v_{i(j+1)}$ in $H_{i(j+1)}$ by an edge. Last, connect the vertex $a_{im}^3$ in $H_{im}$ with the vertex $v_{i1}$ in $H_{i1}$ by an edge. 
The obtained graph is denoted by $G(v_i)$. The case $m=3$ is shown in Fig.~\ref{fig:clawVDvi}. The $2m$ vertices $v_{ij}$ and $v_{ij}'$, $1\le j\le m$, in $G(v_i)$ are the \emph{variable vertices}. 

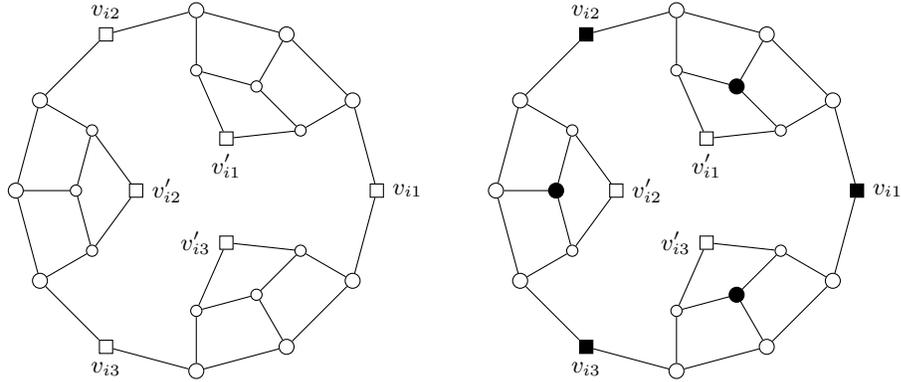
\begin{figure}[!ht]
\begin{center}
\tikzstyle{vertexS}=[draw,circle,inner sep=2pt,fill=black]
\tikzstyle{squareS}=[draw,rectangle,inner sep=2.5pt,fill=black]
\tikzstyle{subdS}=[draw,circle,inner sep=1.5pt,fill=black]
\tikzstyle{square}=[draw, rectangle,inner sep=2.5pt]
\tikzstyle{vertex}=[draw,circle,inner sep=2pt] 
\tikzstyle{subd}=[draw,circle,inner sep=1.5pt]
\begin{tikzpicture}[scale=.4] 

\node[square] (vi1) at ({3+6*cos(0)},{4+6*sin(0)}) [label=right:\small $v_{i1}$] {};
\node[square] (vi1') at ({3+2*cos(60)},{4+2*sin(60)}) [label=below:\small $v_{i1}'$] {};
\node[square] (vi2) at ({3+6*cos(120)},{4+6*sin(120)}) [label=above:\small $v_{i2}$] {};
\node[square] (vi2') at ({3+2*cos(180)},{4+2*sin(180)}) [label=right:\small $v_{i2}'$] {};
\node[square] (vi3) at ({3+6*cos(240)},{4+6*sin(240)}) [label=below:\small $v_{i3}$] {};
\node[square] (vi3') at ({3+2*cos(300)},{4+2*sin(300)}) [label=left:\small $v_{i3}'$] {};

\foreach \angle / \i in {30/1, 60/2, 90/3, 150/4, 180/5, 
        210/6, 270/7, 300/8, 330/9}  
        {
    \node[vertex] (a\i) at ({3+6*cos(\angle)},{4+6*sin(\angle)}) {};
    \node[subd] (b\i) at ({3+4*cos(\angle)},{4+4*sin(\angle)}) {};
} 

\draw (vi1)--(a1)--(a2)--(a3)--(b3)--(b2)--(b1)--(a1);
\draw (b1)--(vi1')--(b3); \draw (a2)--(b2);
\draw (vi2)--(a4)--(a5)--(a6)--(b6)--(b5)--(b4)--(a4);
\draw (b4)--(vi2')--(b6); \draw (a5)--(b5);
\draw (vi3)--(a7)--(a8)--(a9)--(b9)--(b8)--(b7)--(a7);
\draw (b9)--(vi3')--(b7); \draw (a8)--(b8);

\draw (a3)--(vi2); \draw (a6)--(vi3); \draw (a9)--(vi1);
\end{tikzpicture} 
\qquad
\begin{tikzpicture}[scale=.4] 

\node[squareS] (vi1) at ({3+6*cos(0)},{4+6*sin(0)}) [label=right:\small $v_{i1}$] {};
\node[square] (vi1') at ({3+2*cos(60)},{4+2*sin(60)}) [label=below:\small $v_{i1}'$] {};
\node[squareS] (vi2) at ({3+6*cos(120)},{4+6*sin(120)}) [label=above:\small $v_{i2}$] {};
\node[square] (vi2') at ({3+2*cos(180)},{4+2*sin(180)}) [label=right:\small $v_{i2}'$] {};
\node[squareS] (vi3) at ({3+6*cos(240)},{4+6*sin(240)}) [label=below:\small $v_{i3}$] {};
\node[square] (vi3') at ({3+2*cos(300)},{4+2*sin(300)}) [label=left:\small $v_{i3}'$] {};

\node[vertexS]  at ({3+4*cos(60)},{4+4*sin(60)}) {}; 
\node[vertexS]  at ({3+4*cos(180)},{4+4*sin(180)}) {}; 
\node[vertexS]  at ({3+4*cos(300)},{4+4*sin(300)}) {}; 

\foreach \angle / \i in {30/1, 60/2, 90/3, 150/4, 180/5, 
        210/6, 270/7, 300/8, 330/9}  
        {
    \node[vertex] (a\i) at ({3+6*cos(\angle)},{4+6*sin(\angle)}) {};
    \node[subd] (b\i) at ({3+4*cos(\angle)},{4+4*sin(\angle)}) {};
} 

\draw (vi1)--(a1)--(a2)--(a3)--(b3)--(b2)--(b1)--(a1);
\draw (b1)--(vi1')--(b3); \draw (a2)--(b2);
\draw (vi2)--(a4)--(a5)--(a6)--(b6)--(b5)--(b4)--(a4);
\draw (b4)--(vi2')--(b6); \draw (a5)--(b5);
\draw (vi3)--(a7)--(a8)--(a9)--(b9)--(b8)--(b7)--(a7);
\draw (b9)--(vi3')--(b7); \draw (a8)--(b8);

\draw (a3)--(vi2); \draw (a6)--(vi3); \draw (a9)--(vi1);
\end{tikzpicture} 
\caption{The variable gadget $G(v_i)$ in case $m=3$ (left) and an optimal claw deletion set formed by the $2m$ black vertices (right).}
\label{fig:clawVDvi}
\end{center}
\end{figure}

The following properties of the variable gadget will be used: 
\begin{fact}\label{fact:claw-vi1}
$G(v_i)$ admits an optimal cluster deletion set of size $2m$. 
Moreover, if $S$ is an optimal claw deletion set in $G(v_i)$ then the restriction of $S$ on $H_{ij}$ is an optimal claw deletion set in $H_{ij}$.
\end{fact}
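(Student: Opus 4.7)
The plan is to exploit the fact that $G(v_i)$ is assembled from the $m$ vertex-disjoint induced subgraphs $H_{i1},\ldots,H_{im}$ together with the $m$ connecting edges $a_{ij}^3 v_{i(j+1)}$ (indices mod $m$). Each of these connecting edges has a variable vertex $v_{i(j+1)}$ as one endpoint, and every induced claw contained in a single $H_{ij}$ remains a claw in $G(v_i)$. In particular, the restriction of any claw deletion set $S$ of $G(v_i)$ to $V(H_{ij})$ is a claw deletion set of $H_{ij}$.

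For the lower bound, I would combine the previous observation with Fact~\ref{fact:Hij}: $|S\cap V(H_{ij})|\ge 2$ for every $1\le j\le m$, and since the $V(H_{ij})$ are pairwise disjoint, summing gives $|S|\ge 2m$. For the matching upper bound I would exhibit the explicit candidate $S^{*}=\bigcup_{j=1}^{m}\{v_{ij},b_{ij}^{2}\}$, shown in Fig.~\ref{fig:clawVDvi} on the right. By Fact~\ref{fact:Hij} (or by direct inspection of Fig.~\ref{fig:Hij} on the right, where $H_{ij}-\{v_{ij},b_{ij}^{2}\}$ collapses to a $6$-cycle), the restriction of $S^{*}$ to each $H_{ij}$ is a claw deletion set of $H_{ij}$. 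Thus $G(v_i)-S^{*}$ is claw-free inside each $H_{ij}$. To rule out claws that straddle two consecutive $H_{ij}$'s via a connecting edge, note that every connecting edge $a_{ij}^{3}v_{i(j+1)}$ is incident to the variable vertex $v_{i(j+1)}\in S^{*}$, so all connecting edges are destroyed by the deletion. Hence $G(v_i)-S^{*}$ is the disjoint union of six-cycles, therefore claw-free, and $|S^{*}|=2m$ establishes optimality.

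The ``moreover'' statement then follows for free: if $S$ is any optimal claw deletion set in $G(v_i)$, then $|S|=2m$ and $\sum_{j}|S\cap V(H_{ij})|\le 2m$ while each summand is at least $2$ by Fact~\ref{fact:Hij}. Equality must hold term by term, so $|S\cap V(H_{ij})|=2$ for every $j$, and $S\cap V(H_{ij})$ is an optimal claw deletion set of $H_{ij}$.

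I do not anticipate any real obstacle; the argument is entirely routine once the decomposition of $G(v_i)$ into the $H_{ij}$'s is set up. The only point that requires a moment of care is verifying that no ``cross-gadget'' claw survives the deletion of $S^{*}$, which reduces to the trivial observation that every connecting edge has its variable-vertex endpoint in $S^{*}$.
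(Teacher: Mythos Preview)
Your proposal is correct and follows essentially the same approach as the paper: the paper also uses the decomposition of $G(v_i)$ into the disjoint $H_{ij}$'s, invokes Fact~\ref{fact:Hij} for the lower bound, exhibits $\bigcup_j\{v_{ij},b_{ij}^2\}$ for the upper bound, and derives the ``moreover'' part by the same counting argument. Your verification that $G(v_i)-S^*$ is a disjoint union of $6$-cycles (because every connecting edge loses its variable-vertex endpoint) is a nice explicit justification that the paper leaves implicit.
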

\begin{proof}
$G(v_i)$ consists of $m$ disjoint $H_{ij}$, hence, by Fact~\ref{fact:Hij}, any claw deletion set in $G(v_i)$ must contain at least~$2$ vertices from each $H_{ij}$.  
Observe that the union of all $\{v_{ij}, b_{ij}^2\}$, $1\le j\le m$, is a claw deletion set of $G(v_i)$ of size $2m$. 

The second part follows from the first part and Fact~\ref{fact:Hij}. \qed
\end{proof}

\begin{fact}\label{fact:claw-vi2}
Let $S$ be an optimal claw deletion in $G(v_i)$. Then: 
\begin{itemize}
\item[\em (a)] If some $v_{ij}$ is contained in $S$ then $S=\bigcup_{1\le j\le m} \{v_{ij}, b_{ij}^2\}$; 
in particular, all~$v_{ij}$ are in $S$, and all $v_{ij}'$ and their neighbors are outside~$S$.
\item[\em (b)] If some $v_{ij}$ is not contained in $S$ then all $v_{ij}$ are outside $S$.
\end{itemize}
\end{fact}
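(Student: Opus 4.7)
The plan is to reduce both parts to a single ``propagation'' argument that travels cyclically through the gadgets $H_{i1},H_{i2},\ldots,H_{im}$. By Fact~\ref{fact:claw-vi1}, an optimal claw deletion set $S$ in $G(v_i)$ has size exactly $2m$ and its restriction to every $H_{ij}$ is an optimal claw deletion set of size~$2$ in $H_{ij}$. Combined with Fact~\ref{fact:Hij}, this means that whenever $v_{ij}\in S$, we automatically have $S\cap V(H_{ij})=\{v_{ij},b_{ij}^2\}$.

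The core step is then to show: if $v_{ij}\in S$, then $v_{i(j+1)}\in S$ as well (indices mod $m$). The vertex $a_{ij}^3$ has three neighbors in $G(v_i)$, namely $a_{ij}^2$, $b_{ij}^3$, and $v_{i(j+1)}$, obtained by combining its two neighbors inside $H_{ij}$ with the linking edge $a_{ij}^3 v_{i(j+1)}$. Under the assumption $S\cap V(H_{ij})=\{v_{ij},b_{ij}^2\}$, both $a_{ij}^2$ and $b_{ij}^3$ lie outside $S$, and $a_{ij}^3$ itself lies outside $S$. A quick adjacency check in $H_{ij}$ shows that $a_{ij}^2$ and $b_{ij}^3$ are non-adjacent, and each of them lies in a different $H_{i\cdot}$ from $v_{i(j+1)}$, so they are non-adjacent to $v_{i(j+1)}$ as well. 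Hence if $v_{i(j+1)}$ were also outside $S$, the four vertices $a_{ij}^3, a_{ij}^2, b_{ij}^3, v_{i(j+1)}$ would induce a claw in $G(v_i)-S$ centered at $a_{ij}^3$, contradicting that $S$ is a claw deletion set. Thus $v_{i(j+1)}\in S$.

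Applying this step cyclically starting from the given index $j$ yields $v_{ij}\in S$ for \emph{all}\/ $j$. Each such $v_{ij}$ then forces $S\cap V(H_{ij})=\{v_{ij},b_{ij}^2\}$ by Fact~\ref{fact:Hij}. Summing over $j$ accounts for exactly $2m$ vertices, which matches $|S|=2m$, so no further vertex of $G(v_i)$ belongs to~$S$; in particular, every $v_{ij}'$ and each of its neighbors $b_{ij}^1, b_{ij}^3$ lies outside $S$. This is precisely statement~(a). Statement~(b) is the contrapositive of ``some $v_{ij}\in S$ implies all $v_{ij}\in S$'', which is immediate from~(a).

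The main (and essentially only) obstacle is verifying the adjacency claim that produces the induced claw at $a_{ij}^3$; this is a routine inspection of the definition of $H_{ij}$ and of the single linking edge $a_{ij}^3 v_{i(j+1)}$ between consecutive gadgets. Once this local obstruction is established, the rest of the argument is a cyclic propagation combined with a size count.\qed
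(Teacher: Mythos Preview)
Your proof is correct and follows essentially the same route as the paper: use Fact~\ref{fact:claw-vi1} and Fact~\ref{fact:Hij} to pin down $S\cap V(H_{ij})=\{v_{ij},b_{ij}^2\}$, then observe that the surviving path $a_{ij}^2a_{ij}^3b_{ij}^3$ together with $v_{i(j+1)}$ forms an induced claw at $a_{ij}^3$, forcing $v_{i(j+1)}\in S$, and propagate cyclically. The paper phrases the key step more tersely (``the $P_3$: $a_{ij}^2a_{ij}^3b_{ij}^3$ is outside $S$, implying $v_{i(j+1)}\in S$''), while you spell out the adjacency check explicitly, but the argument is the same.
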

\begin{proof}
(a): By Fact~\ref{fact:claw-vi1}, the restriction of $S$ on $H_{ij}$ is an optimal claw deletion set in $H_{ij}$, which is $\{v_{ij}, b_{ij}^2\}$ by Fact~\ref{fact:Hij}. 
In particular, $v_{ij}'\not\in S$, and the $P_3$: $a_{ij}^2a_{ij}^3b_{ij}^3$ is outside $S$ as well, implying $v_{i(j+1)}\in S$. 
Similarly, by Fact~\ref{fact:claw-vi1} again, the restriction of $S$ on $H_{i(j+1)}$ is an optimal claw deletion set in $H_{i(j+1)}$, which is $\{v_{i(j+1)}, b_{i(j+1)}^2\}$ by Fact~\ref{fact:Hij}. 
Apply Fact~\ref{fact:Hij} for $H_{i(j+1)}$ we have $v_{i(j+1)}'\not\in S$, $v_{i(j+2)}\in S $ and so on. 

(b) follows from part (a). 
\qed 
\end{proof}

Before presenting the clause gadget, we need other auxiliary graphs $A_{jk}$, $A'_{jk}$ for any $1\le j\le m$ and $1\le k\le 3$ depicted in Fig.~\ref{fig:Ajk}.

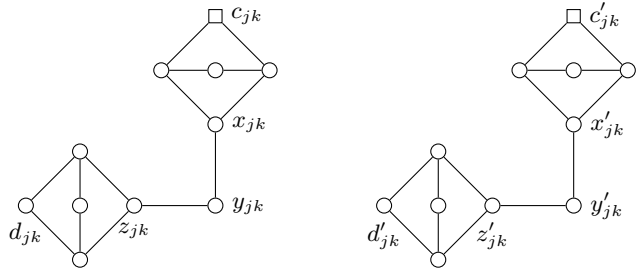
\begin{figure}[!ht]
\tikzstyle{vertexS}=[draw,circle,inner sep=2pt,fill=black]
\tikzstyle{squareS}=[draw,rectangle,inner sep=2.5pt,fill=black]
\tikzstyle{subdS}=[draw,circle,inner sep=1.5pt,fill=black]
\tikzstyle{square}=[draw, rectangle,inner sep=2.5pt]
\tikzstyle{vertex}=[draw,circle,inner sep=2pt] 
\tikzstyle{subd}=[draw,circle,inner sep=1.5pt]
\begin{center}
\begin{tikzpicture}[scale=.36] 
\node[square] (cjk) at (8,10) [label=right:\small $c_{jk}$] {};
\node[vertex] (p1) at (6,8) {};
\node[vertex] (p2) at (8,8) {};
\node[vertex] (p3) at (10,8) {};
\node[vertex] (xjk) at (8,6) [label=right:\small $x_{jk}$] {};  
\node[vertex] (yjk) at (8,3) [label=right:\small $y_{jk}$] {}; %
\node[vertex] (zjk) at (5,3) [label=below:\small $z_{jk}$] {};
\node[vertex] (q1) at (3,1) {};
\node[vertex] (q2) at (3,3) {};
\node[vertex] (q3) at (3,5) {};
\node[vertex] (djk) at (1,3) [label=below:\small $d_{jk}$] {};

\draw (cjk)--(p1)--(xjk)--(yjk)--(zjk)--(q1)-- (djk)--(q3)--(zjk);
\draw (cjk)--(p3)--(xjk);
\draw (p1)--(p2)--(p3);
\draw (q1)--(q2)--(q3);

\end{tikzpicture} 
\qquad\quad
\begin{tikzpicture}[scale=.36] 
\node[square] (cjk) at (8,10) [label=right:\small $c_{jk}'$] {};
\node[vertex] (p1) at (6,8) {};
\node[vertex] (p2) at (8,8) {};
\node[vertex] (p3) at (10,8) {};
\node[vertex] (xjk) at (8,6) [label=right:\small $x_{jk}'$] {};  
\node[vertex] (yjk) at (8,3) [label=right:\small $y_{jk}'$] {}; %
\node[vertex] (zjk) at (5,3) [label=below:\small $z_{jk}'$] {};
\node[vertex] (q1) at (3,1) {};
\node[vertex] (q2) at (3,3) {};
\node[vertex] (q3) at (3,5) {};
\node[vertex] (djk) at (1,3) [label=below:\small $d_{jk}'$] {};

\draw (cjk)--(p1)--(xjk)--(yjk)--(zjk)--(q1)-- (djk)--(q3)--(zjk);
\draw (cjk)--(p3)--(xjk);
\draw (p1)--(p2)--(p3);
\draw (q1)--(q2)--(q3);
\end{tikzpicture} 
\caption{The auxiliary graphs $A_{jk}$ (left) and $A'_{jk}$ (right).}\label{fig:Ajk}
\end{center}
\end{figure}

Observing that any optimal claw deletion set in the complete bipartite graph $K_{2,3}$ consists of one degree-$2$ vertex and that $A_{jk}$ contains  two disjoint $K_{2,3}$, the following fact follows immediately:
\begin{fact}\label{fact:Ajk}
Any claw deletion set in $A_{jk}$ has at least~$2$ vertices; $\{x_{jk}, z_{jk}\}$ is the only optimal claw deletion set of size~$2$. 
\end{fact}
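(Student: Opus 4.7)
The plan is to identify a small family of induced claws in $A_{jk}$ whose common intersection is $\{x_{jk}\}$ (respectively $\{z_{jk}\}$), forcing these two vertices into any size-$2$ solution, and separately establish the lower bound from the two disjoint $K_{2,3}$'s noted in the preamble.

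First, the lower bound $|S|\ge 2$ follows immediately from the observation in the statement: $A_{jk}$ contains two vertex-disjoint induced copies of $K_{2,3}$, namely the one with parts $\{p_1,p_3\}$ and $\{c_{jk},p_2,x_{jk}\}$, and the symmetric one with parts $\{q_1,q_3\}$ and $\{d_{jk},q_2,z_{jk}\}$. Each contains an induced claw centered at any of its degree-$3$ vertices (for instance $p_1$ together with $c_{jk},p_2,x_{jk}$), so any claw deletion set in $A_{jk}$ must hit each half separately.

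For the uniqueness of $\{x_{jk},z_{jk}\}$ among optimal solutions, I would suppose $|S|=2$ and write $S=\{u,w\}$ with $u$ in the upper $K_{2,3}$ and $w$ in the lower. Then $u$ alone must destroy every induced claw lying entirely in the upper part of $A_{jk}$. Three such claws are $C_1=\{p_1,c_{jk},x_{jk},p_2\}$ and $C_2=\{p_3,c_{jk},x_{jk},p_2\}$ arising from the upper $K_{2,3}$, together with $C_3=\{x_{jk},p_1,p_3,y_{jk}\}$, which is an induced claw because $y_{jk}$ is adjacent only to $x_{jk}$ and $z_{jk}$ and $p_1,p_3$ are non-adjacent. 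A direct set-intersection computation gives $C_1\cap C_2\cap C_3=\{x_{jk}\}$, forcing $u=x_{jk}$. The symmetric argument, applied to the lower $K_{2,3}$ and the analogous claw $\{z_{jk},q_1,q_3,y_{jk}\}$, forces $w=z_{jk}$.

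Finally, it remains to verify that $\{x_{jk},z_{jk}\}$ really is a claw deletion set; this is routine, since after deleting these two vertices the graph decomposes into two induced $4$-cycles $c_{jk}p_1p_2p_3$ and $d_{jk}q_1q_2q_3$ plus the isolated vertex $y_{jk}$, none of which contain a claw. The only (minor) subtlety is to remember to include the claw $C_3$ in the argument: it sits outside either $K_{2,3}$ and uses the bridging vertex $y_{jk}$, and without it one could mistakenly take $u\in\{c_{jk},p_2\}$ instead of $x_{jk}$.
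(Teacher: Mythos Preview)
Your proof is correct and follows the same line as the paper's: both arguments rest on the two disjoint copies of $K_{2,3}$ in $A_{jk}$ to get the lower bound and then pin down the single degree-$2$ vertex in each half that must be deleted. The paper compresses the uniqueness step into the remark that an optimal claw deletion set of $K_{2,3}$ is a single degree-$2$ vertex, whereas you make explicit the extra ingredient the paper leaves to the reader, namely the bridging claw $C_3=\{x_{jk},p_1,p_3,y_{jk}\}$ (and its lower analogue) that eliminates the candidates $c_{jk}$ and $p_2$; your remark at the end about this subtlety is exactly the point.
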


Note that Fact~\ref{fact:Ajk} holds accordingly for $A_{jk}'$. 
From the auxiliary graphs $A_{jk}$ and $A'_{jk}$, we construct other auxiliary graphs $H_j$ and $H_j'$, $1\le j\le m$, as follows. 
$H_j$ is obtained from $A_{j1}, A_{j2}$ and $A_{j3}$ by adding three additional edges $d_{j1}y_{j2}$, $d_{j2}y_{j3}$ and $d_{j3}y_{j1}$. $H_j'$ is similarly defined; see also Fig.~\ref{fig:Hj}. 

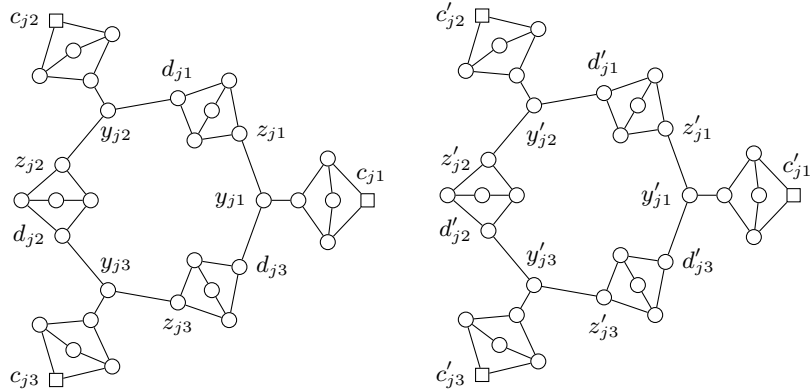
\begin{figure}[!ht]
\begin{center}
\tikzstyle{vertexS}=[draw,circle,inner sep=2pt,fill=black]
\tikzstyle{squareS}=[draw,rectangle,inner sep=2.5pt,fill=black]
\tikzstyle{subdS}=[draw,circle,inner sep=1.5pt,fill=black]
\tikzstyle{square}=[draw, rectangle,inner sep=2.5pt]
\tikzstyle{vertex}=[draw,circle,inner sep=2pt] 
\tikzstyle{subd}=[draw,circle,inner sep=1.5pt]
\begin{tikzpicture}[scale=.46] 

\node[square] (cj1) at ({3+6*cos(0)},{4+6*sin(0)}) [label=above:\small $~c_{j1}$] {};
\node[vertex] (pj12) at ({3+5*cos(0)},{4+5*sin(0)}) {}; 
\node[vertex] (xj1) at ({3+4*cos(0)},{4+4*sin(0)}) {}; 
\node[vertex] (yj1) at ({3+3*cos(0)},{4+3*sin(0)}) [label=left:\small $y_{j1}$] {};
\node[vertex] (pj11) at ({3+5*cos(14)},{4+5*sin(14)}) {}; 
\node[vertex] (pj13) at ({3+5*cos(346)},{4+5*sin(346)}) {}; 
\node[vertex] (zj1) at ({3+3*cos(40)},{4+3*sin(40)}) [label=right:\small $z_{j1}$] {};
\node[vertex] (dj1) at ({3+3*cos(80)},{4+3*sin(80)}) [label=above:\small $d_{j1}$] {};
\node[vertex] (qj11) at ({3+4*cos(60)},{4+4*sin(60)}) {}; 
\node[vertex] (qj12) at ({3+3*cos(60)},{4+3*sin(60)}) {}; 
\node[vertex] (qj13) at ({3+2*cos(60)},{4+2*sin(60)}) {}; 

\node[square] (cj2) at ({3+6*cos(120)},{4+6*sin(120)}) [label=left:\small $c_{j2}$] {};
\node[vertex] (pj22) at ({3+5*cos(120)},{4+5*sin(120)}) {}; 
\node[vertex] (xj2) at ({3+4*cos(120)},{4+4*sin(120)}) {}; 
\node[vertex] (yj2) at ({3+3*cos(120)},{4+3*sin(120)}) [label=below:\small $~~y_{j2}$] {};
\node[vertex] (pj21) at ({3+5*cos(134)},{4+5*sin(134)}) {}; 
\node[vertex] (pj23) at ({3+5*cos(106)},{4+5*sin(106)}) {}; 
\node[vertex] (zj2) at ({3+3*cos(160)},{4+3*sin(160)}) [label=left:\small $z_{j2}$] {};
\node[vertex] (dj2) at ({3+3*cos(200)},{4+3*sin(200)}) [label=left:\small $d_{j2}$] {};
\node[vertex] (qj21) at ({3+4*cos(180)},{4+4*sin(180)}) {}; 
\node[vertex] (qj22) at ({3+3*cos(180)},{4+3*sin(180)}) {}; 
\node[vertex] (qj23) at ({3+2*cos(180)},{4+2*sin(180)}) {}; 

\node[square] (cj3) at ({3+6*cos(240)},{4+6*sin(240)}) [label=left:\small $c_{j3}$] {};
\node[vertex] (pj32) at ({3+5*cos(240)},{4+5*sin(240)}) {}; 
\node[vertex] (xj3) at ({3+4*cos(240)},{4+4*sin(240)}) {}; 
\node[vertex] (yj3) at ({3+3*cos(240)},{4+3*sin(240)}) [label=above:\small $~~y_{j3}$] {};
\node[vertex] (pj31) at ({3+5*cos(226)},{4+5*sin(226)}) {}; 
\node[vertex] (pj33) at ({3+5*cos(254)},{4+5*sin(254)}) {}; 
\node[vertex] (zj3) at ({3+3*cos(280)},{4+3*sin(280)}) [label=below:\small $z_{j3}$] {};
\node[vertex] (dj3) at ({3+3*cos(320)},{4+3*sin(320)}) [label=right:\small $d_{j3}$] {};
\node[vertex] (qj31) at ({3+4*cos(300)},{4+4*sin(300)}) {}; 
\node[vertex] (qj32) at ({3+3*cos(300)},{4+3*sin(300)}) {}; 
\node[vertex] (qj33) at ({3+2*cos(300)},{4+2*sin(300)}) {}; 

\draw (cj1)--(pj11)--(xj1)--(yj1)--(zj1)--(qj11)-- (dj1)--(qj13)--(zj1);
\draw (cj1)--(pj13)--(xj1);
\draw (pj11)--(pj12)--(pj13);
\draw (qj11)--(qj12)--(qj13);

\draw (cj2)--(pj21)--(xj2)--(yj2)--(zj2)--(qj21)-- (dj2)--(qj23)--(zj2);
\draw (cj2)--(pj23)--(xj2);
\draw (pj21)--(pj22)--(pj23);
\draw (qj21)--(qj22)--(qj23);

\draw (cj3)--(pj31)--(xj3)--(yj3)--(zj3)--(qj31)-- (dj3)--(qj33)--(zj3);
\draw (cj3)--(pj33)--(xj3);
\draw (pj31)--(pj32)--(pj33);
\draw (qj31)--(qj32)--(qj33);

\draw (dj1)--(yj2); \draw (dj2)--(yj3); \draw (dj3)--(yj1);
\end{tikzpicture} 
\quad
\begin{tikzpicture}[scale=.46] 

\node[square] (cj1) at ({3+6*cos(0)},{4+6*sin(0)}) [label=above:\small $~c_{j1}'$] {};
\node[vertex] (pj12) at ({3+5*cos(0)},{4+5*sin(0)}) {}; 
\node[vertex] (xj1) at ({3+4*cos(0)},{4+4*sin(0)}) {}; 
\node[vertex] (yj1) at ({3+3*cos(0)},{4+3*sin(0)}) [label=left:\small $y_{j1}'$] {};
\node[vertex] (pj11) at ({3+5*cos(14)},{4+5*sin(14)}) {}; 
\node[vertex] (pj13) at ({3+5*cos(346)},{4+5*sin(346)}) {}; 
\node[vertex] (zj1) at ({3+3*cos(40)},{4+3*sin(40)}) [label=right:\small $z_{j1}'$] {};
\node[vertex] (dj1) at ({3+3*cos(80)},{4+3*sin(80)}) [label=above:\small $d_{j1}'$] {};
\node[vertex] (qj11) at ({3+4*cos(60)},{4+4*sin(60)}) {}; 
\node[vertex] (qj12) at ({3+3*cos(60)},{4+3*sin(60)}) {}; 
\node[vertex] (qj13) at ({3+2*cos(60)},{4+2*sin(60)}) {}; 

\node[square] (cj2) at ({3+6*cos(120)},{4+6*sin(120)}) [label=left:\small $c_{j2}'$] {};
\node[vertex] (pj22) at ({3+5*cos(120)},{4+5*sin(120)}) {}; 
\node[vertex] (xj2) at ({3+4*cos(120)},{4+4*sin(120)}) {}; 
\node[vertex] (yj2) at ({3+3*cos(120)},{4+3*sin(120)}) [label=below:\small $~~y_{j2}'$] {};
\node[vertex] (pj21) at ({3+5*cos(134)},{4+5*sin(134)}) {}; 
\node[vertex] (pj23) at ({3+5*cos(106)},{4+5*sin(106)}) {}; 
\node[vertex] (zj2) at ({3+3*cos(160)},{4+3*sin(160)}) [label=left:\small $z_{j2}'$] {};
\node[vertex] (dj2) at ({3+3*cos(200)},{4+3*sin(200)}) [label=left:\small $d_{j2}'$] {};
\node[vertex] (qj21) at ({3+4*cos(180)},{4+4*sin(180)}) {}; 
\node[vertex] (qj22) at ({3+3*cos(180)},{4+3*sin(180)}) {}; 
\node[vertex] (qj23) at ({3+2*cos(180)},{4+2*sin(180)}) {}; 

\node[square] (cj3) at ({3+6*cos(240)},{4+6*sin(240)}) [label=left:\small $c_{j3}'$] {};
\node[vertex] (pj32) at ({3+5*cos(240)},{4+5*sin(240)}) {}; 
\node[vertex] (xj3) at ({3+4*cos(240)},{4+4*sin(240)}) {}; 
\node[vertex] (yj3) at ({3+3*cos(240)},{4+3*sin(240)}) [label=above:\small $~~y_{j3}'$] {};
\node[vertex] (pj31) at ({3+5*cos(226)},{4+5*sin(226)}) {}; 
\node[vertex] (pj33) at ({3+5*cos(254)},{4+5*sin(254)}) {}; 
\node[vertex] (zj3) at ({3+3*cos(280)},{4+3*sin(280)}) [label=below:\small $z_{j3}'$] {};
\node[vertex] (dj3) at ({3+3*cos(320)},{4+3*sin(320)}) [label=right:\small $d_{j3}'$] {};
\node[vertex] (qj31) at ({3+4*cos(300)},{4+4*sin(300)}) {}; 
\node[vertex] (qj32) at ({3+3*cos(300)},{4+3*sin(300)}) {}; 
\node[vertex] (qj33) at ({3+2*cos(300)},{4+2*sin(300)}) {}; 

\draw (cj1)--(pj11)--(xj1)--(yj1)--(zj1)--(qj11)-- (dj1)--(qj13)--(zj1);
\draw (cj1)--(pj13)--(xj1);
\draw (pj11)--(pj12)--(pj13);
\draw (qj11)--(qj12)--(qj13);

\draw (cj2)--(pj21)--(xj2)--(yj2)--(zj2)--(qj21)-- (dj2)--(qj23)--(zj2);
\draw (cj2)--(pj23)--(xj2);
\draw (pj21)--(pj22)--(pj23);
\draw (qj21)--(qj22)--(qj23);

\draw (cj3)--(pj31)--(xj3)--(yj3)--(zj3)--(qj31)-- (dj3)--(qj33)--(zj3);
\draw (cj3)--(pj33)--(xj3);
\draw (pj31)--(pj32)--(pj33);
\draw (qj31)--(qj32)--(qj33);

\draw (dj1)--(yj2); \draw (dj2)--(yj3); \draw (dj3)--(yj1);
\end{tikzpicture} 
\caption{The auxiliary graph $H_j$ (left) and $H_j'$ (right).}\label{fig:Hj}
\end{center}
\end{figure} 

\begin{fact}\label{fact:Hj}
\begin{itemize}
\item[\em (a)] 
Any claw deletion set in $H_j$ has at least $8$ vertices. 

\item[\em (b)] For any non-empty proper subset $T\subset\{c_{j1}, c_{j2}, c_{j3}\}$ there is an optimal claw deletion set of size~$8$ in $H_j$ that contains $T$. 

\item[\em (c)] No optimal claw deletion set in $H_j$ contains all $c_{j1}, c_{j2}$ and $c_{j3}$, as well as a neighbor of any $c_{jk}$. 
\end{itemize}
\end{fact}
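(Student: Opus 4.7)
My plan is to first catalogue the induced claws of $H_j$. Inside each block $A_{jk}$ the claws are exactly those implicit in Fact~\ref{fact:Ajk}; the three extra edges $d_{j1}y_{j2}$, $d_{j2}y_{j3}$, $d_{j3}y_{j1}$ raise $y_{jk}$ and $d_{jk}$ to degree $3$ and thereby create six new claws, one centered at each $y_{jk}$ with leaves $\{x_{jk},z_{jk},d_{j(k-1)}\}$, and one centered at each $d_{jk}$ with leaves $y_{j(k+1)}$ together with the two $A_{jk}$-neighbors of $d_{jk}$ (indices mod $3$). The structural fact I shall use repeatedly is that the three $d_{jk}$-centered claws are pairwise vertex-disjoint, with the $d_{jk}$-claw entirely contained in $A_{jk}\cup\{y_{j(k+1)}\}$.

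For~(a), Fact~\ref{fact:Ajk} gives $|S\cap A_{jk}|\ge 2$. If $|S|\le 7$, then by pigeonhole at least two blocks satisfy $|S\cap A_{jk}|=2$, and by Fact~\ref{fact:Ajk} each such restriction equals the canonical set $\{x_{jk},z_{jk}\}$. Since there are only three blocks arranged cyclically, two of the canonical blocks must be cyclically consecutive, and the $d$-claw joining them is then vertex-disjoint from $S$, a contradiction. Hence $|S|\ge 8$.

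For~(b), by the cyclic symmetry of $H_j$ under $k\mapsto k+1$ it suffices to treat $T=\{c_{j1}\}$ and $T=\{c_{j1},c_{j2}\}$, for which I take
\[
S_1=\{c_{j1},y_{j1},d_{j1}\}\cup\{x_{j2},z_{j2},d_{j2}\}\cup\{x_{j3},z_{j3}\},
\]
\[
S_2=\{c_{j1},y_{j1},d_{j1}\}\cup\{c_{j2},y_{j2},d_{j2}\}\cup\{x_{j3},z_{j3}\}.
\]
Each has size~$8$. Verification is routine: inside an ``enlarged'' block $\{c_{jk},y_{jk},d_{jk}\}$, the vertex $c_{jk}$ kills the two claws centered at the $A_{jk}$-neighbors of $c_{jk}$, $y_{jk}$ kills the $x_{jk}$- and $z_{jk}$-claws, and $d_{jk}$ kills the two claws centered at the $A_{jk}$-neighbors of $d_{jk}$; the canonical block $\{x_{j3},z_{j3}\}$ handles $A_{j3}$ by Fact~\ref{fact:Ajk}; and the six new $y$- and $d$-claws are each seen by inspection to be hit.

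For~(c) I prove two separate statements. If $\{c_{j1},c_{j2},c_{j3}\}\subseteq S$, then in each block $c_{jk}$ kills only the two claws centered at its $A_{jk}$-neighbors, while among the remaining four claws the $x_{jk}$-claw and any claw centered at an $A_{jk}$-neighbor of $d_{jk}$ are already vertex-disjoint, so $|S\cap A_{jk}|\ge 3$ per block, forcing $|S|\ge 9$. For the second statement, suppose a neighbor $p$ of $c_{j1}$ lies in $S$; a direct analysis shows that any size-$3$ completion of $S\cap A_{j1}$ containing $p$ has the form $\{p,U,z_{j1}\}$ where $U$ kills the claw centered at the other neighbor of $c_{j1}$, so in particular neither $y_{j1}$ nor any $A_{j1}$-vertex of the $d_{j1}$-claw is in $S\cap A_{j1}$. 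Since the canonical restriction $\{x_{jl},z_{jl}\}$ misses both $y_{jl}$ and the $A_{jl}$-vertices of the $d_{jl}$-claw, covering the $d_{j3}$- and $d_{j1}$-claws then forces $|S\cap A_{j3}|\ge 3$ and $|S\cap A_{j2}|\ge 3$ respectively, giving $|S|\ge 9$; the cases $|S\cap A_{j1}|\ge 4$ are dispatched by the same $d$-claw disjointness argument as in~(a). The main obstacle is this last case analysis, which requires careful bookkeeping of which $d$-claw is forced to be covered from which block under every possible distribution of $|S\cap A_{jk}|$.
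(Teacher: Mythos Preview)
Your proof is correct and follows essentially the same strategy as the paper: both rely on Fact~\ref{fact:Ajk} to pin down size-$2$ restrictions as $\{x_{jk},z_{jk}\}$, and both exploit the three $d_{jk}$-centered claws (which straddle consecutive blocks) to force at least two blocks to have restriction size~$\ge 3$. Your explicit witness sets in~(b) differ from the paper's but are equally valid (and arguably more uniform), and in~(c) you organize the neighbor-of-$c_{jk}$ argument by casing on $|S\cap A_{j1}|$ whereas the paper first fixes which block has size~$2$; the underlying contradictions via the $d_{j3}$- and $d_{j1}$-claws are identical.
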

\begin{proof}
(a): Observe that, for each $1\le k\le 3$, $d_{jk}$ and its neighbors induce a claw which does not contain an $x$ nor a $z$-vertex. 
Since a vertex of this claw must belong to any claw deletion set $S$, we have with Fact~\ref{fact:Ajk}, that two of the restrictions $S_{j1}, S_{j2}$ and $S_{j3}$ of $S$ on $A_{j1}, A_{j2}$ and $A_{j3}$, respectively, have size at least $3$. Hence $|S|\ge |S_{j1}|+|S_{j2}|+|S_{j3}|\ge 8$.

(b): 
$\{c_{j1}$, $y_{j1}$, $d_{j1}$, $x_{j2}$, $z_{j2}$, $x_{j3}$, $y_{j3}$, $d_{j3}\}$ is an optimal claw deletion set containing $c_{j1}$; similar for $c_{j2}$ and for $c_{j3}$.  
$\{x_{j1}$, $z_{j1}$, $c_{j2}$, $y_{j2}$, $d_{j2}$, $c_{j3}$, $y_{j3}$, $d_{j3}\}$ is an optimal claw deletion set containing $c_{j2}$ and $c_{j3}$; similar for the pairs $c_{j1}$, $c_{j2}$ and $c_{j1}$, $c_{j3}$.   

(c): If all $c_{j1}, c_{j2}$ and $c_{j3}$ are in an optimal claw deletion set $S$ then, by Fact~\ref{fact:Ajk}, $S$ has at least $3$ vertices in each $A_{jk}$, and hence $|S|\ge 9$, contradicting (b). For the last part, let $S_{j1}, S_{j2}$ and $S_{j3}$ be the restrictions of $S$ on $A_{j1}, A_{j2}$ and $A_{j3}$, respectively. 
By (a) and (b), one of these sets is of size $2$ and the other have size $3$. Let $|S_{j1}|=2$ and $|S_{j2}|= 3, |S_{j3}|= 3$, say. 
By Fact~\ref{fact:Ajk}, $S_{j1}=\{x_{j1}, z_{j2}\}$. Then $y_{j2}$ must belong to $S_{j2}$ and it can be checked that $S_{j2}$ cannot contain any neighbor of $c_{j2}$. 
Suppose $S_{j3}$ contains a neighbor of $c_{j3}$. It follows that $d_{j3}$ and its two neighbors in $A_{j3}$ are not in $S_{j3}$, and thus $d_{j3}$ is the center of a claw (containing $y_{j1}$) outside $S$, a contradiction.   
\qed
\end{proof}

Note that Fact~\ref{fact:Hj} holds accordingly for $H_j'$. 
We are now ready to describe the clause gadget. 

\textbf{Clause gadget.} For each clause $C_j$ consisting of variables $c_{j1}, c_{j2}$ and $c_{j3}$, let 
 $G(C_j)$ be the graph consisting of two connected components, $H_j$ and $H_j'$. We call the six vertices $c_{jk}$ and $c_{jk}'$, $1\le k\le 3$, the \emph{clause vertices}. 

From Fact~\ref{fact:Hj}, we immediately have:

\begin{fact}\label{fact:claw-cj}
$G(C_j)$ admits an optimal claw deletion set of $16$ vertices. 
No optimal claw deletion set in $G(C_j)$ contains all $c_{j1}, c_{j2}$ and $c_{j3}$, or all $c_{j1}', c_{j2}'$ and $c_{j3}'$, or a neighbor of any $c_{jk}$ or a neighbor of any $c_{jk}'$. 
\end{fact}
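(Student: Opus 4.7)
The plan is to derive Fact~\ref{fact:claw-cj} directly from Fact~\ref{fact:Hj}, exploiting the crucial structural feature that $G(C_j)$ is, by construction, the disjoint union of its two connected components $H_j$ and $H_j'$. Because the two components share no vertex, every claw deletion set $S$ of $G(C_j)$ splits uniquely as $S = S_j \cup S_j'$ with $S_j\subseteq V(H_j)$ and $S_j'\subseteq V(H_j')$, and conversely the union of any claw deletion set of $H_j$ with any claw deletion set of $H_j'$ is a claw deletion set of $G(C_j)$. Consequently the minimum size decomposes additively.

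First, I would establish the size bound. By Fact~\ref{fact:Hj}(a), any claw deletion set of $H_j$ has at least $8$ vertices, and the analogous statement holds for $H_j'$; hence any claw deletion set of $G(C_j)$ has at least $16$ vertices. For the matching upper bound, Fact~\ref{fact:Hj}(b) provides an $8$-vertex claw deletion set in $H_j$ (take any single $c_{jk}\in T$) and likewise in $H_j'$; their union is a claw deletion set of $G(C_j)$ of size exactly $16$, so the optimum is $16$.

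Second, I would prove the structural restrictions. Let $S$ be any optimal claw deletion set of $G(C_j)$, and write $S_j = S\cap V(H_j)$ and $S_j' = S\cap V(H_j')$. From $|S|=16$ together with the lower bounds $|S_j|\ge 8$ and $|S_j'|\ge 8$ of Fact~\ref{fact:Hj}(a), we must have $|S_j|=|S_j'|=8$; in particular, $S_j$ is an optimal claw deletion set of $H_j$, and $S_j'$ is an optimal claw deletion set of $H_j'$. Now Fact~\ref{fact:Hj}(c) applied to $H_j$ forbids $S_j$ from containing all three of $c_{j1}, c_{j2}, c_{j3}$, and also forbids $S_j$ from containing any neighbor of any $c_{jk}$; applying the analogous statement to $H_j'$ handles the primed vertices. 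This yields all four prohibitions stated in the fact.

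I do not foresee a real obstacle here: the entire argument is a direct transfer along the disjoint-union decomposition, and all the combinatorial content has already been isolated in Fact~\ref{fact:Hj}. The only mild point to state carefully is the observation that optimality of $S$ in $G(C_j)$ forces each of $S_j$ and $S_j'$ to be optimal in its own component, which is immediate from the additivity of the size lower bound.
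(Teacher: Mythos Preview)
Your proposal is correct and matches the paper's approach: the paper states that Fact~\ref{fact:claw-cj} follows immediately from Fact~\ref{fact:Hj}, and your argument spells out exactly this derivation via the disjoint-union decomposition $G(C_j)=H_j\cup H_j'$, including the key observation that optimality of $S$ forces each restriction $S_j,S_j'$ to be optimal in its component.
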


Finally, the graph $G$ is obtained by connecting the variable and clause gadgets as follows: if variable $v_i$ appears in clause $C_j$, i.e., $v_i=c_{jk}$ for some $k\in\{1,2,3\}$, then 

\begin{itemize}
\item connect the variable vertex $v_{ij}$ in $G(v_i)$ and the clause vertex $c_{jk}$ in $G(C_j)$ by an edge; $v_{ij}$ is the \emph{corresponding variable vertex} of the clause vertex $c_{jk}$, and
\item connect the variable vertex $v_{ij}'$ in $G(v_i)$ and the clause vertex $c_{jk}'$ in $G(C_j)$ by an edge; $v_{ij}'$ is the \emph{corresponding variable vertex} of the clause vertex $c_{jk}'$.
\end{itemize} 


\medskip
\begin{fact}\label{fact:claw-bip}
$G$ has maximum degree~$3$ and is bipartite.
\end{fact}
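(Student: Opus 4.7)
The plan is a routine verification, parallel to Fact~\ref{fact:bip}, but with two tasks in parallel: bounding the maximum degree and exhibiting a bipartition.

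For the degree bound, I would first inspect the building blocks. In each $H_{ij}$ every vertex has degree at most~$3$, and the only vertices with spare capacity are $v_{ij}$ (degree~$1$), $v_{ij}'$ (degree~$2$) and $a_{ij}^3$ (degree~$2$). The $H_{ij}$'s are glued into $G(v_i)$ by the edges $a_{ij}^3 v_{i(j+1)}$ (indices mod~$m$), which raises $v_{ij}$ to degree~$2$ and $a_{ij}^3$ to degree~$3$, while $v_{ij}'$ is left untouched. In each $A_{jk}$ the only vertices of degree less than~$3$ are $c_{jk}$, $p_2$, $q_2$, $y_{jk}$ and $d_{jk}$; the extra edges $d_{jk} y_{j(k+1)}$ used to form $H_j$ (and the analogous ones in $H_j'$) push $d_{jk}$ and $y_{jk}$ up to degree~$3$, leaving $c_{jk},c_{jk}'$ still at degree~$2$. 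Finally, each variable vertex $v_{ij}$ or $v_{ij}'$ gains at most one connecting edge into a clause gadget, and each clause vertex $c_{jk}$ or $c_{jk}'$ gains exactly one such edge, so no vertex ever exceeds degree~$3$.

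For bipartiteness, I would supply explicit $2$-colorings of the building blocks that merge consistently. For $H_{ij}$, put $\{v_{ij}, a_{ij}^2, b_{ij}^1, b_{ij}^3\}$ on one side and $\{a_{ij}^1, a_{ij}^3, b_{ij}^2, v_{ij}'\}$ on the other; a direct edge-by-edge check confirms this is a proper bipartition, and crucially $v_{ij}$ and $a_{ij}^3$ lie on opposite sides. Hence each glueing edge $a_{ij}^3 v_{i(j+1)}$ crosses the bipartition, yielding a bipartition $(A,B)$ of the union of all $G(v_i)$'s with every $v_{ij}\in A$ and every $v_{ij}'\in B$. Analogously, for each $A_{jk}$ I would place $\{c_{jk}, x_{jk}, p_2, z_{jk}, d_{jk}, q_2\}$ on one side and the remaining vertices on the other, so that $d_{jk}$ and $y_{jk}$ lie on opposite sides; the edges $d_{jk} y_{j(k+1)}$ in $H_j$ and their counterparts in $H_j'$ are then consistent with the colorings, giving a bipartition $(C,D)$ of the union of clause gadgets with every $c_{jk}\in C$ and every $c_{jk}'\in D$. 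Setting the final bipartition of $G$ to $(A\cup D,\; B\cup C)$, every connecting edge $v_{ij}c_{jk}$ runs from $A$ to $C\subseteq B\cup C$ and every $v_{ij}'c_{jk}'$ runs from $B$ to $D\subseteq A\cup D$, so both cross the bipartition.

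The main obstacle is purely bookkeeping: one has to pick the colorings of the individual gadgets so that variable and clause vertices land on compatible sides. Once it is decided that the $v_{ij}$'s share one class of $G(v_i)$ and the $c_{jk}$'s share one class of $G(C_j)$, the final merger is forced and the whole argument reduces to a short, mechanical case check on the edges of each building block.
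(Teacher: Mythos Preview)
Your proposal is correct and follows essentially the same approach as the paper: verify the degree bound by inspecting the pieces, and obtain bipartiteness by choosing compatible bipartitions $(A,B)$ of the variable gadgets and $(C,D)$ of the clause gadgets (with the $v_{ij}$, $v_{ij}'$, $c_{jk}$, $c_{jk}'$ placed on the appropriate sides) and merging them into $(A\cup D,\,B\cup C)$. The paper states these bipartitions without writing them down, whereas you spell out explicit colorings of $H_{ij}$ and $A_{jk}$ and check the glueing edges; this is just a more detailed execution of the same argument.
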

\begin{proof}
It follows from the construction that $G$ has maximum degree~$3$. 
To see that $G$ is bipartite, note first that the bipartite graph forming by all variable gadgets $G(v_i)$ has a bipartition $(A,B)$ such that all $v_{ij}$ are in $A$ and all $v_{ij}'$ are in $B$, and the bipartite graph forming by all clause gadgets $G(C_j)$ has a bipartition $(C,D)$ such that all $c_{j1}, c_{j2}, c_{j3}$ are in $C$ and all $c_{j1}', c_{j2}', c_{j3}'$ are in~$D$. Hence, by construction, $(A\cup D, B\cup C)$ is a bipartition of $G$.\qed
\end{proof}

\begin{fact}\label{fact:claw-cj1}
Let $S$ be a claw deletion set of $G$ such that $S$ contains exactly $2m$ vertices from each $G(v_i)$. 
Suppose that, for some $1\le j\le m$, $S$ contains none of $c_{j1}, c_{j2}$ and $c_{j3}$, or none of $c_{j1}', c_{j2}'$ and $c_{j3}'$. Then $S$ contains at least~$17$ vertices from $G(C_j)$. 
\end{fact}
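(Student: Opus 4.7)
The plan is a proof by contradiction. Suppose $S$ contained at most $16$ vertices of $G(C_j)$. Since $G(C_j)$ is the disjoint union $H_j \cup H_j'$ and Fact~\ref{fact:Hj}(a) forces each of $|S\cap H_j|$ and $|S\cap H_j'|$ to be at least $8$, this would force $|S\cap H_j|=|S\cap H_j'|=8$, so both restrictions are optimal claw deletion sets of $H_j$ and $H_j'$, respectively. I will derive a contradiction.

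Without loss of generality, assume $c_{j1},c_{j2},c_{j3}\notin S$ (the other case is symmetric, with the roles of $H_j$ and $H_j'$ swapped). Write $v_{rj},v_{sj},v_{tj}$ for the corresponding variable vertices of $c_{j1},c_{j2},c_{j3}$. The first step is to show $v_{rj},v_{sj},v_{tj}\in S$. Indeed, each $c_{jk}$ has exactly three pairwise non-adjacent neighbors in $G$: its two neighbors inside $H_j$, and its corresponding variable vertex; together with $c_{jk}$ they form a claw centered at $c_{jk}$, so since $c_{jk}\notin S$ at least one of these neighbors lies in $S$. By Fact~\ref{fact:Hj}(c), the optimal set $S\cap H_j$ contains no neighbor of any $c_{jk}$ inside $H_j$, hence the corresponding variable vertex is forced into $S$.

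Next I transport this information across the variable gadgets to $H_j'$. Because $|S\cap G(v_r)|=2m$ is optimal and contains $v_{rj}$, Fact~\ref{fact:claw-vi2}(a) forces $S\cap G(v_r)=\bigcup_{1\le j'\le m}\{v_{r,j'},b_{r,j'}^2\}$, and the analogous equalities hold for $G(v_s)$ and $G(v_t)$. In particular, $v_{rj}',v_{sj}',v_{tj}'$ all lie outside $S$, together with their neighbors. Now I rerun the claw analysis for each $c_{jk}'$ in $H_j'$: if some $c_{jk}'\notin S$, its three pairwise non-adjacent neighbors consist of its two neighbors in $H_j'$ (excluded from $S$ by Fact~\ref{fact:Hj}(c) applied to the optimal $S\cap H_j'$) and the corresponding primed variable vertex (excluded by the previous step), producing a surviving claw centered at $c_{jk}'$ in $G-S$, a contradiction. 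Hence all three $c_{j1}',c_{j2}',c_{j3}'$ must lie in $S\cap H_j'$, which again contradicts Fact~\ref{fact:Hj}(c). Either branch yields a contradiction, so $|S\cap G(C_j)|\ge 17$.

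The main obstacle is the bookkeeping that couples $H_j$ and $H_j'$ through the variable gadgets: the hypothesis $|S\cap G(v_i)|=2m$ is invoked precisely to promote forced membership $v_{ij}\in S$ on one side into forced non-membership $v_{ij}'\notin S$ on the other (via Fact~\ref{fact:claw-vi2}(a)), which is what makes the second round of claw-centered-at-$c_{jk}'$ analysis bite and close the argument.
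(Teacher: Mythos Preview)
Your proof is correct and follows essentially the same line as the paper's: assume $|S\cap G(C_j)|\le 16$, use Fact~\ref{fact:Hj}(a) to force $|S\cap H_j|=|S\cap H_j'|=8$, then use Fact~\ref{fact:Hj}(c) on $H_j$ to push the corresponding variable vertices into $S$, invoke Fact~\ref{fact:claw-vi2}(a) on each affected $G(v_i)$ to see that the primed variable vertices and their gadget-neighbors lie outside $S$, and finally derive that all of $c_{j1}',c_{j2}',c_{j3}'$ lie in $S\cap H_j'$, contradicting Fact~\ref{fact:Hj}(c). The only cosmetic difference is that the paper spots the last contradiction via the claw centered at the variable vertex $v_{ij}'$ (whose two gadget-neighbors are already known to be outside $S$), whereas you use the claw centered at $c_{jk}'$; both work.
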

\begin{proof}
Consider the case that $S$ contains none of $c_{j1}, c_{j2}$ and $c_{j3}$; the other case is similar. Suppose for the contrary that $S$ contains at most $16$ vertices from $G(C_j)$. Then by Facts~\ref{fact:claw-cj} and~\ref{fact:Hj}, the restrictions of $S$ on $H_j$ and on $H_j'$ are optimal claw deletion sets in $H_j$ and $H_j'$, respectively. 
 
Let $v_{rj}$, $v_{sj}$ and~$v_{tj}$ be the corresponding variable vertices of $c_{j1}$, $c_{j2}$ and~$c_{j3}$, respectively. 
By Fact~\ref{fact:Hj}~(c), all neighbors of $c_{j1}$, $c_{j2}$ and~$c_{j3}$ in $H_j$ are outside $S$, implying all $v_{rj}$, $v_{sj}$ and~$v_{tj}$ are in $S$. 
By Fact~\ref{fact:claw-vi2}~(a), all $v_{rj}'$, $v_{sj}'$ and~$v_{tj}'$ together with their neighbors in $G(v_r)$, $G(v_s)$ and $G(v_t)$ are outside $S$, implying all $c_{j1}', c_{j2}'$ and $c_{j3}'$ are in $S$. 
This is a contradiction to Fact~\ref{fact:Hj}~(c). 
\qed
\end{proof}

Now suppose that $G$ has a claw deletion set $S$ with at most $2nm+16m$ vertices. 
Then by Facts~\ref{fact:claw-vi1} and~\ref{fact:Hj}~(a), $S$ has exactly $2nm+16m$ vertices, and~$S$ contains exactly $2m$ vertices from each $G(v_i)$ and exactly $16$ vertices from each $G(C_j)$. 

Consider the truth assignment in which a variable $v_i$ is true if all its associated variable vertices $v_{ij}$, $1\le j\le m$, are in $S$. Note that by Fact~\ref{fact:claw-vi2}~(a) and~(b), this assignment is well-defined. For each $G(C_j)$, it follows from Facts~\ref{fact:claw-cj} and~\ref{fact:claw-cj1} that some of $c_{j1}, c_{j2}, c_{j3}$ is outside $S$ and some of $c_{j1}', c_{j2}', c_{j3}'$ is outside $S$ as well. 
That is, the clause $C_j$ contains a true variables and a false variable.  
Thus, if $G$ admits a claw deletion set $S$ with at most $2nm+16m$ then $F$ has a nae assignment.  

Conversely, suppose that there is a nae assignment for~$F$. Then a claw deletion set $S$ of size $2nm+16m$ for $G$ is as follows. 
For each variable $v_i$, $1\le i\le n$: 
\begin{itemize}
\item If variable $v_i$ is true, then put all $2m$ vertices $v_{ij}, b_{ij}^2$, $1\le j\le m$, into~$S$. 
\item If variable $v_i$ is false, then put all $2m$ vertices $v_{ij}', a_{ij}^2$, $1\le j\le m$, into~$S$.  
\end{itemize}

For each clause $C_j$, $1\le j\le m$, let $v_{rj}$, $v_{sj}$ and $v_{tj}$ be the corresponding variable vertices of $c_{j1}$, $c_{j2}$ and $c_{j3}$, respectively. Extend $S$ to~$16$ vertices of~$G(C_j)$ as follows; cf. Fact~\ref{fact:Hj}~(b). 

\begin{itemize}
\item If $C_j$ has one true variable and two false variables, say $v_r$ is true, $v_s$ and $v_t$ are false, then put 
$x_{j1}$, $z_{j1}$, $c_{j2}$, $y_{j2}$, $d_{j2}$, $c_{j3}$, $y_{j3}$, $d_{j3}$, 
$c_{j1}'$, $y_{j1}'$, $d_{j1}'$, $x_{j2}'$, $z_{j2}'$, $x_{j3}'$, $y_{j3}'$ and $d_{j3}'$ into $S$. 
\item If $C_j$ has two true variables and one false variable, say $v_r$ and $v_s$ are true, $v_t$ is false, then put 
$x_{j1}$, $z_{j1}$, $x_{j2}$, $y_{j2}$, $d_{j2}$, $c_{j3}$, $y_{j3}$, $d_{j3}$, 
$c_{j1}'$, $y_{j1}'$, $d_{j1}'$, $c_{j2}'$, $y_{j2}'$, $d_{j2}'$, $x_{j3}'$ and $z_{j3}'$ into $S$. 
\end{itemize}
  

By construction, $S$ has exactly $n\times 2m+m\times 16$ vertices and, for every $i$ and $j$, $G(v_i)-S$ and $G(C_j)-S$ are claw-free. Therefore, since a clause vertex is in $S$ if and only if the corresponding variable vertex is not in $S$, $G-S$ is claw-free. Thus, if $F$ can be satisfied by a nae assignment then $G$ has a claw deletion set of size at most $2nm+16m$.

In summary, we obtain:
\begin{theorem}\label{thm:dclaw-bip+deg-3}
For any $d\ge 3$, \dclawVD{d} is $\NP$-complete even when restricted to bipartite 
graphs of maximum degree~$d$.
\end{theorem}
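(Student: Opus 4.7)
The plan has two parts. First, the case $d=3$ is immediate from the preceding construction: Fact~\ref{fact:claw-bip} shows that the graph $G$ built from an instance $F$ of \PNAE3SAT\ is bipartite of maximum degree~$3$, and the two implications proved above give that $G$ has a claw deletion set of size at most $2nm+16m$ if and only if $F$ admits a nae assignment. Combined with obvious membership in $\NP$, this yields $\NP$-completeness of \dclawVD{3} on bipartite subcubic graphs.

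For $d>3$, we would reduce from the $d=3$ case by attaching pendants. Given a bipartite subcubic instance $(G,k)$ of \dclawVD{3}, form $G'$ by adding $d-3$ fresh leaves to each vertex of $G$, placing every new leaf on the side of the bipartition opposite its unique neighbor. Then $G'$ is bipartite; only the original degree-$3$ vertices attain degree~$d$ in $G'$, while every other vertex has degree at most $d-1$. In particular $G'$ has maximum degree exactly~$d$.

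The key structural fact is that the $d$-claws of $G'$ are precisely the sets $\{v\}\cup N_{G'}(v)$ with $\deg_G(v)=3$, while the claws of $G$ are precisely the sets $\{v\}\cup N_G(v)$ with $\deg_G(v)=3$. Indeed, the center of a $d$-claw in $G'$ has degree at least~$d$, which forces it to be an original vertex $v$ with $\deg_G(v)=3$; then $v$ has exactly $d$ neighbors in $G'$ (three original ones, pairwise non-adjacent because $G$ is bipartite, plus the $d-3$ new leaves), and these $d$ neighbors form an independent set. This correspondence directly yields the equivalence of claw deletion sets of $G$ and $d$-claw deletion sets of $G'$ of the same size. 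In one direction, any claw deletion set $S\subseteq V(G)$ of $G$ remains a $d$-claw deletion set of $G'$, since $S\cap N_{G'}(v)=S\cap N_G(v)$ for every vertex $v$ with $\deg_G(v)=3$. Conversely, any $d$-claw deletion set $S'$ of $G'$ can be pushed into $V(G)$ without enlargement by swapping each selected leaf for its unique $G$-neighbor; the resulting subset of $V(G)$ then hits every claw of $G$, because the corresponding $d$-claw of $G'$ was hit by~$S'$.

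The only substantive step is the structural characterization of the $d$-claws of $G'$; once it is in place, both implications are short verifications that preserve the parameter $k$ exactly, so we anticipate no serious difficulty.
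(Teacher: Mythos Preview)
Your proposal is correct and follows essentially the same route as the paper: settle $d=3$ from the preceding construction, and for $d>3$ attach $d-3$ pendant leaves to every vertex of a bipartite subcubic instance, then argue that claw deletion sets of $G$ and $d$-claw deletion sets of $G'$ coincide (up to pushing selected leaves back to their unique neighbor). The paper states this reduction tersely and leaves the verification to the reader, whereas you make explicit the bijection between claws of $G$ and $d$-claws of $G'$ via the degree-$3$ centers; both arguments are the same in substance.
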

\begin{proof}
The case $d=3$ follows from the previous arguments. For an fixed integer $d>3$ and a bipartite graph $G$ of maximum degree~$3$, let $G'$ be obtained from $G$ by adding, for each vertex~$v$ in $G$, $d-3$ new vertex $v_1,\ldots, v_{d-3}$ all are adjacent to exactly~$v$. 
Then~$G'$ is bipartite and has maximum degree~$d$. It can be verified immediately that $G$ has a claw deletion set of size at most~$k$ if and only if $G'$ has a $d$-claw deletion set of size at most~$k$. \qed   
\end{proof}

\subsubsection{Bipartite graphs of bounded diameter} 

From Theorems~\ref{thm:bip+deg-3} and \ref{thm:dclaw-bip+deg-3}, and (the proof of) Observation~\ref{obs:diam3} we conclude:
\begin{theorem}\label{thm:bip+diam3}
For any $d\ge 2$, \dclawVD{d} is $\NP$-complete even when restricted to bipartite graphs of diameter~$3$ with only two unbounded vertices.
\end{theorem}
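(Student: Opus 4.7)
The plan is to chain together the hardness results on bipartite graphs of bounded degree with the diameter-reducing construction from Observation~\ref{obs:diam3}. Concretely, fix $d\ge 2$ and start from the input guaranteed by Theorem~\ref{thm:bip+deg-3} (when $d=2$) or Theorem~\ref{thm:dclaw-bip+deg-3} (when $d\ge 3$); in both cases we obtain an instance $(G,k)$ for \dclawVD{d} such that $G=(X,Y,E)$ is bipartite and has maximum degree at most~$\max(3,d)$. In particular, every vertex of $G$ has bounded degree.

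Next, apply the construction described in the proof of Observation~\ref{obs:diam3}: let $G'$ be obtained from $G$ by introducing two fresh $d$-claws with centers $x$ and $y$, and then adding all edges between $x$ and $Y\cup\{y\}$ and all edges between $y$ and $X\cup\{x\}$. The resulting graph $G'$ is bipartite (with the bipartition $X\cup\{y\}\cup L_y$ versus $Y\cup\{x\}\cup L_x$, where $L_x,L_y$ are the leaf sets of the two new $d$-claws) and has diameter exactly~$3$, as observed in the proof of Observation~\ref{obs:diam3}. Moreover, by that proof, $(G,k)\in\text{\dclawVD{d}}$ if and only if $(G',k+2)\in\text{\dclawVD{d}}$.

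It remains to check the ``only two unbounded vertices'' clause. Every vertex of $G$ already had bounded degree and receives at most one additional edge to either $x$ or~$y$, so its degree in $G'$ is still bounded. Each of the $2d$ new leaves of the two added $d$-claws has degree~$1$. The only vertices whose degree in $G'$ depends on the size of $G$ are the two centers $x$ and~$y$; these may indeed have unbounded degree. Since the reduction is clearly polynomial and \dclawVD{d} is in $\NP$, the theorem follows. \qed
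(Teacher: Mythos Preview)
Your proof is correct and follows exactly the paper's own argument: invoke the bounded-degree hardness from Theorems~\ref{thm:bip+deg-3} and~\ref{thm:dclaw-bip+deg-3}, apply the construction of Observation~\ref{obs:diam3}, and note that only the two new centers $x,y$ acquire unbounded degree. One small slip: in your parenthetical bipartition the roles of $x$ and $y$ are swapped---since $y$ is joined to all of $X$ it must sit on the $Y$-side, so the bipartition is $(X\cup\{x\}\cup L_y,\; Y\cup\{y\}\cup L_x)$.
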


We remark that \mindclawVD{d} is polynomially solvable on bipartite graphs of diameter at most two. This can be seen as follows.  
Let $G=(X,Y,E)$ be a bipartite of of diameter $\le 2$; such a bipartite graph is complete bipartite. 
Note first that $X$ and $Y$ are $d$-claw deletion sets for $G$. We will see that any optimal $d$-claw deletion set is $X$ or $Y$ or is of the form $(X\setminus X')\cup (Y\setminus Y')$ for some $d-1$-element sets $X'\subseteq X$ and $Y'\subseteq Y$. (In particular, all optimal $d$-claw deletion sets can be found in $O(n^{d-1})$ time.) 
Indeed, let $S$ be an optimal $d$-claw deletion set.  
If $X\subseteq S$, then by the optimality of $S$, $S=X$. Similarly, if $Y\subseteq S$, then $S=Y$. 
So, let $X'=X\setminus S\not=\emptyset$ and $Y'=Y\setminus S\not=\emptyset$. Then 
$|X'|\le d-1$ and $|Y'|\le d-1$: if $|X'|\ge d$, say, then any vertex in $Y'$ and $d$ vertices in $X'$ together would induce a $d$-claw in $G-S$. Thus, by the optimality of $S$, $|X'|=|Y'|=d-1$. 

Unfortunately, we have to leave open the complexity of \dclawVD{d} on bipartite graphs of diameter~$3$ with only one vertex of unbounded degree.


\subsection{Split graphs}\label{subsec:split}
In this subsection, we show that, for any $d\ge 3$, \dclawVD{d} is $\NP$-complete even when restricted to split graphs. Note that split graphs have diameter~$3$. By Observation~\ref{obs:diam2}, however, we will see that \dclawVD{d} is hard even on split graphs of diameter~$2$.
Recall that \dclawVD{1} and \dclawVD{2} are solvable in polynomial time on split graphs.

Let $d\ge 3$ be a fixed integer. We reduce \dHVC{(d-1)} to \dclawVD{d}. Our reduction is inspired by the reduction from \VC\ to \dclawVD{3} in~\cite{Bonomo-Braberman20}.
Let $G=(V,E)$ be a $(d-1)$-uniform hypergraph with $n=|V|$ vertices and $m=|E|$ edges. We may assume that for any hyperedge $e\in E$ there is another hyperedge $f\in E$ such that $e\cap f=\emptyset$. For otherwise, $G$ has a vertex cover of size $\le |e|=d-1$ and therefore \dHVC{(d-1)} is polynomially solvable on such inputs $G$.
We construct a split graph $G'=(V',E')$ with $V'=Q\cup I$, where $Q$ is a clique and $I$ is an independent set, as follows:
\begin{itemize}
\item $I=\{v'\mid v\in V\}$;
\item for each edge $e\in E$, let $Q(e)$ be a clique of size $n$;
\item all sets $I$ and $Q(e)$, $e\in E$, are pairwise disjoint;
\item make $\bigcup_{e\in E} Q(e)$ to  clique $Q$;
\item for each $v'\in I$ and $e\in E$, connect $v'$ to all vertices in $Q(e)$ if and only if $v\in e$.
\end{itemize}
The description of the split graph $G'$ is complete. Note that $G'$ has $nm+n$ vertices and $O(n^2m^2)$ edges, and can be constructed in $O(n^2m^2)$ time.

For each $e\in E$, write $e'=\{v'\in I\mid v\in e\}$. By construction, every vertex in $Q(e)$ has exactly $d-1$ neighbors in $I$, namely the vertices in $e'$. Hence, every induced $d$-claw in $G'$ is formed by a center vertex $x\in Q(e)$ for some $e\in E$ and $e'\cup\{y\}$, where $y$ is any vertex in $Q(f)$, $f\in E$, such that $f\cap e=\emptyset$.
It follows that $G'$ contains no induced $(d+1)$-claws.

\smallskip
\begin{fact}\label{Claim1}
If $S$ is a vertex cover in the hypergraph $G$, then $S'=\{v'\mid v\in S\}$ is a $d$-claw deletion set in the split graph~$G'$.
\end{fact}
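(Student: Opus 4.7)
The plan is to argue by contradiction, directly leveraging the structural characterization of induced $d$-claws in $G'$ that was established in the paragraph immediately preceding the fact. Recall that every induced $d$-claw in $G'$ must have a center vertex $x\in Q(e)$ for some hyperedge $e\in E$, with its $d$ leaves consisting exactly of $e'=\{v'\mid v\in e\}$ together with one vertex $y\in Q(f)$ for some hyperedge $f\in E$ with $e\cap f=\emptyset$. This characterization is what makes the proof essentially immediate.

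Assume for contradiction that $G'-S'$ contains some induced $d$-claw $K$. Applying the above characterization to $K$ (which is also an induced $d$-claw in $G'$), there is a hyperedge $e\in E$ such that $e'$ is a subset of the leaf set of $K$. Since $S$ is a vertex cover of the hypergraph $G$, we have $S\cap e\neq\emptyset$; pick any $v\in S\cap e$. By the definitions of $S'$ and $e'$, the vertex $v'$ lies in both $S'$ and $e'$. Thus $v'$ is a leaf of $K$ that belongs to $S'$, contradicting $V(K)\subseteq V(G')\setminus S'$.

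I do not anticipate any genuine obstacle here, since the proof is essentially a one-line corollary of the classification of induced $d$-claws in $G'$ stated in the excerpt. The only point worth double-checking is that the entire set $e'$ really must appear as leaves of the claw (as opposed to merely intersecting the leaf set): this follows because $x\in Q(e)$ has $N_{G'}(x)\cap I=e'$ by construction, every other leaf of the claw lies in $Q$, and the induced $d$-claw with center $x$ has exactly $d$ leaves while $|e'|=d-1$, forcing all of $e'$ to be leaves.
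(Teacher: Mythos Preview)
Your proof is correct and follows essentially the same approach as the paper: both argue by contradiction using the structural characterization of induced $d$-claws in $G'$ to conclude that a $d$-claw disjoint from $S'$ would force $e\cap S=\emptyset$ for some hyperedge $e$, contradicting that $S$ is a vertex cover. Your additional paragraph justifying that all of $e'$ must appear among the leaves is a nice explicit check, but the paper had already established this in the paragraph preceding the fact.
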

\begin{proof}
If $C$ is a $d$-claw in $G'$ with center vertex $x\in Q(e)$ for some $e\in E$ such that $C\cap S'=\emptyset$, then $e'\cap S'=\emptyset$. This means $e\cap S=\emptyset$, contradicting the fact that $S$ is a vertex cover of the hypergraph $G$.\qed
\end{proof}

\begin{fact}\label{Claim2}
If $S'$ is a $d$-claw deletion set in the split graph $G'$ of size $<n$, then $S=\{v\mid v'\in S'\}$ is a vertex cover in the hypergraph $G$.
\end{fact}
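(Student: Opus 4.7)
The plan is to argue by contradiction. I would assume $S$ is \emph{not} a vertex cover of $G$, so some hyperedge $e\in E$ satisfies $e\cap S=\emptyset$; translating through the definition $S=\{v\mid v'\in S'\}$, this means $e'\cap S'=\emptyset$, where $e'=\{v'\in I\mid v\in e\}$. The goal will then be to exhibit an induced $d$-claw in $G'$ entirely disjoint from $S'$, contradicting the hypothesis that $S'$ is a $d$-claw deletion set.

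First I would invoke the preprocessing assumption on $G$: there exists $f\in E$ with $f\cap e=\emptyset$. Because $|Q(e)|=|Q(f)|=n$ and $|S'|<n$, I can pick vertices $x\in Q(e)\setminus S'$ and $y\in Q(f)\setminus S'$; they lie in distinct parts of the partition of $Q$ and are therefore distinct. Next, I would verify that $\{x\}\cup e'\cup\{y\}$ induces a $K_{1,d}$ in $G'$ with center $x$ and $d$ leaves $e'\cup\{y\}$, using the construction of $G'$: (i) $x$ is adjacent to every vertex of $e'$ since $v\in e$ implies $v'$ is joined to all of $Q(e)$; (ii) $x$ and $y$ both lie in the clique $Q$, hence are adjacent; (iii) $e'\subseteq I$ is independent; and (iv) $y\in Q(f)$ is adjacent to $v'\in I$ iff $v\in f$, but $e\cap f=\emptyset$ forces $y$ to have no neighbor in $e'$. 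Since $G$ is $(d-1)$-uniform we have $|e'|=d-1$, so $e'\cup\{y\}$ consists of exactly $d$ pairwise non-adjacent vertices each adjacent to $x$ — a genuine induced $d$-claw. Because $x\notin S'$, $y\notin S'$, and $e'\cap S'=\emptyset$ by assumption, this $d$-claw avoids $S'$, giving the desired contradiction.

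The main obstacle, and the place where each hypothesis is used, is in guaranteeing that the candidate $d$-claw is both present and disjoint from $S'$. The cardinality bound $|S'|<n=|Q(e)|=|Q(f)|$ is what supplies the center $x$ and the extra leaf $y$ outside $S'$; the hyperedge disjointness $e\cap f=\emptyset$, secured by the preprocessing step on $G$, is precisely what prevents $y$ from being adjacent to any vertex of $e'$ and thereby destroying the induced $K_{1,d}$ structure. Once these two ingredients are combined with the already-observed structure of $d$-claws in $G'$ (every $d$-claw is centered at some vertex of $Q$ with the $d-1$ leaves of a hyperedge forming part of $I$), the contradiction is immediate and the fact follows.
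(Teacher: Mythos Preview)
Your proof is correct and follows essentially the same approach as the paper's: assume some hyperedge $e$ has $e'\cap S'=\emptyset$, use $|S'|<n$ to extract $x\in Q(e)\setminus S'$ and $y\in Q(f)\setminus S'$ for some $f$ disjoint from $e$, and observe that $x$, $y$, and $e'$ form an induced $d$-claw avoiding $S'$. Your version is simply more explicit in verifying the adjacency structure of the claimed $d$-claw.
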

\begin{proof}
First, for each $e\in E$, $S'\cap e'\not=\emptyset$. For otherwise let $S'\cap e'=\emptyset$ for some $e\in E$. Since $|S'|< n$, there is a vertex $x\in Q(e)\setminus S'$ and a vertex $y\in Q(f)\setminus S'$ with $f\cap e=\emptyset$. Then $x, y$ and $e'$ induce a $d$-claw in $G'-S'$, a contradiction.
We have seen that, for each $e\in E$, $S'\cap e'\not=\emptyset$. Then, with $S=\{v\mid v'\in S'\}$, we have $S\cap e\not=\emptyset$ for all $e\in E$. That is, $S$ is a vertex cover of the hypergraph $G$. 
\qed
\end{proof}

\begin{fact}\label{Claim3}
The size of a smallest vertex cover of $G$, $\textsc{opt}_{\textsc{vc}}(G)$, and the size of a smallest $d$-claw deletion set in $G'$, $\textsc{opt}_{\dclawVD{d}}(G')$, are equal.
\end{fact}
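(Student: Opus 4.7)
The plan is to establish the equality by proving both inequalities separately, using Facts~\ref{Claim1} and~\ref{Claim2} as the main tools. The forward inequality $\textsc{opt}_{\dclawVD{d}}(G') \le \textsc{opt}_{\textsc{vc}}(G)$ is immediate: starting from an optimal vertex cover $S$ of the hypergraph $G$, Fact~\ref{Claim1} produces a $d$-claw deletion set $S' = \{v' \mid v \in S\}$ in $G'$ of the same size, yielding the inequality.

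For the reverse inequality $\textsc{opt}_{\textsc{vc}}(G) \le \textsc{opt}_{\dclawVD{d}}(G')$, I would let $S'$ be an optimal $d$-claw deletion set in $G'$ and distinguish two cases according to whether $|S'| < n$ or $|S'| \ge n$. In the first case, Fact~\ref{Claim2} directly converts $S'$ into a vertex cover $S = \{v \mid v' \in S'\}$ of the hypergraph $G$. Since $S$ is formed only from those vertices of $S'$ that lie in $I$, we have $|S| \le |S' \cap I| \le |S'|$, which gives the desired bound.

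In the second case, when $|S'| \ge n$, we simply observe that the entire vertex set $V$ of the hypergraph is trivially a vertex cover of $G$ of size $n$, so $\textsc{opt}_{\textsc{vc}}(G) \le n \le |S'| = \textsc{opt}_{\dclawVD{d}}(G')$. Combining both cases yields the reverse inequality, and together with the forward inequality this establishes equality.

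The argument is essentially a bookkeeping exercise on top of Facts~\ref{Claim1} and~\ref{Claim2}; the only subtle point, which is also the potential pitfall to be careful about, is that Fact~\ref{Claim2} carries the side condition $|S'| < n$. This is why the second (trivial) case of the reverse direction is needed: it handles precisely the regime in which Fact~\ref{Claim2} is not directly applicable, and it suffices because $|V| = n$ is an a priori upper bound on $\textsc{opt}_{\textsc{vc}}(G)$.
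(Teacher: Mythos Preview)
Your proof is correct and follows essentially the same strategy as the paper: derive $\textsc{opt}_{\dclawVD{d}}(G')\le \textsc{opt}_{\textsc{vc}}(G)$ from Fact~\ref{Claim1} and the reverse inequality from Fact~\ref{Claim2}. The only difference is in how the side condition $|S'|<n$ of Fact~\ref{Claim2} is handled: you cover the case $|S'|\ge n$ separately via the trivial bound $\textsc{opt}_{\textsc{vc}}(G)\le n$, whereas the paper shows this case cannot occur for an optimal $S'$ by observing that $I$ minus any single vertex is already a $d$-claw deletion set of size $n-1$, so $\textsc{opt}_{\dclawVD{d}}(G')\le n-1<n$.
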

\begin{proof} By Fact~\ref{Claim1}, $\textsc{opt}_{\dclawVD{d}}(G')\le \textsc{opt}_{\textsc{vc}}(G)$.
Let $S'$ be a smallest $d$-claw deletion set in $G'$. Then $|S'|< n$ because $I$ minus an arbitrary vertex is a $d$-claw deletion set in~$G'$ with $n-1$ vertices.
Hence, by Fact~\ref{Claim2}, $S=\{v\mid v'\in S'\}$ is a vertex cover in the hypergraph $G$ with $|S|\le |S'|$. Thus, $\textsc{opt}_{\textsc{vc}}(G)\le \textsc{opt}_{\dclawVD{d}}(G')$.\qed
\end{proof}

We now derive hardness results for \dclawVD{d} and \mindclawVD{d} from the above reduction.
\begin{theorem}\label{thm:1}
For any fixed integer $d\ge 3$, \dclawVD{d} is $\NP$-complete, even when restricted to
\begin{itemize}
\item[\em (i)] split graphs without induced $(d+1)$-claws, and
\item[\em (ii)] split graphs of diameter~$2$.
\end{itemize}
\end{theorem}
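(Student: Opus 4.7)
The plan is to invoke directly the reduction from \dHVC{(d-1)} to \dclawVD{d} developed in the paragraphs preceding the theorem, together with Facts~\ref{Claim1}--\ref{Claim3}, and to combine this with (the proof of) Observation~\ref{obs:diam2} for the diameter-$2$ strengthening. Since \dHVC{r} is $\NP$-complete for every $r\ge 2$ (cited in Section~\ref{sec:pre}), setting $r=d-1\ge 2$ gives a valid source problem for all $d\ge 3$. Membership of \dclawVD{d} in $\NP$ is immediate: given a candidate $S$ of size at most $k$, one can check in polynomial time whether $G-S$ is $d$-claw free by inspecting, for each vertex $v$, whether its neighborhood in $G-S$ contains an independent set of size $d$ (equivalently, whether there is no induced $d$-claw centered at $v$; this is polynomial for fixed $d$).

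For part~(i), I would first recall that $G'$ is by construction a split graph with clique $Q$ and independent set $I$, and that the analysis of $d$-claws in $G'$ given just before Fact~\ref{Claim1} shows every induced $d$-claw has its center in some $Q(e)$, its $d-1$ ``small'' leaves equal to $e'$, and one extra leaf in some $Q(f)$ with $f\cap e=\emptyset$. In particular, no induced $(d+1)$-claw can occur, because the center would need a $d$-element independent set in its neighborhood inside $Q\cup I$, which is impossible given that every vertex of $Q(e)$ has only $|e|=d-1$ neighbors in $I$ and $Q$ is a clique. Correctness of the reduction then follows verbatim from Fact~\ref{Claim3}: $G$ admits a vertex cover of size $\le k$ iff $G'$ admits a $d$-claw deletion set of size $\le k$. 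Thus $(G,k)\mapsto (G',k)$ is the desired polynomial many-one reduction yielding~(i).

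For part~(ii), I would simply feed the split graph $G'$ produced in~(i) into the construction from Observation~\ref{obs:diam2}: add a fresh $d$-claw centered at a new vertex $v$ and join $v$ to every other vertex. As already remarked after Observation~\ref{obs:diam2}, this construction preserves the split property (the new universal vertex can be placed in the clique side, and the $d$ new leaves form an independent set addable to the independent side), while forcing diameter~$2$. The equivalence $(G',k)\in\dclawVD{d}\iff(G'',k+1)\in\dclawVD{d}$ is exactly Observation~\ref{obs:diam2}. Composing with the reduction of part~(i) gives~(ii).

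There is essentially no obstacle beyond bookkeeping; the only point worth double-checking is that the argument for absence of $(d+1)$-claws relies on the hypothesis that for every hyperedge $e\in E$ there exists a disjoint hyperedge $f\in E$ (otherwise a $d$-claw would require two disjoint hyperedges that do not exist). This assumption was made without loss of generality at the start of the reduction, so no further care is needed.
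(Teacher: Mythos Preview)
Your proposal is correct and follows essentially the same route as the paper: part~(i) by the reduction from \dHVC{(d-1)} together with Facts~\ref{Claim1}--\ref{Claim3} and the observation that $G'$ has no induced $(d+1)$-claw, and part~(ii) by composing with Observation~\ref{obs:diam2} (whose construction preserves being split). One minor quibble: in your final paragraph you say the \emph{absence of $(d+1)$-claws} relies on the disjoint-hyperedge assumption, but it does not---that absence follows solely from each $Q(e)$-vertex having exactly $d-1$ neighbors in $I$; the disjoint-hyperedge assumption is only there to guarantee that $d$-claws (and hence hard instances) actually arise.
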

\begin{proof}
Part (i) follows from Facts~\ref{Claim1} and~\ref{Claim2}, and the fact that the split graph $G'$ contains no induced $(d+1)$-claws.
Part~(ii) follows from~(i) and (the proof of) Observation~\ref{obs:diam2}.
\qed
\end{proof}

We remark that both hardness results in Theorem~\ref{thm:1} are optimal in the sense that \dclawVD{d} is trivial for graphs without induced $d$-claws, in particular for graphs of diameter~$1$, \emph{i.e.}, complete graphs.
We also remark that Theorem~\ref{thm:1} implies, in particular, that \dclawVD{d} is $\NP$-complete on chordal graphs for any $d\ge3$, while the complexity  of \dclawVD{2} on chordal graphs is still open (cf.~\cite{CaoKOY17,0001KOY18}).

Since it is UGC-hard to approximate \mindHVC{(d-1)} to a factor $(d-1)-\epsilon$ for any $\epsilon>0$~\cite{KhotR08},  
Fact~\ref{Claim3} implies:

\begin{theorem}\label{thm:2}
Let $d\ge 3$ be a fixed integer. Assuming the UGC, there is no approximation algorithm for \mindclawVD{d} within a factor better than $d-1$, even when restricted to split graphs without induced $(d+1)$-claws.
\end{theorem}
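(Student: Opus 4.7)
The plan is to observe that the reduction preceding Theorem~\ref{thm:1} is in fact approximation preserving (indeed, an $S$-reduction that preserves optima exactly, by Fact~\ref{Claim3}). Combined with the UGC-hardness result from~\cite{KhotR08} for \mindHVC{(d-1)}, this will immediately yield the claimed inapproximability for \mindclawVD{d} on split graphs without induced $(d+1)$-claws.

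More concretely, I would argue by contradiction: suppose that for some $\epsilon>0$ there is a polynomial-time $(d-1-\epsilon)$-approximation algorithm $A$ for \mindclawVD{d} on split graphs without induced $(d+1)$-claws. Given a $(d-1)$-uniform hypergraph $G$ (an instance of \mindHVC{(d-1)}), construct in polynomial time the split graph $G'$ without induced $(d+1)$-claws as before. Run $A$ on $G'$ to obtain a $d$-claw deletion set $S'$ with $|S'|\le (d-1-\epsilon)\,\textsc{opt}_{\dclawVD{d}}(G')$. Recover $S=\{v\mid v'\in S'\}$ and output $S$ as a vertex cover of $G$. By Fact~\ref{Claim3}, $\textsc{opt}_{\dclawVD{d}}(G')=\textsc{opt}_{\textsc{vc}}(G)$, and by Fact~\ref{Claim2} we have $|S|\le|S'|$ provided $|S'|<n$; hence $|S|\le(d-1-\epsilon)\,\textsc{opt}_{\textsc{vc}}(G)$, contradicting the UGC-hardness of \mindHVC{(d-1)}.

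The only minor technicality is ensuring the hypothesis $|S'|<n$ of Fact~\ref{Claim2} actually holds after running $A$. This is handled by a trivial postprocessing step: for any $v'\in I$, the set $I\setminus\{v'\}$ is a $d$-claw deletion set in $G'$ of size $n-1$ (every vertex of the clique $Q$ has at most one remaining neighbor in $I$, so no $d$-claw survives), so if $A$ returns a set of size $\ge n$ we simply output $I\setminus\{v'\}$ instead, which still satisfies the required approximation guarantee since $\textsc{opt}_{\dclawVD{d}}(G')\le n-1$. There is no real obstacle in this proof — the combinatorial work is entirely encapsulated in Facts~\ref{Claim1}--\ref{Claim3}, and the argument is the standard gap-preserving reduction template.
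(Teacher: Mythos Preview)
Your argument is correct and follows exactly the paper's approach: the paper's proof is the one-line observation that Fact~\ref{Claim3} together with the UGC-hardness of \mindHVC{(d-1)} from~\cite{KhotR08} immediately yields the claim, and you have simply spelled out the standard contradiction template (including the harmless postprocessing to enforce $|S'|<n$, which the paper already notes in the proof of Fact~\ref{Claim3}).
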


We remark that for triangle-free graphs, in particular bipartite graphs, \mindclawVD{d} and \textsc{$d$-claw-transversal} coincide, hence a result in~\cite{GuruswamiL17} implies that \mindclawVD{d} admits an $O(\log(d+1))$-approximation when restricted to bipartite graphs.

\section{A polynomially solvable case}\label{sec:d-block}
In this section, we will show a polynomial-time algorithm solving \mindclawVD{d} for what we call $d$-block graphs.
As $d$-block graphs generalize block graphs, this result extends the polynomial-time algorithm for \dclawVD{2} on block graphs given in~\cite{0001KOY18} to \dclawVD{d} for all $d\ge 2$ on block graphs, and improves the polynomial-time algorithm for \mindclawVD{3} given in~\cite{Bonomo-Braberman20} on block graphs to $3$-block graphs.

Recall that a \emph{block} in a graph is a maximal biconnected subgraph. \emph{Block graphs} are those in which every block is a clique.
For each integer $d\ge2$, the $d$-block graphs defined below generalize block graphs.

\begin{definition}
Let $d\ge 2$ be an integer. A graph $G$ is \emph{$d$-block graph} if, for every block $B$ of $G$,
\begin{itemize}
\item $B$ is $d$-claw free,
\item for every cut vertex $v$ of $G$, $N(v)\cap B$ is a clique, and
\item the cut vertices of $G$ in $B$ induce a clique.
\end{itemize}
\end{definition}

Note that block graphs are exactly the $2$-block graphs and $d$-block graphs are $(d+1)$-block graphs, but not the converse.
Note also that, for $d\ge 3$, $d$-block graphs need not to be chordal since they may contain arbitrary long induced cycles.
An example of a $3$-block graph is shown in~Fig.~\ref{fig:d-block}.

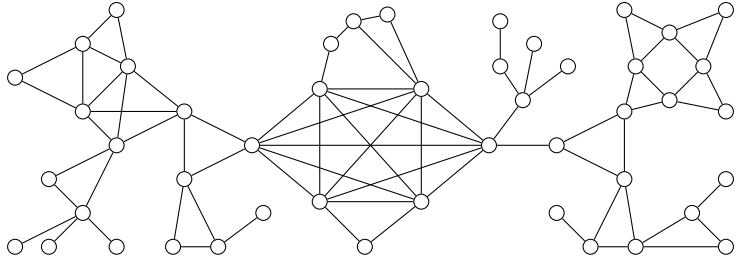
\begin{figure}[ht]

\tikzstyle{vertexS}=[draw,circle,inner sep=2pt,fill=black]
\tikzstyle{squareS}=[draw,rectangle,inner sep=2.5pt,fill=black]
\tikzstyle{subdS}=[draw,circle,inner sep=1.5pt,fill=black]
\tikzstyle{square}=[draw, rectangle,inner sep=2.5pt]
\tikzstyle{vertex}=[draw,circle,inner sep=2pt] 
\tikzstyle{subd}=[draw,circle,inner sep=1.5pt]
\begin{center}
\begin{tikzpicture}[scale=.3] 
\node[vertex] (1) at (5.5,5.5) {};
\node[vertex] (2) at (4,2.5) {};
\node[vertex] (3) at (8.5,7) {};
\node[vertex] (4) at (8.5,4) {};
\node[vertex] (5) at (10,1) {};
\node[vertex] (6) at (11.5,5.5) {};
\node[vertex] (7) at (22,5.5) {};
\node[vertex] (8) at (25,5.5) {};
\node[vertex] (9) at (28,7) {};
\node[vertex] (10) at (28,4) {};
\node[vertex] (11) at (26.5,1) {};
\node[vertex] (12) at (28.5,1) {};
\node[vertex] (13) at (31,2.5) {};

\node[vertex] (v1) at (1,1) {};
\node[vertex] (v2) at (5.5,1) {};
\node[vertex] (v3) at (2.5,4) {};
\node[vertex] (v4) at (4,7) {};
\node[vertex] (v5) at (1,8.5) {};
\node[vertex] (v6) at (4,10) {};
\node[vertex] (v7) at (5.5,11.5) {};
\node[vertex] (v8) at (6,9) {};
\node[vertex] (v9) at (8,1) {};
\node[vertex] (u1) at (12,2.5) {};
\node[vertex] (u2) at (16.5,1) {};
\node[vertex] (u3) at (14.5,3) {};
\node[vertex] (u4) at (14.5,8) {};
\node[vertex] (u5) at (15,10) {};
\node[vertex] (u6) at (16,11) {};
\node[vertex] (u7) at (17.5,11.3) {};
\node[vertex] (u8) at (19,8) {};
\node[vertex] (u9) at (19,3) {};
\node[vertex] (x1) at (25,2.5) {};
\node[vertex] (x2) at (23.5,7.5) {};
\node[vertex] (x3) at (25.5,9) {};
\node[vertex] (x4) at (28.5,9) {};
\node[vertex] (x5) at (28,11.5) {};
\node[vertex] (x6) at (30,10.5) {};
\node[vertex] (x7) at (32.5,11.5) {};
\node[vertex] (x8) at (31.5,9) {};
\node[vertex] (x9) at (32.5,7) {};
\node[vertex] (y1) at (30,7.5) {};
\node[vertex] (y2) at (32.5,4) {};
\node[vertex] (y3) at (32.5,1) {};

\node[vertex] (neu) at (2.5,1) {};
\node[vertex] (n5) at (22.5,9) {};
\node[vertex] (n6) at (22.5,11) {};
\node[vertex] (n7) at (24,10) {};
\draw (n7)--(x2)--(n5)--(n6);

\draw (2)--(1)--(3)--(4)--(6)--(7)--(8)--(9)--(10)--(11)--(12)--(13); \draw (u1)--(5)--(4)--(v9)--(5);
\draw (3)--(6); \draw (8)--(10); \draw (10)--(12)--(y3)--(13)--(y2); \draw (11)--(x1);
\draw (v1)--(2)--(v3)--(1)--(v4)--(v5)--(v6)--(v7)--(v8)--(3)--(v4)--(v6)--(v8)--(v4);
\draw (1)--(v8); \draw (v2)--(2);
\draw (u2)--(u3)--(6)--(u4)--(u5)--(u6)--(u7)--(u8)--(7)--(u9)--(u2);
\draw (u3)--(u4)--(u8)--(u9)--(u3)--(u8)--(u6);
\draw (u3)--(7)--(u4)--(u9)--(6)--(u8); \draw (7)--(x2)--(x3);
\draw (9)--(x4)--(x5)--(x6)--(x7)--(x8)--(x9)--(y1)--(9);
\draw (y1)--(x4)--(x6)--(x8)--(y1);

\draw (neu)--(2);
\end{tikzpicture}
\caption{A $3$-block graph.}\label{fig:d-block}
\end{center}
\end{figure}

Let $d\ge 2$ and let $G$ be a $d$-block graph.
Recall that a block in $G$ is an \emph{endblock} if it contains at most one cut vertex. Vertices that are not cut vertices are called \emph{endvertices}. Thus, if $u$ is an endvertex then the block containing $u$ (which may or may not be an endblock) is unique.
We call a cut vertex $u$ a \emph{pseudo-endvertex} if $u$ belongs to at most $d-2$ endblocks and exactly one non-endblock.
Thus, for a pseudo-endvertex $u$, we say that $B$ is the unique block containing $u$, meaning that $B$ is the unique non-endblock that contains $u$. 

In computing an optimal $d$-claw deletion set for $G$, we first observe that there is a solution that does not contain any endvertex. It is by the fact that each block is $d$-claw free and if an endvertex $u$ is in a solution $S$ then it must be a leaf of some claw centered at a cut vertex $v$. Then $S-u+v$ is a solution that does not contain the endvertex $u$. The following lemmas are for pseudo-endvertices.

\begin{lemma}\label{lem:dblock-1}
Let $u$ be a pseudo-endvertex.
Then any $d$-claw $C$ containing $u$, if any, is centered at a cut vertex $v\not=u$. Moreover,
\begin{itemize}
\item
if $B$ is the unique block containing $u$, then $C\cap B=\{u,v\}$;
\item
if $B'$ is an endblock containing $u$, then $C\cap B'=\{u\}$.
\end{itemize}
\end{lemma}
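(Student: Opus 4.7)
The plan is to establish the statement in three steps, each invoking one of the three defining properties of a $d$-block graph.

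First, I would rule out that $u$ is the center of $C$. If it were, the $d$ leaves would be pairwise non-adjacent neighbors of $u$. Each neighbor of $u$ lies in some block containing $u$, and by the second defining property, $N(u)\cap B_i$ is a clique for every block $B_i$ containing $u$; hence each such block contributes at most one leaf to $C$. Since $u$ is a pseudo-endvertex, $u$ lies in at most $(d-2)+1=d-1$ blocks, giving at most $d-1$ leaves, a contradiction. Thus $u$ is a leaf of $C$, and its unique neighbor in $C$, namely the center $v$, satisfies $v\neq u$ and $v\in N(u)$.

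Second, I would show that $v$ is a cut vertex lying in $B$. If $v$ were not a cut vertex, it would belong to a unique block $B''$, and all of its neighbors --- including $u$ and the remaining $d-1$ leaves of $C$ --- would lie in $B''$, so $C$ would be a $d$-claw contained in $B''$, contradicting the first defining property. Hence $v$ is a cut vertex. Since $v$ is adjacent to $u$, it lies in some block containing $u$; but in every endblock containing $u$ the only cut vertex is $u$ itself, so $v$ cannot lie in such an endblock. Therefore $v\in B$.

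Finally, I would derive the two structural identities. For $C\cap B=\{u,v\}$, any further vertex $w\in C\cap B$ would be a leaf of $C$, hence $w\in N(v)$ and $w$ is non-adjacent to $u$; but $u,w\in N(v)\cap B$, which is a clique by the second defining property, so $u$ and $w$ are adjacent --- a contradiction. For $C\cap B'=\{u\}$ with $B'$ an endblock containing $u$, any further $w\in C\cap B'$ cannot be $v$ (since $v\in B$ and, being a cut vertex different from $u$, does not lie in any endblock at $u$), so $w$ is a leaf, adjacent to $v$, and thus $v$ and $w$ share a block; since $v\notin B'$ this shared block is not $B'$, forcing $w$ to be a cut vertex of $G$ inside $B'$; but the only such vertex is $u$, contradicting $w\neq u$. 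The main technical obstacle I anticipate is the last step: one must combine the fact that $B'$ is an endblock whose unique cut vertex is $u$ with the previously established location $v\in B$ to exclude every candidate $w\in C\cap B'$ other than $u$.
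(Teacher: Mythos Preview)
Your proposal is correct and follows essentially the same approach as the paper, which simply remarks that the claim holds ``because every block is $d$-claw-free and the neighbors of any cut vertex in any block induce a clique.'' Your argument spells out in full detail exactly how those two defining properties (the first and second in the definition of a $d$-block graph) yield each part of the lemma, so it is a faithful and complete expansion of the paper's one-line proof.
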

\begin{proof}
This is because every block is $d$-claw-free and the neighbors of any cut vertex in any block induce a clique.\qed
\end{proof}

An optimal $d$-claw deletion set for $G$ is also called \emph{solution}.

\begin{lemma}\label{lem:dblock-2}
There is a solution that contains no pseudo-endvertices.
\end{lemma}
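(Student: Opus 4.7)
The plan is to pick, among all optimal $d$-claw deletion sets, one $S$ that minimises the number of pseudo-endvertices it contains, and to derive a contradiction from the assumption that some $u\in S$ is a pseudo-endvertex. The swap I have in mind is to exchange $u$ for the center of a suitably chosen $d$-claw through $u$.

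Because $S$ is optimal and $u\in S$, there must be a $d$-claw $C$ with $C\cap S=\{u\}$, for otherwise $S\setminus\{u\}$ would itself be a $d$-claw deletion set. By Lemma~\ref{lem:dblock-1}, $C$ is centered at a cut vertex $v\in N(u)\cap B$, where $B$ is the unique non-endblock containing $u$, and its remaining $d-1$ leaves $w_1,\dots,w_{d-1}$ lie outside $B$. Since $N(v)\cap D$ is a clique for every block $D$ containing $v$, the pairwise non-adjacent leaves $w_1,\dots,w_{d-1}$ occupy $d-1$ distinct non-$B$ blocks of $v$; together with $B$ this shows that $v$ lies in at least $d$ blocks, so $v$ cannot be a pseudo-endvertex (which by definition belongs to at most $d-1$ blocks).

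I would then consider the candidate solution $S'=(S\setminus\{u\})\cup\{v\}$, which satisfies $|S'|=|S|$. If $S'$ is a $d$-claw deletion set, then it is optimal and contains strictly fewer pseudo-endvertices than $S$, contradicting the choice of $S$. So the crux is to verify that $S'$ is a deletion set. Suppose, for contradiction, some $d$-claw $C^{*}$ avoids $S'$; then $C^{*}\cap S=\{u\}$ and $v\notin C^{*}$. Applying Lemma~\ref{lem:dblock-1} to $C^{*}$, its center $v^{*}\in N(u)\cap B$ satisfies $v^{*}\neq v$, and its further leaves $w_1^{*},\dots,w_{d-1}^{*}$ lie outside $B$. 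Moreover, $v$ and $v^{*}$ are adjacent because the cut vertices of $B$ induce a clique.

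The main obstacle is then to combine $C$, $C^{*}$, and the clique structure inside $B$ to exhibit a $d$-claw lying entirely outside $S$, contradicting the deletion-set property of $S$. The easy cases follow directly: if $v$ is non-adjacent to every $w_i^{*}$, then $\{v^{*};\,v,w_1^{*},\dots,w_{d-1}^{*}\}$ is such a $d$-claw, and symmetrically if $v^{*}$ is non-adjacent to every $w_j$, then $\{v;\,v^{*},w_1,\dots,w_{d-1}\}$ works. Otherwise both $v\sim w_i^{*}$ and $v^{*}\sim w_j$ hold for some indices. The $d$-block property that $N(v)\cap D$ is a clique for every block $D$ through $v$ then forces any such adjacency to take place inside a non-$B$ block shared by $v$ and $v^{*}$, and moreover coinciding with one of the blocks $D_j$ hosting a leaf $w_j$. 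A case analysis on how many of the leaves $w_i^{*}$ lie in these shared blocks $D_j$ then either produces an extra non-$B$ block of $v$ from which a fresh leaf can be drawn, or identifies a $w_i^{*}$ lying outside all $D_j$ that can replace some $w_j$ in $C$ to yield a $d$-claw centered at $v$ and entirely disjoint from $S$. In every sub-case one reaches the desired contradiction, so $S'$ is indeed a deletion set, and the proof is complete.
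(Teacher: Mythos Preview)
Your overall strategy---choose $S$ minimising the number of pseudo-endvertices, then swap a pseudo-endvertex $u\in S$ for the center $v$ of a witnessing $d$-claw---is exactly the paper's approach, and your observation that $v$ is not a pseudo-endvertex (needed for the extremal argument to bite) is a point the paper leaves implicit.

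Where you go astray is in the verification that $S'=(S\setminus\{u\})\cup\{v\}$ is a deletion set. The ``hard case'' you try to handle---that $v$ is adjacent to some leaf $w_i^{*}$ of $C^{*}$---is \emph{vacuous}. Indeed, $v$ and $v^{*}$ are adjacent cut vertices of $B$, so the edge $vv^{*}$ lies in the block $B$; if also $v\sim w_i^{*}$ then $\{v,v^{*},w_i^{*}\}$ is a triangle, hence contained in a single block, which must be $B$ since it contains the edge $vv^{*}$. But $w_i^{*}\notin B$ by Lemma~\ref{lem:dblock-1}. Thus your first ``easy case'' always applies, and the sketched case analysis is unnecessary. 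As written, your proof has a gap precisely because that case analysis is only outlined, not carried out; fortunately the gap sits in an empty case.

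The paper phrases the same observation more efficiently: for \emph{every} $w\in N(v)\cap B$, the set $C-u+w$ is a $d$-claw (by the same triangle argument), so $w\in S$; hence $N(v)\cap B\subseteq S$. Since $v^{*}\in N(v)\cap B$ and $v^{*}\neq u$, one gets $v^{*}\in S\setminus\{u\}\subseteq S'$ immediately, contradicting $C^{*}\cap S'=\emptyset$. Either formulation finishes the proof in one line once you see that vertices of $B$ adjacent to $v$ cannot be adjacent to neighbours of $v$ outside $B$.
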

\begin{proof}
Let $S$ be a solution for $G$ and assume that $S$ contains a pseudo-endvertex~$u$.
Let $B$ be the unique block of $G$ containing $u$.
Since $S-u$ is not a $d$-claw deletion set, there is some $d$-claw $C$ of $G$ outside $S\setminus\{u\}$.
Then, of course,
\begin{equation}\label{eq:1}
C\cap S=\{u\}.
\end{equation}
By Lemma~\ref{lem:dblock-1}, the center $v$ of $C$ is a cut vertex of $G$ in $B$, and $C\cap B=\{u,v\}$. Thus,
for every $w\in N(v)\cap B$, $C-u+w$ is a $d$-claw, and by (\ref{eq:1}), $w\in S$. Hence
\begin{equation}\label{eq:2}
N(v)\cap B\subseteq S.
\end{equation}
We now claim that $S'=S-u+v$ is a $d$-claw deletion set (and thus $S'$ is a solution).
Indeed, let $C'$ be an arbitrary $d$-claw. If $u\not\in C'$ or $v\in C'$ then $C'\cap S'\not=\emptyset$.
So let us consider the case in which $u\in C'$ and $v\not\in C'$.
Then, by Lemma~\ref{lem:dblock-1}, the center $v'$ of $C'$ is a cut vertex of $G$ in $B$. Hence $v'$ and $v$ are adjacent, and by (\ref{eq:2}), $v'\in S'$.\qed
\end{proof}

We remark that Lemma~\ref{lem:dblock-2} is the best possible in the sense that a cut vertex $u$ belonging to exactly two non-endblocks may be contained in any solution; take the $d$-block graph that consists of two $d$-claws with exactly one common leaf $u$.

\begin{lemma}\label{lem:dblock-3}
Let $v$ be a cut vertex and let $B$ be a block containing $v$. If every vertex in $N(v)\cap B$ is a cut vertex, then $B=N[v]\cap B$. In particular, $B$ is a clique.
\end{lemma}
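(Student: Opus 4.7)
The plan is to prove the contrapositive of the set equality, using property~2 of $d$-block graphs (namely that $N(u)\cap B$ is a clique for every cut vertex $u$ of $G$ and every block $B$).

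First I would suppose, for contradiction, that $B\neq N[v]\cap B$; then there exists a vertex $w\in B$ with $w\neq v$ and $w\notin N(v)$. Since $B$ is a block containing both $v$ and $w$, there is a path in $B$ from $v$ to $w$; take a shortest such path $v=w_0,w_1,\ldots,w_k=w$, where $k\ge 2$ because $w\notin N(v)$. The vertex $w_1$ is in $N(v)\cap B$, so by hypothesis $w_1$ is a cut vertex of $G$. By property~2 of the definition of $d$-block graph, $N(w_1)\cap B$ is a clique.

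Now both $v$ and $w_2$ lie in $N(w_1)\cap B$, so $v$ and $w_2$ are adjacent. If $k=2$, then $w_2=w$, contradicting $w\notin N(v)$; if $k\ge 3$, then $v,w_2,w_3,\ldots,w_k$ is a strictly shorter $v$--$w$ path in $B$, contradicting the choice of shortest path. Either way we reach a contradiction, so $B=N[v]\cap B$.

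For the ``in particular'' part, from $B=N[v]\cap B$ every vertex of $B$ other than $v$ lies in $N(v)\cap B$, which is a clique by property~2 applied to the cut vertex $v$; since $v$ is adjacent to all of these vertices, $B$ is a clique. The main (and essentially only) subtlety is justifying $k\ge 2$ and handling the two subcases $k=2$ and $k\ge 3$ carefully; no use of property~3 or of the $d$-claw-freeness of blocks is needed.
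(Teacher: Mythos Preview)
Your proof is correct and follows essentially the same approach as the paper: assume $B\setminus N[v]\neq\emptyset$, locate a neighbour $w_1\in N(v)\cap B$ adjacent to some vertex outside $N[v]$, and use that $N(w_1)\cap B$ is a clique (property~2) to force an edge from $v$ to that outside vertex. The only cosmetic difference is that the paper finds this edge by invoking the connectivity of $B-v$ directly, whereas you obtain it via a shortest $v$--$w$ path; your case split $k=2$ versus $k\ge 3$ is in fact unnecessary, since for any $k\ge 2$ minimality already gives $w_2\notin N(v)$, which is the contradiction.
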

\begin{proof}
Suppose the contrary that $B\setminus N[v]\not=\emptyset$.
Then, as $B-v$ is connected, there is an edge connecting a vertex $w\in N(v)\cap B$ and a vertex $w'\in B\setminus N[v]$.
Now, as $w$ is a cut vertex, $N(w)\cap B$ is a clique, implying that $vw'$ is an edge. This is a contradiction, hence $B=N[v]\cap B$.\qed
\end{proof}

\begin{lemma}\label{lem:one-nonendblock}
If $G$ has at most one non-endblock, then a solution for $G$ can be computed in polynomial time.
\end{lemma}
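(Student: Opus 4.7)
The plan is to case-split on the number of non-endblocks of $G$ (which by hypothesis is $0$ or~$1$) and, in each case, reduce the search for an optimal $d$-claw deletion set to polynomially many explicit candidates, leveraging Lemmas~\ref{lem:dblock-1}--\ref{lem:dblock-3}, Lemma~\ref{lem:dblock-2}, and the endvertex-avoidance observation preceding it.

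For the zero-non-endblock case, I would first observe that every block of $G$ is then an endblock, so in the block-cut tree every block-node has degree at most~$1$; a short connectivity argument then forces $G$ to have at most one cut vertex~$v$. If no cut vertex exists, $G$ is a single $d$-claw-free block and $\emptyset$ is optimal. If $v$ exists, every $d$-claw is centered at $v$ and draws one leaf from each of $d$ distinct blocks at $v$, so $\{v\}$ is a solution as soon as $v$ lies in at least $d$ blocks, and $\emptyset$ works otherwise.

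The main case is a single non-endblock $B$. I would first argue that every cut vertex of $G$ lies in $B$: a cut vertex outside $B$ would lie only in endblocks, so the path from it to $B$ in the block-cut tree would have to traverse a leaf block-node, which is impossible. The third $d$-block axiom then makes $V^*$ a clique inside $B$. Writing $k_i$ for the number of endblocks at $v_i\in V^*$ and setting $V^*_c := \{v_i \in V^*: k_i\ge d-1\}$ (which in this case is exactly the complement of the pseudo-endvertex cut vertices in $V^*$), my key structural claim will be that every $d$-claw of $G$ is centered at some $v_i\in V^*_c$ and uses at most one leaf per block at $v_i$. The $\le 1$-leaf-per-block assertion follows from $N(v_i)\cap B'$ being a clique in each block $B'$, so the $d$ pairwise non-adjacent leaves must be spread over $d$ distinct blocks. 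In particular, a leaf in any endblock at $v_i$ is necessarily an endvertex, while the leaf (if any) in $B$ lies in $N(v_i)\cap B$.

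Combining Lemma~\ref{lem:dblock-2} with the endvertex-avoidance observation, I may restrict the search to solutions $S\subseteq V^*_c$. Then $S = V^*_c$ is trivially a solution, giving an upper bound $|V^*_c|$. The matching lower bound is the core step: if some $v_i\in V^*_c$ is left out of $S$, then $k_i = d-1$ is forced (otherwise a $d$-claw at $v_i$ with all $d$ leaves in endblocks avoids $S$ entirely), and each $d$-claw at $v_i$ then has exactly one leaf in $B$, forcing $N(v_i)\cap B \subseteq S \subseteq V^*_c$. Because $V^*$ is a clique in $B$, two distinct omitted vertices $v_{i_1},v_{i_2}$ would give the contradiction $v_{i_2}\in N(v_{i_1})\cap B \subseteq S$. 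Hence the optimum equals $|V^*_c|$ or $|V^*_c|-1$, and the latter is achievable iff some $v_i\in V^*_c$ with $k_i = d-1$ satisfies $N(v_i)\cap B \subseteq V^*_c$, a condition verifiable in polynomial time. The hardest part, I expect, is cleanly invoking the endvertex-avoidance observation in Case~B, where endvertices may sit inside $B$ and be leaves of $d$-claws centered at several different cut vertices simultaneously; once the reduction $S\subseteq V^*_c$ is in hand, the clique structure of $V^*$ inside $B$ makes the single-omission analysis immediate.
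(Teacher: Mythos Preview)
Your proposal is correct and follows the same broad strategy as the paper (handle the zero-non-endblock case by inspecting the unique cut vertex; in the one-non-endblock case, reduce to solutions that avoid endvertices and pseudo-endvertices), but the organization of the one-non-endblock case is genuinely different. The paper partitions $B$ into $U',U,X,Y,Z$, argues that any admissible solution must contain $X\cup Y$, and then case-splits on whether $Z=\emptyset$; when $Z\ne\emptyset$ it invokes Lemma~\ref{lem:dblock-3} to conclude that $B$ is a clique and that the optimum is $|B|-1$. You instead work directly with the single set $V^*_c$ (which equals $X\cup Y\cup Z$), show every $d$-claw is centered there, and derive that the optimum is either $|V^*_c|$ or $|V^*_c|-1$, the latter exactly when some $v_i\in V^*_c$ has $k_i=d-1$ and $N(v_i)\cap B\subseteq V^*_c$. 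This avoids both the case split and Lemma~\ref{lem:dblock-3} entirely. In fact, the paper's appeal to Lemma~\ref{lem:dblock-3} is made after verifying only that $N(z)\cap B$ contains no \emph{pseudo}-endvertices, whereas the lemma's hypothesis requires that $N(z)\cap B$ contain no \emph{end}vertices either; your route sidesteps this issue. One point worth making explicit in your write-up: the reduction to $S\subseteq V^*_c$ needs an optimal solution avoiding endvertices and pseudo-endvertices \emph{simultaneously}. This is fine because the replacement in the proof of Lemma~\ref{lem:dblock-2} always adds a cut vertex lying in at least $d$ blocks (hence neither an endvertex nor a pseudo-endvertex), so starting from an endvertex-free optimum and iterating that replacement yields what you need.
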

 \begin{proof}
 If all blocks of $G$ are endblocks, then $G$ has at most one cut vertex. In this case, $G$ contains a $d$-claw if and only if $G$ has at least $d$ blocks. If $G$ has at least $d$ blocks, then all $d$-claws in $G$ are centered at the unique cut vertex $v$ and $\{v\}$ is the solution.

 So, let $B$ be the block of $G$ that is not an endblock. Write
 \begin{itemize}
 \item $U'$ for the set of endvertices in $B$,	
 \item $U$ for the set of pseudo-endvertices in $B$,
 \item $X$ for the set of vertices in $B$ that belong to $\ge d$ endblocks,
 \item $Y$ for the set of vertices $v$ in $B$ that belong to exactly $d-1$ endblocks and $N(v)\cap U\not=\emptyset$,
 \item $Z=B\setminus (U'\cup U\cup X\cup Y)$.
 \end{itemize}
 Observe that $X\cup Y\cup Z$ is a $d$-claw deletion set for $G$, hence the size of an solution is at most $|X|+|Y|+|Z|$.

 By Lemma~\ref{lem:dblock-2}, there is a solution not containing any pseudo-endvertex. Such a solution $S$ must contain $X\cup Y$ because every vertex in $X\cup Y$ is the center of a $d$-claw in which all leaves are pseudo-endvertices.
 Thus, if $Z=\emptyset$, then $S=X\cup Y$.

 So, let us assume that $Z\not=\emptyset$. Note that, as $d\ge 2$, every vertex $v\in Z$ is a cut vertex, and by definition of $Z$, $N(v)\cap B$ contains no pseudo-endvertices. Hence, by Lemma~\ref{lem:dblock-3}, $B$ is a clique.
 Thus, by definition of $Z$ again, $U=\emptyset$, and therefore $Y=\emptyset$.
 Now, observe that at most one vertex in $Z$ may not be contained in $S$: $|Z\setminus S|\le 1$. Indeed, if $z_1$ and~$z_2$ were two vertices in $Z$ not belonging to~$S$, then $z_1, z_2$ and $d-1$ endvertices adjacent to~$z_1$ would together induce a $d$-claw outside $S$. Thus, $|S|\ge |X|+|Z|-1=|B|-1$.
 On the other hand, note that, for any $v\in B$, $B-v$ is a $d$-claw deletion set for~$G$.
 Thus, for any $v\in B$, $S=B-v$ is a solution.
 \qed
 \end{proof}

We are now ready to show that \mindclawVD{d} is polynomially solvable on $d$-block graphs.
Our proof is inspired by the polynomial result for \CVD\ on block graphs in \cite[Theorem 10]{CaoKOY17}. A \emph{block-cut vertex tree} $T$ of a (connected) graph~$G$ has a \emph{node} for each block of~$G$ and for each cut vertex of~$G$. There is an edge $uv$ in $T$ if and only if~$u$ corresponds to a block containing the cut vertex~$v$ of~$G$. It is well known that the block-cut vertex tree of a graph can be constructed in linear time.

\begin{theorem}\label{thm:dblock}
\mindclawVD{d} is polynomially solvable on $d$-block graphs.
\end{theorem}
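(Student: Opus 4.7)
My plan is a dynamic programming on the block-cut vertex tree $T$ of $G$. I may assume $G$ is connected and has at least two non-endblocks, for otherwise Lemma~\ref{lem:one-nonendblock} already provides a polynomial algorithm. Root $T$ at an arbitrary internal cut-vertex node. The central structural observation is that, by the three defining conditions of a $d$-block graph, every $d$-claw of $G$ is centered at a cut vertex $v$, can pick at most one leaf from each block $B$ containing $v$ (since $N(v)\cap B$ is a clique), and lives entirely along the blocks incident to $v$. Hence whether a $d$-claw centered at $v$ survives in $G-S$ is determined solely by whether $v\in S$ and by how many of the blocks around $v$ still contain a non-deleted neighbor of $v$, so each subtree can be summarized by a constant-size signature at its root.

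Concretely, I would introduce values $F(v,\sigma,c)$ for each cut vertex $v$, where $\sigma\in\{\mathrm{in},\mathrm{out}\}$ records whether $v$ is in the partial solution and $c\in\{0,\ldots,d-1\}$ counts the child blocks of $v$ still contributing a surviving neighbor, together with values $G(B,\sigma,\delta)$ for each block $B$, where $\sigma$ is the status of $B$'s parent cut vertex $v_B$ and $\delta\in\{0,1\}$ says whether $B$ still contributes a surviving neighbor of $v_B$. For endblocks, the table $G(B,\cdot,\cdot)$ has a direct closed form using Lemma~\ref{lem:dblock-1}. At a cut vertex $u$ with child blocks $B_1,\ldots,B_k$, the $F$-recurrence aggregates the $G(B_i,\cdot,\cdot)$: the case $\sigma=\mathrm{in}$ is a simple sum, and $\sigma=\mathrm{out}$ is a small knapsack realizing each $c=\sum_i\delta_i$ in $O(kd)$ time.

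The main obstacle will be the $G$-recurrence at an internal block $B$: I must simultaneously decide which vertices of $B$ to delete and which state $(\sigma_i,c_i)$ to adopt for each child cut vertex $u_i\in B$, subject to $\ell_{u_i}+c_i<d$, where $\ell_{u_i}\in\{0,1\}$ indicates whether a neighbor of $u_i$ inside $B$ survives. The lever that makes this tractable is that $\ell_{u_i}=0$ forces all of $N(u_i)\cap B$, and hence (using that the cut vertices of $B$ induce a clique) every other cut vertex of $B$, to be deleted. Consequently at most one cut vertex of $B$ can be \emph{tight}, meaning $\sigma_i=\mathrm{out}$ with $c_i=d-1$. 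A case split over which cut vertex (if any) is tight, invoking Lemma~\ref{lem:dblock-3} when the tight vertex is $v_B$ itself, reduces $G(B,\sigma,\delta)$ to a small number of guided per-child minimizations, each taken over the $O(d)$ states of the corresponding $F(u_i,\cdot,\cdot)$. Aggregating over the tree then yields a polynomial-time algorithm, and the optimum is read off the root's DP table.
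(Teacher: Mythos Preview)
Your dynamic-programming plan is correct and yields a polynomial-time algorithm, but it takes a different route from the paper. The paper does not set up DP tables; instead it runs a bottom-up greedy reduction on the block--cut-vertex tree. Taking a lowest leaf with parent cut node $v$, it uses Lemma~\ref{lem:dblock-2} to discard $v$ when it has at most $d-2$ pendant endblocks (then $v$ is a pseudo-endvertex), to force $v$ into the solution when it has $\ge d$ pendant endblocks or $d-1$ together with a pseudo-endvertex neighbour in the parent block $B'$, and in the remaining case invokes Lemma~\ref{lem:dblock-3} to conclude that $B'$ is a clique and then puts $B'\setminus\{v\}$ into the solution, finishing the siblings via Lemma~\ref{lem:one-nonendblock}. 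This gives a short direct argument and a linear-time procedure with no state to carry.

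Your approach recovers the same structural leverage through the ``at most one tight cut vertex per block'' observation, which is correct and is precisely what makes the $G$-recurrence tractable. What you gain is a uniform mechanism that does not rely on Lemma~\ref{lem:dblock-2} at all; what you pay is a heavier write-up. Two small remarks on the plan itself: the appeal to Lemma~\ref{lem:dblock-3} in the $v_B$-tight case is not actually needed for your DP, since the $\delta=0$ branch is simply ``delete all of $N(v_B)\cap B$ and pay for it''; and when some child $u_j$ is tight you should make explicit that the cost of $G(B,\mathrm{in},\cdot)$ in that branch also includes the endvertices of $N(u_j)\cap B$, not only the aggregated $F(u_i,\cdot,\cdot)$ values. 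With those clarifications the argument goes through.
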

 \begin{proof}
 Let $T$ be the block-cut vertex tree of $G$. Nodes in $T$ corresponding to blocks in~$G$ are \emph{block nodes}. For a block node $u$ we use $B(u)$ to denote the corresponding block in~$G$. Nodes in $T$ corresponding to cut vertices in~$G$ are \emph{cut nodes} and are denoted by the same labels.

 Choose a node $r$ of $T$ and root $T$ at $r$. For a node $x\not= r$ of $T$, let $p(x)$ denote the parent of $x$ in $T$.
 Note that all leaves of $T$ are block nodes, the parent of a block node is a cut node, and the parent of a cut node is a block node. 

 Let $u$ be a leaf of $T$ on the lowest level and let $v=p(u)$ be the parent of $u$. Note that all children of $v$ correspond to endblocks in $G$, and if $r=p(v)$, then Lemma~\ref{lem:one-nonendblock} is applicable.
 So, assume $r\not=p(u)$ and write $u'=p(v)$, $v'=p(u')$.
 Note that by the choice of $u$, $B'=B(u')$ is the unique non-endblock containing vertices in $B'-v'$.

 If $v$ has at most $d-2$ children, then $v$ is a pseudo-endvertex in $G$. By Lemma~\ref{lem:dblock-2}, we remove $v$ and all children of $v$ from $T$.

 If $v$ has at least $d$ children, or $v$ has exactly $d-1$ children and some vertex in $N_G(v)\cap B'$ is a pseudo-endvertex, then put $v$ into the solution $S$ and remove $v$ and all children of $v$ from $T$. Correctness follows again from Lemma~\ref{lem:dblock-2}.

 It remains the case that $v$ have exactly $d-1$ children and no vertex in $N_G(v)\cap B'$ is a pseudo-endvertex.
 In particular, every vertex in $N_G(v)\cap B'$ is a cut vertex. Hence $B'$ is a clique by Lemma~\ref{lem:dblock-3}.
 Moreover, every vertex in $N_G(v)\cap (B'-v')$ belongs to at least $d-1$ endblocks.
 Now, note that all $d$-claws in $G$ containing $v$ contain a vertex in $B'-v$, and every solution for $G$ not containing pseudo-endvertices must contain at least $|B'|-1$ vertices in~$B'$.
 Thus, we put $B'-v$ into solution $S$ and remove the subtree rooted at~$v'$ from~$T$, and for each other child $u_i\not=u$ of~$v'$, we solve the problem on the subgraph induced by $B(u_i)$ 
 and its children.
 Note that, by the choice of $u$, all these subgraphs satisfy the condition of Lemma~\ref{lem:one-nonendblock}.
 Finally, it is not hard to check that all of these checks can be done in linear time by working with the block-cut vertex tree $T$ of $G$.
 \qed
 \end{proof}

\section{Conclusion}\label{sec:conclusion}
This paper considers the $d$-claw vertex deletion problem, \dclawVD{d}, which generalizes the famous \VC\ (that is \dclawVD{1}) and the \CVD\ (that is \dclawVD{2}) problems and goes on with the recent study~\cite{Bonomo-Braberman20} on claw vertex deletion problem, \dclawVD{3}. 
It is shown that \CVD\ remains $\NP$-complete on bipartite graphs of maximum degree~3 and, for each $d\ge 3$, \dclawVD{d} remains $\NP$-complete on bipartite graphs of degree~$d$, and thus a complexity dichotomy with respect to degree constraint. 
It is also shown that \dclawVD{d} remains $\NP$-complete when restricted to split graphs of diameter~2 and to bipartite graphs of diameter~3 (with only two vertices of unbounded degree) and polynomially solvable on bipartite graphs of diameter~2, and thus another dichotomy with respect to diameter.   
We also define a new class of graphs called $d$-block graphs which generalize the class of block graphs and show that \dclawVD{d} is solvable in linear time on $d$-block graphs, extending the algorithm for \CVD\ on block graphs in~\cite{0001KOY18} to \dclawVD{d}, and improving the algorithm for (unweighted) \dclawVD{3} on block graphs in~\cite{Bonomo-Braberman20} to $3$-block graphs.

We note that \VC\ and \CVD\ have been considered by a large number of papers in the context of approximation, exact and parameterized algorithms. As a question for further research we may ask: which known results in case $d=1,2$ can be extended for all $d\ge 3$?
We believe that the approaches in~\cite{BoralCKP16,Tsur21} for \CVD\ could be extensible to obtain a similar parameterized algorithm for \dclawVD{d} for all $d\ge 3$.
Finally, recall that \VC\ is polynomially solvable on chordal graphs and \CVD\ is polynomially solvable on split graphs, and that \dclawVD{d} is $\NP$-complete on chordal graphs for $d\ge 3$.
Thus it would be interesting to clear the complexity of \CVD\ on chordal graphs~\cite{CaoKOY17,0001KOY18}.




\bibliography{d-clawVD}

\end{document}